\title{Galactic Token Sliding\footnote{This work is supported by PHC Cedre project 2022 "PLR".}} 
\author{Valentin Bartier}
{Univ Lyon, CNRS, ENS de Lyon, Université Claude Bernard Lyon 1, LIP UMR5668, France}
{valentin.bartier@grenoble-inp.fr}{}{Supported by ANR project GrR (ANR-18-CE40-0032).}
\author{Nicolas Bousquet}
{Univ Lyon, CNRS, UCBL, INSA Lyon, LIRIS, UMR5205, France}
{nicolas.bousquet@univ-lyon1.fr}{https://orcid.org/0000-0003-0170-0503}{Supported by ANR project GrR (ANR-18-CE40-0032).}
\author{Amer~E.~Mouawad}
{Department of Computer Science, American University of Beirut, Lebanon \and 
Department of Informatics, University of Bremen, Germany}
{aa368@aub.edu.lb}{https://orcid.org/0000-0003-2481-4968}{Research supported by
the Alexander von Humboldt Foundation and partially supported by URB project ``A theory of change through the lens of reconfiguration''.}
\authorrunning{V.~Bartier, N.~Bousquet, A.E.~Mouawad} 
\keywords{reconfiguration, independent set, galactic reconfiguration, sparse graphs, token sliding, parameterized complexity} 
\newtheorem{question}{Question}
\numberwithin{theorem}{section}
\numberwithin{observation}{section}
\numberwithin{proposition}{section}
\numberwithin{lemma}{section}
\numberwithin{corollary}{section}
\numberwithin{definition}{section}
\numberwithin{claim}{section}
\numberwithin{remark}{section}
\numberwithin{question}{section}
\newcommand{\WO}{\textsf{W[1]}}
\newcommand{\PP}{\textsf{P}}
\newcommand{\PPSPACE}{\textsf{PSPACE}}
\newcommand{\NPP}{\textsf{NP}}
\newcommand{\FPT}{\textsf{FPT}}
\newcommand{\Oh}{\mathcal{O}}
\begin{document}

\nolinenumbers
\maketitle
\begin{abstract}
Given a graph $G$ and two independent sets $I_s$ and $I_t$ of size $k$, the \textsc{Independent Set Reconfiguration} problem asks whether there exists a sequence of independent sets (each of size $k$) $I_s = I_0, I_1, I_2, \ldots, I_\ell = I_t$ such that each independent set is obtained from the previous one using a so-called reconfiguration step. Viewing each independent set as a collection of $k$ tokens placed on the vertices of a graph $G$, the two most studied reconfiguration steps are token jumping and token sliding. In the \textsc{Token Jumping} variant of the problem, a single step allows a token to jump from one vertex to any other vertex in the graph. In the \textsc{Token Sliding} variant, a token is only allowed to slide from a vertex to one of its neighbors. Like the \textsc{Independent Set} problem, both of the aforementioned problems are known to be W[1]-hard on general graphs. A very fruitful line of research~\cite{goos_dynamic_1988, grohe_deciding_2017, telle_fpt_2019, pilipczuk_kernelization_2021} has showed that the \textsc{Independent Set} problem becomes fixed-parameter tractable when restricted to sparse graph classes, such as  planar, bounded treewidth, nowhere-dense, and all the way to biclique-free graphs. Over a series of papers, the same was shown to hold for the \textsc{Token Jumping} problem~\cite{ahn_fixed-parameter_2014, lokshtanov_reconfiguration_2018, siebertz_reconfiguration_2018, klasing_token_2017}. As for the \textsc{Token Sliding} problem, which is mentioned in most of these papers, almost nothing is known beyond the fact that the problem is polynomial-time solvable on trees \cite{demaine_linear-time_2015} and interval graphs~\cite{DBLP:conf/wg/BonamyB17}. 
We remedy this situation by introducing a new model for the reconfiguration of independent sets, which we call galactic reconfiguration. Using this new model, we show that (standard) \textsc{Token Sliding} is fixed-parameter tractable on graphs of bounded degree, planar graphs, and chordal graphs of bounded clique number. We believe that the galactic reconfiguration model is of independent interest and could potentially help in resolving the remaining open questions concerning the (parameterized)  complexity of \textsc{Token Sliding}. 
\end{abstract}

\section{Introduction}\label{sec-intro}

Many algorithmic questions can be represented as follows:
given the description of a system state and the description of a state we would ``prefer'' the system to be in, is it possible to transform the system
from its current state into the more desired one without ``breaking'' the system in the process? And if yes, how many steps are needed? Such problems naturally arise in the fields of mathematical puzzles, operational research, computational geometry~\cite{DBLP:journals/comgeo/LubiwP15}, bioinformatics,  and quantum computing~\cite{DBLP:conf/icalp/GharibianS15} for instance. These questions received a substantial amount of attention under the so-called \emph{combinatorial reconfiguration framework} in the last few years~\cite{DBLP:journals/tcs/BrewsterMMN16,H13,Wrochna15}.
We refer the reader to the surveys by van den Heuvel~\cite{H13}
and Nishimura~\cite{DBLP:journals/algorithms/Nishimura18} for more background on combinatorial reconfiguration.

\subparagraph*{Independent set reconfiguration.} In this work, we focus on the reconfiguration of independent sets.
Given a simple undirected graph $G$, a set of vertices $S \subseteq V(G)$ is an \emph{independent set} if the vertices
of $S$ are all pairwise non-adjacent. Finding an independent set of maximum cardinality, i.e., the {\sc Independent Set} problem,
is a fundamental problem in algorithmic graph theory and is known to be not only \NPP-hard, but also \WO-hard and
not approximable within $\Oh(n^{1-\epsilon})$, for any $\epsilon > 0$, unless $\PP = \NPP$~\cite{DBLP:journals/toc/Zuckerman07}.
Moreover, {\sc Independent Set} is known to remain \WO-hard on graphs excluding $C_4$ (the cycle on four vertices) as an induced subgraph~\cite{DBLP:conf/iwpec/BonnetBCTW18}.

We view an independent set as a collection of tokens placed on the vertices of a graph such that no two tokens are adjacent.
This gives rise to two natural adjacency relations between independent sets (or token configurations), also called \emph{reconfiguration steps}.
These two reconfiguration steps, in turn, give rise to two combinatorial reconfiguration problems.
In the {\sc Token Jumping} (TJ) problem, introduced by Kami\'{n}ski et al.~\cite{KMM12}, a single reconfiguration step consists of first removing a token on some vertex $u$ and then immediately adding it back on any other vertex $v$, as long as no two tokens become adjacent. The token is said to \emph{jump} from vertex $u$ to vertex $v$.
In the {\sc Token Sliding} (TS) problem, introduced by Hearn and Demaine~\cite{DBLP:journals/tcs/HearnD05}, two independent sets are adjacent if one can be obtained from the other by a token jump from vertex $u$ to vertex $v$ with the additional requirement of $uv$ being an edge of the graph. The token is then said to \emph{slide} from vertex $u$ to vertex $v$ along the edge $uv$. Note that, in both the TJ and TS problems, the size of independent sets is fixed.
Generally speaking, in the {\sc Token Jumping} and {\sc Token Sliding} problems,
we are given a graph $G$ and two independent sets $I_s$ and $I_t$ of $G$. The goal is to determine
whether there exists a sequence of reconfiguration steps -- a \emph{reconfiguration sequence} -- that
transforms $I_s$ into $I_t$ (where the reconfiguration step depends on the problem).

Both problems have been extensively studied, albeit under
different names~\cite{DBLP:conf/wg/BonamyB17,DBLP:conf/swat/BonsmaKW14,demaine_linear-time_2015,DBLP:conf/isaac/Fox-EpsteinHOU15,DBLP:conf/tamc/ItoKOSUY14,DBLP:journals/ieicet/ItoNZ16,KMM12,lokshtanov_reconfiguration_2018,lokshtanov_reconfiguration_2018}.
It is known that both problems are \PPSPACE-complete, even on restricted graph classes such as
graphs of bounded bandwidth (and hence pathwidth)~\cite{WROCHNA14} and planar graphs~\cite{DBLP:journals/tcs/HearnD05}. 

On the positive side, it is easy to prove that {\sc Token Jumping} can be decided in polynomial time on trees (and even on chordal graphs) since we simply have to iteratively move tokens on leaves (resp. vertices that only appear in the bag of a leaf in the clique tree) to transform an independent set into another. Unfortunately, for {\sc Token Sliding}, the problem becomes more complicated because of what we call the \emph{bottleneck effect}. Indeed, there might be a lot of empty leaves in the tree but there might be a bottleneck in the graph that prevents us from reaching these desirable vertices. For instance, if we imagine a star plus a long subdivided path attached to the center of the star. One cannot move any token from leaves of the star to the path if there are at least two leaves in the independent set. 
Even if we can overcome this issue for instance on trees~\cite{demaine_linear-time_2015} and on interval graphs~\cite{DBLP:conf/wg/BonamyB17}, the {\sc Token Sliding} problem remains much ``harder'' than the {\sc Token Jumping} problem. In split graphs for instance (which are chordal), {\sc Token Sliding} is PSPACE-complete~\cite{DBLP:conf/stacs/Belmonte0LMOS19}. Lokshtanov and Mouawad~\cite{DBLP:journals/talg/LokshtanovM19} showed that, in bipartite graphs, {\sc Token Jumping} is \NPP-complete
while {\sc Token Sliding} remains \PPSPACE-complete. 

In this paper we focus on the parameterized complexity of the {\sc Token Sliding} problem. While the complexity of {\sc Token Jumping} parameterized by the size of the independent set is quite well understood, the comprehension of the complexity of {\sc Token Sliding} remains evasive.

A problem $\Pi$ is \FPT\ (Fixed Parameterized Tractable) parameterized by $k$ if one can solve it in time $f(k) \cdot poly(n)$, for some computable function $f$. In other words, the combinatorial
explosion can be restricted to a parameter $k$. In the rest of the paper, our parameter $k$ will be the size of the independent set (i.e.\ number of tokens).
Both {\sc Token Jumping} and {\sc Token Sliding} are known to be \WO-hard\footnote{Informally, it means that they are very unlikely to admit an \FPT\ algorithm.}
parameterized by $k$ on general graphs~\cite{lokshtanov_reconfiguration_2018}.

On the positive side, Lokshtanov et al.~\cite{lokshtanov_reconfiguration_2018} showed that {\sc Token Jumping} is \FPT\ on bounded degree graphs. This result has been extended in a series of papers to planar graphs, nowhere-dense graphs, and finally strongly $K_{\ell,\ell}$-free graphs~\cite{IKO14,klasing_token_2017}, a graph
being strongly $K_{\ell,\ell}$-free if it does not contain any $K_{\ell,\ell}$ as a subgraph. 

For {\sc Token Sliding}, it was proven in~\cite{DBLP:journals/algorithmica/BartierBDLM21} that the problem is \WO-hard on bipartite graphs and $C_4$-free graphs (a similar result holds for {\sc Token Jumping} but based on  weaker assumptions for the bipartite case~\cite{DBLP:conf/iwpec/AgrawalAD21}).

However, almost no positive result is known for {\sc Token Sliding} even for incredibly simple cases like bounded degree graphs. Our main contributions are to develop two general tools for the design of parameterized algorithms for {\sc Token Sliding}, namely galactic reconfiguration and types. Galactic reconfiguration is a general simple tool that allows us to reduce instances. Using it, we will derive that {\sc Token Sliding} is \FPT\ on bounded degree graphs.
Our second tool, called types, will in particular permit to show that the deletion of a small subset of vertices leaves too many components, then one of them can be removed. 
Combining both tools with additional rules, we prove that {\sc Token Sliding} is \FPT\ on planar graphs and on chordal graphs of bounded clique number. We complement these results by proving that {\sc Token Sliding} is \WO-hard on split graphs.

\subparagraph{Galactic reconfiguration.} 
Our first result is the following:

\begin{theorem}\label{thm:ts-bounded-degree-fpt-intro}
{\sc Token Sliding} is \FPT\ on bounded degree graphs parameterized by $k$.
\end{theorem}

Much more than the result itself, we believe that our main contribution here is the general framework we developed for its proof, namely galactic reconfiguration.
Before explaining exactly what it is, let us explain the intuition behind it. As we already said, even if there are independent vertices which are far apart from the vertices of an independent set, we are not sure we can reach them because of the bottleneck effect. Nevertheless, it \emph{should} be possible to reduce a part the graph that does not contain any token as we can find irrelevant vertices when we have a large grid minor since, when we enter in the structure, we can basically move as we want in it (and then avoid to put tokens close to each other). 
However, proving that a structure can be  reduced in reconfiguration is usually very technical. To overcome this problem, we actually introduce a new type of vertices called \emph{black holes} which can swallow as many tokens of the independent set as we like. 
A \emph{galactic graph} is a graph that might contain black holes. A \emph{galactic independent set} is a set of vertices on which tokens lie, such that the set of non black-hole vertices holding tokens is an independent set and such that each black-hole might contain any number of tokens.

Our main result is to prove that if there exists a long shortest path that is at distance two from the initial and target independent sets, then we can replace it by a black hole (whose neighborhood is the union of the neighborhoods of the path).  This rule, together with other simple rules on galactic graphs, allows us to reduce the size of bounded-degree graphs until they reach a size of at most $f(k)$, for some computable function $f$, in polynomial time. Since a kernel ensures the existence of an \FPT\ algorithm, Theorem~\ref{thm:ts-bounded-degree-fpt-intro} holds.

\subparagraph*{Types and the multi-component reduction.}
In the rest of the paper, we combine galactic graphs with other techniques to prove that {\sc Token Sliding} is \FPT\ on several other graph classes.  We first prove the following:

\begin{theorem}\label{thm:ts-planar-intro}
{\sc Token Sliding} is \FPT\ on  planar graphs parameterized by $k$.
\end{theorem}

To prove Theorem~\ref{thm:ts-planar-intro}, we cannot simply use our previous long path construction since, in a planar graph, there might be a universal vertex which prevents the existence of a long shortest path. Note that the complexity of {\sc Token Sliding} is open on outerplanar graphs and it was not known to be \FPT\ prior to our work.

Our strategy consists in reducing to planar graphs of bounded degree and then applying Theorem~\ref{thm:ts-bounded-degree-fpt-intro}.
To do so, we provide some general tools to reduce graphs for {\sc Token Sliding}. Namely, we show that if we have a set $X$ of vertices such that $G - X$ contains too many connected components (in terms of $k$ and $|X|$) then at least one of them can be safely removed. 

The idea of the proof consists in defining the \emph{type} of a connected component of $G-X$. From a very high level perspective, the type of a path in a component of $G - X$ as the sequence of its neighborhoods in $X$ \footnote{The exact definition is actually more complicated.}. The type of a component $C$ is the union of the types of the paths starting on a vertex of $C$. We then show that if too many components of $G-X$ have the same type then one of them can be removed.

However, this component reduction is not enough since, in the case of a vertex universal to an outerplanar graph we do not have a lot of components when we remove the universal vertex. We prove that, we can also reduce a planar graph if (i) there are too many vertex-disjoint $(x,y)$-paths for some pair $x,y$ of vertices or (ii) if a vertex has too many neighbors on an induced path. Since one can prove that in an arbitrarily large planar graph with no long shortest path (i) or (ii) holds, it will imply Theorem~\ref{thm:ts-planar-intro}.
Note that our proof techniques can be easily adapted to prove that the problem is \FPT\ for any graph of bounded genus. 
We think that the notion of types may be crucial to derive \FPT\ algorithms on larger classes of graphs such as bounded treewidth graphs.

We finally provide another application of our method by proving that the following holds:

\begin{theorem}\label{thm:ts-chordal-bounded-clique-intro}
{\sc Token Sliding} is \FPT\ on chordal graphs of bounded clique number.
\end{theorem}

The proof of Theorem~\ref{thm:ts-chordal-bounded-clique-intro} consists in proving that, since there is a long path in the clique tree, we can either find a long shortest path (and we can reduce the graph using galactic graphs) or find a vertex $x$ in a large fraction of the bags of this path. In the second case, we show that we can again reduce the graph.
%
%
We complement this result by proving that it cannot be extended to split graphs, contrarily to {\sc Token Jumping}.

\begin{theorem}
{\sc Token Sliding} is \WO-hard on split graphs.
\end{theorem}

We show hardness via a reduction from the \textsc{Multicolored Independent Set} problem, known to be \WO-hard~\cite{DBLP:books/sp/CyganFKLMPPS15}. The crux of the reduction relies on the fact that we have a clique of unbounded size and hence we can use different subsets of the clique to encode vertex selection gadgets and non-edge selection gadgets. A summary of the current parameterized complexity status of {\sc Token Jumping} and {\sc Token Sliding} is depicted in Figure~\ref{fig:classes}.

\begin{figure}
    \centering
    \includegraphics[scale=0.7]{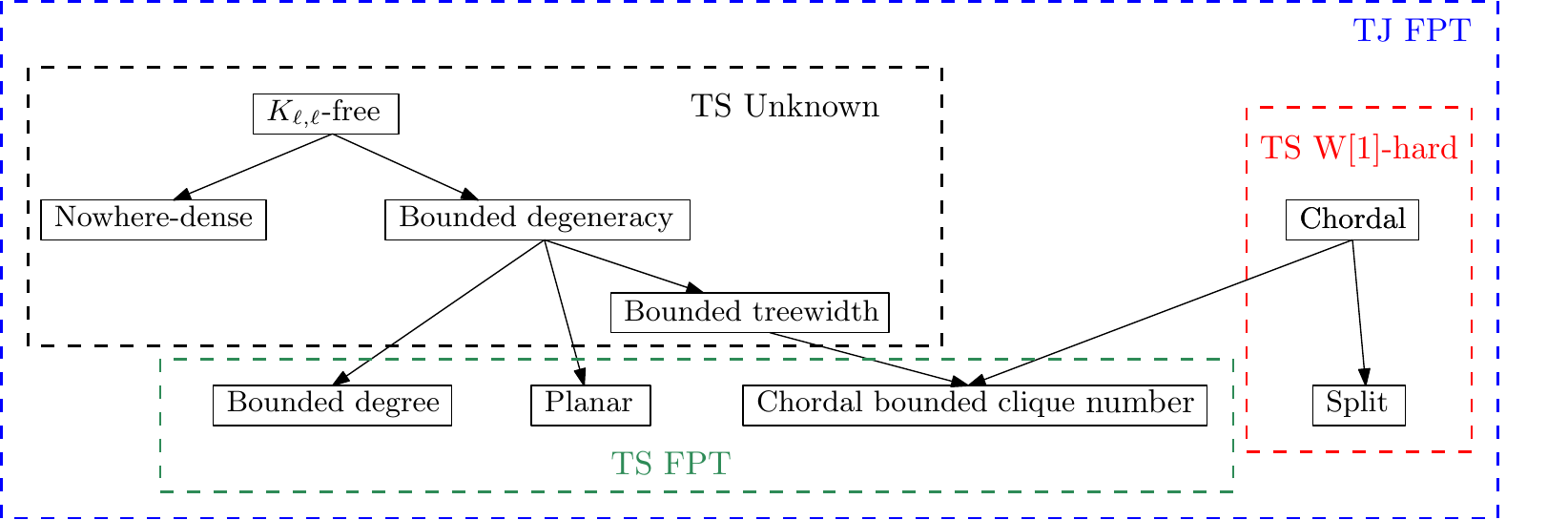}
    \caption{An overview of the parameterized complexity status of {\sc Token Jumping} and {\sc Token Sliding} on different graph classes.}
    \label{fig:classes}
\end{figure}

\subparagraph*{Further work.}
The first natural generalization of our result on chordal graphs of bounded clique size would be the following:

\begin{question}\label{q:treewidth}
Is {\sc Token Sliding} \FPT\ on bounded treewidth graphs? Or simpler, how about on bounded pathwidth graphs?
\end{question}
Recall that the problem is \PPSPACE-complete on graphs of constant bandwidth for a large enough constant that is not explicit in the proof~\cite{WROCHNA14}. 
Note that our galactic reconfiguration rules directly ensure that {\sc Token Sliding} is \FPT\ for graphs of bounded bandwidth. Our multi-component reduction ensures that the problem is \FPT\ for graphs of bounded treedepth. But for bounded pathwidth, the situation is unclear. There are good indications to think that solving the bounded pathwidth case is actually the hardest step to obtain an \FPT\ algorithm for graphs of bounded treewidth.
On the positive side, we simply know that the problem is polynomial time solvable on graphs of treewidth one (namely forests)~\cite{demaine_linear-time_2015} and the problem is open for graphs of treedwidth $2$. It is even open for outerplanar graphs:

\begin{question}
Is {\sc Token Sliding} polynomial-time solvable on outerplanar graphs? How about triangulated outerplanar graphs?
\end{question}

We did not succeed in answering Question~\ref{q:treewidth} but we think that the method we used for Theorem~\ref{thm:ts-chordal-bounded-clique-intro} is a good starting point but the analysis is much more involved.
In the case of {\sc Token Jumping} the problem is actually \FPT\ on strongly $K_{\ell,\ell}$-free graphs which contains planar graphs, bounded treewidth graphs, and many other classes. If the answer to Question~\ref{q:treewidth} is positive, the next step to achieve a similar statement for {\sc Token Sliding} would consist in looking at minor-free graphs and, more generally, nowhere dense graphs.

\subparagraph*{Organization of the paper.} In Section~\ref{sec-galactic}, we formally introduce galactic graphs and provide our main reduction rules concerning such graphs including the long short path reduction lemma. In Section~\ref{sec-components}, we introduce the notion of types and journeys and prove that if there are too many connected components in $G - X$ then at least one of them can be removed. In Section~\ref{sec-planar}, we prove that {\sc Token Sliding} is \FPT\ on planar graphs. We prove the same for chordal graphs of bounded clique number in Section~\ref{sec-chordal} and finally give our hardness reduction for split graphs in Section~\ref{sec-split}.

\section{Preliminaries}\label{sec-prelim}
We denote the set of natural numbers by $\mathbb{N}$.
For $n \in \mathbb{N}$ we let $[n] = \{1, 2, \dots, n\}$.

We assume that each graph $G$ is finite, simple, and undirected. We let $V(G)$ and $E(G)$ denote the vertex set and edge set of $G$, respectively.
The {\em open neighborhood} of a vertex $v$ is denoted by $N_G(v) = \{u \mid uv \in E(G)\}$ and the
{\em closed neighborhood} by $N_G[v] = N_G(v) \cup \{v\}$.
For a set of vertices $Q \subseteq V(G)$, we define $N_G(Q) = \{v \not\in Q \mid uv \in E(G), u \in Q\}$ and $N_G[Q] = N_G(Q) \cup Q$.
The subgraph of $G$ induced by $Q$ is denoted by $G[Q]$, where $G[Q]$ has vertex set
$Q$ and edge set $\{uv \in E(G) \mid u,v \in Q\}$. We let $G - Q = G[V(G) \setminus Q]$.

A {\em walk} of length $\ell$ from $v_0$ to $v_\ell$ in $G$ is a vertex sequence $v_0, \ldots, v_\ell$, such that
for all $i \in \{0, \ldots, \ell-1\}$, $v_iv_{i + 1} \in E(G)$.
It is a {\em path} if all vertices are distinct. It is a {\em cycle}
if $\ell \geq 3$, $v_0 = v_\ell$, and $v_0, \ldots, v_{\ell - 1}$ is a path.
A path from vertex $u$ to vertex $v$ is also called a {\em $uv$-path}.
For a pair of vertices $u$ and $v$ in $V(G)$, by $\textsf{dist}_G(u,v)$ we denote the {\em distance} or length of a shortest $uv$-path
in $G$ (measured in number of edges and set to $\infty$ if $u$ and $v$ belong to different connected components).
The {\em eccentricity} of a vertex $v \in V(G)$, $\textsf{ecc}(v)$, is equal to $\max_{u \in V(G)}(\textsf{dist}_G(u,v))$.
The {\em diameter} of $G$, $\textsf{diam}(G)$, is equal to $\max_{v \in V(G)}(\textsf{ecc}(v))$.

\section{Galactic graphs and galactic token sliding}\label{sec-galactic}

We say that a graph $G = (V, E)$ is a \emph{galactic graph} when $V(G)$ is partitioned into two sets $A(G)$ and $B(G)$ where the set $A(G) \subseteq V(G)$ is the set of vertices that we call \emph{planets} and the set $B(G) \subseteq V(G)$ is the set of vertices that we call \emph{black holes}.
For a given graph $G'$, we write $G' \prec G$ whenever $|A(G')| < |A(G)|$ or, in case of equality, $|B(G')| \leq |B(G)|$.
In the standard {\sc Token Sliding} problem, tokens are restricted to sliding along edges of a graph as long as the resulting sets remain independent.
This implies that no vertex can hold more than one token and no two tokens can ever become adjacent.
In a galactic graph, the rules of the game are slightly modified. When a token reaches a black hole (a special kind of vertex), the token is
\emph{absorbed} by the black hole. This implies that a black hole can hold more than one token, in fact it can hold all $k$ tokens.
Moreover, we allow tokens to be adjacent as long as one of the two vertices is a black hole (since black holes are assumed to make tokens ``disappear''). 
On the other hand, a black hole can also ``project'' any of the tokens it previously absorbed onto any vertex in its neighborhood (be it a planet or a black hole).
Of course, all such moves require that we remain an independent set in the galactic sense.  
We say that a set $I$ is a \emph{galactic independent set} of a galactic graph $G$ whenever $G[I \cap A]$ is independent. To fully specify a galactic independent
set $I$ of size $k$ containing more than one token on black holes, we use a weight function $\omega_I: V(G) \rightarrow \{0, \ldots, k\}$. 
Hence, $\omega_I(v) \leq 1$ whenever $v \in A(G)$, $\omega_I(v) \in \{0, \ldots, k\}$ whenever $v \in B(G)$,
and $\sum_{v \in V(G)}{\omega_I(v)} = k$.

We are now ready to define the {\sc Galactic Token Sliding} problem. We are given a galactic graph $G$, an integer $k$, and two galactic independent sets $I_s$ and $I_t$
such that $|I_s| = |I_t| = k \geq 2$ (when $k = 1$ the problem is trivial). 
The goal is to determine whether there exists a sequence of token slides
that will transform $I_s$ into $I_t$ such that each intermediate set remains a galactic independent set. As for the classical \textsc{Token Sliding} problem, given a galactic graph $G$ we can define a reconfiguration graph which we call the \emph{galactic reconfiguration graph} of $G$. It is the graph whose vertex set is the set of all galactic independent sets of $G$, two vertices being adjacent if their corresponding galactic independent sets differ by exactly one token slide.
We always assume the input graph $G$ to be a connected graph, since
we can deal with each component independently otherwise.
Furthermore, components without tokens can be safely deleted.
Given an instance $(G,k,I_s,I_t)$ of {\sc Galactic Token Sliding}, we say that $(G,k,I_s,I_t)$ can be \emph{reduced} if we
can find an instance $(G',k',I_s',I_t')$ which is positive (a yes-instance) if and only if $(G,k,I_s,I_t)$ is positive (a yes-instance) and $G' \prec G$.

Let $G$ be a galactic graph. A \emph{planetary component} is a maximal connected component of $G[A]$. A \emph{planetary path} $P$, or \emph{$A$-path},
composed only of vertices of $A$, is called \emph{$A$-geodesic} if, for every $x,y$ in $P$, $\textsf{dist}_{G[A]}(x,y) = \textsf{dist}_P(x,y)$.
We use the term \emph{$A$-distance} to denote the length of a shortest path between vertices $u,v \in A$ such that all vertices of the path are also in $A$.
Let us state a few reduction rules that allow us to safely reduce an instance $(G,k,I_s,I_t)$ of {\sc Galactic Token Sliding} to an instance $(G',k',I_s',I_t')$. 

\begin{itemize}
\item Reduction rule R1 (adjacent black holes rule): If two black holes $u$ and $v$ are adjacent, we contract them into a single black hole $w$. If there are tokens on $u$ or $v$, the merged black hole receives the union of all such tokens. In other words, $\omega_{I_s'}(w) = \omega_{I_s}(u) + \omega_{I_s}(v)$ and $\omega_{I_t'}(w) = \omega_{I_t}(u) + \omega_{I_t}(v)$. Loops and multi-edges are ignored.
 
\item Reduction rule R2 (dominated black hole rule): If there exists two black holes $u$ and $v$ such that $N(u) \subseteq N(v)$, $\omega_{I_s}(u) = 0$, and $\omega_{I_t}(u) = 0$, we delete $u$. 

\item Reduction rule R3 (absorption rule): If there exists $u,v$ such that $u$ is a black hole, $v \in N(u) \cap A$ ($v$ is a neighboring planet that could be in $I_s \cup I_t$) and  $|((I_s \cup I_t) \cap A ) \cap N[v]| \leq 1$, then we contract the edge $uv$. We say that $v$ is \emph{absorbed} by $u$. If $v \in I_s \cup I_t$ then we update the weights of $u$ accordingly.
 
\item Reduction rule R4 (twin planets rule): Let $u,v \in A(G)$ be two planet vertices that are \emph{twins} (true or false twins). That is, either $uv \not\in E(G)$ and $N(u) = N(v)$ or $uv \in E(G)$ and $N[u] = N[v]$. 
If $u \not\in I_s \cup I_t$ then delete $u$. 
If both $u$ and $v$ are in $I_s$ (resp. $I_t$) and at least one of them is not in $I_t$ (resp. $I_s$) then return a trivial no-instance. If both $u$ and $v$ are in $I_s$ as well as $I_t$ then delete $N[u] \cup N[v]$, decrease $k$ by two, and set $I_s' = I_s \setminus \{u,v\}$ and $I_t' = I_t \setminus \{u,v\}$. 

\item Reduction rule R5 (path reduction rule): Let $G$ be a galactic graph and $P$ be a $A$-geodesic path of length at least $5k$ such that  $(A \cap N[P]) \cap (I_s \cup I_t) = \emptyset$. Then, $P$ can be contracted into a black hole (we ignore loops and multi-edges). That is, we contract all edges in $P$ until one vertex remains. 
\end{itemize}

Note that all of the above rules allow us to reduce the size of the input graph.
In the remainder of this section, we prove a series of lemmas establishing the safety of the aforementioned rules. We apply the rules (or a subset of them) in order. That is, every time a rule applies, we start again from the first rule. 
This way, we assume that a rule is applied exhaustively before moving on to the next one.

\begin{lemma}\label{lem:easy_direction}
Let $(G = (A \cup B, E),k,I_s,I_t)$ be an instance of {\sc Galactic Token Sliding} and let $Q$ be any subset of $V(G)$.
Let $(G',k,I_s',I_t')$ be the instance obtained by identifying all the vertices of $Q$ into a single black hole vertex $q$ which is adjacent to every vertex in $N_G(Q) \setminus Q$ (loops and multi-edges are ignored).
We set $\omega_{I_s'}(q) = |Q \cap I_s|$ and $\omega_{I_t'}(q) = |Q \cap I_t|$.
If $(G,k,I_s,I_t)$ is a yes-instance then $(G',k,I_s',I_t')$ is a yes-instance.
\end{lemma}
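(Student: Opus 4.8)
The plan is to transport a reconfiguration sequence witnessing that $(G,k,I_s,I_t)$ is a yes-instance forward along the identification map. Let $\pi \colon V(G) \to V(G')$ send every vertex of $Q$ to the new black hole $q$ and fix every vertex outside $Q$. For a galactic independent set $I$ of $G$ with weight function $\omega_I$, define $\pi(I)$ to be the configuration on $G'$ with $\omega_{\pi(I)}(v) = \omega_I(v)$ for $v \notin Q$ and $\omega_{\pi(I)}(q) = \sum_{u \in Q} \omega_I(u)$; note that for $I \in \{I_s, I_t\}$ this is exactly the weight prescribed in the statement. The first step is to check that $\pi(I)$ is a galactic independent set of $G'$ for every galactic independent set $I$ of $G$: the total weight is unchanged, $\omega_{\pi(I)}(q) \in \{0,\dots,k\}$ since it is a sum of nonnegative weights totalling at most $k$, every planet of $G'$ carries at most one token because it was already a planet of $G$, and $G'[\pi(I) \cap A(G')]$ is edgeless because $A(G') = A(G) \setminus Q$, the induced subgraphs of $G'$ and of $G$ on this vertex set coincide, and $I \cap A(G)$ is independent in $G$.

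The second step is to verify that $\pi$ turns one token slide of the sequence into at most one token slide in $G'$. Suppose a step slides a token from $a$ to $b$ with $ab \in E(G)$. If $a, b \notin Q$, then $ab \in E(G')$ and the same slide is legal in $G'$: the source $a$ still carries a token, and the resulting configuration is $\pi$ of the next configuration, which is a galactic independent set by the first step. If exactly one endpoint lies in $Q$, say $a \in Q$ and $b \notin Q$, then $b \in N_G(Q) \setminus Q$, so $qb \in E(G')$, and the step becomes the black hole $q$ projecting a previously absorbed token onto $b$; this is legal since $\omega_{\pi(I)}(q) \geq \omega_I(a) \geq 1$ and, again, the result is a galactic independent set. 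The symmetric case with $b \in Q$ and $a \notin Q$ is $q$ absorbing a token. If $a, b \in Q$, the move collapses: $\pi$ sends the configurations before and after the slide to the same configuration of $G'$, so there is nothing to do.

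Combining the two steps, from a reconfiguration sequence $I_s = I_0, I_1, \dots, I_\ell = I_t$ in $G$ we obtain the sequence $\pi(I_0), \pi(I_1), \dots, \pi(I_\ell)$ in $G'$; deleting repeated consecutive configurations (which only arise from the last case above) yields a genuine reconfiguration sequence from $\pi(I_s) = I_s'$ to $\pi(I_t) = I_t'$, so $(G',k,I_s',I_t')$ is a yes-instance. This argument is essentially bookkeeping and I do not expect a real obstacle; the only points needing care are that a slide with exactly one endpoint in $Q$ must be recognized as a single legal galactic move (an absorption or a projection) rather than as two moves — which works precisely because black holes may be incident to tokens and may hold arbitrarily many of them — and that, when $Q$ may itself contain black holes, $\omega_{\pi(I)}(q)$ must be read as the total absorbed weight $\sum_{u \in Q}\omega_I(u)$ so that it agrees with the values $\omega_{I_s'}(q)$ and $\omega_{I_t'}(q)$ fixed in the lemma.
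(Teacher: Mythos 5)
Your argument is correct and matches the paper's own proof in essence: both push the reconfiguration sequence through the quotient map $\pi$, keeping slides with both endpoints outside $Q$, converting slides with exactly one endpoint in $Q$ into an absorption or projection at $q$, and discarding slides internal to $Q$. Your version is slightly more explicit (the upfront verification that $\pi$ preserves galactic independence, and the remark about reading $|Q \cap I_s|$ as total weight when $Q$ may contain black holes), but the route is the same.
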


\begin{proof}
Assume that there exists a transformation from $I_s$ to $I_t$ in $G$. Let $\langle I_0 = I_s, I_1, \ldots, I_\ell = I_t \rangle$ be such a transformation.
To obtain a transformation in $G'$ we simply ignore all token slides that are restricted to edges in $G[Q]$.
Formally, we delete any $I_i$ that is obtained from $I_{i-1}$ by sliding a token along an edge in $G[Q]$. For every $I_i$ obtained from $I_{i-1}$ by sliding a token from $N_G(Q) \setminus Q$ onto $Q$, we instead slide the token to $q$ and increase the weight of $q$ by one, i.e., $\omega_{I'_i}(q) = \omega_{I'_{i-1}}(q) + 1$.
We replace every $I_i$ obtained from $I_{i-1}$ by sliding a token from
$Q$ to $N_G(Q) \setminus Q$ by $I'_{i-1}$ and $I'_{i-1}$ where one token gets projected from $q$ onto its corresponding neighbor (and we decrease the weight of the black hole by one).
All other slides in the sequence are kept as is and we obtain the desired sequence $\langle I'_0 = I_s', I'_1, \ldots, I'_{\ell'} = I_t' \rangle$ from $I_s'$ to $I_t'$ in $G'$.
\end{proof}

\begin{lemma}\label{lem:rule_mergingbh}
Reduction rule R1, the adjacent black holes rule, is safe.
\end{lemma}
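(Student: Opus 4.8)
The plan is to prove the two directions of the equivalence separately; the ``size'' requirement of safety is immediate, since R1 contracts two adjacent black holes into one and therefore $A(G') = A(G)$ while $|B(G')| = |B(G)| - 1$, so $G' \prec G$. For the forward direction, I would simply invoke Lemma~\ref{lem:easy_direction} with $Q = \{u,v\}$: identifying $u$ and $v$ into a single black hole adjacent to $N_G(\{u,v\}) \setminus \{u,v\}$, with weights $\omega_{I_s'}(q) = \omega_{I_s}(u) + \omega_{I_s}(v)$ and $\omega_{I_t'}(q) = \omega_{I_t}(u) + \omega_{I_t}(v)$, is exactly the instance $(G',k,I_s',I_t')$ produced by R1 (contracting the edge $uv$ between two black holes yields a black hole whose neighborhood is $(N_G(u)\cup N_G(v))\setminus\{u,v\}$ once loops and multi-edges are discarded). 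Hence if $(G,k,I_s,I_t)$ is a yes-instance, so is $(G',k,I_s',I_t')$.

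For the backward direction I would simulate a transformation of $G'$ inside $G$. Call a galactic independent set $I$ of $G$ a \emph{lift} of a galactic independent set $I'$ of $G'$ if $I$ and $I'$ agree on all vertices outside $\{u,v,w\}$ and $\omega_I(u) + \omega_I(v) = \omega_{I'}(w)$; note $I_s$ is a lift of $I_s'$ and $I_t$ is a lift of $I_t'$. Given a transformation $\langle I_0' = I_s', \dots, I_\ell' = I_t'\rangle$ in $G'$, I would process its slides one by one, maintaining a lift in $G$: a slide avoiding $w$ is replayed verbatim (the rest of the graph and all planetary independence conditions are identical in $G$ and $G'$); a slide pushing a token from a vertex $x$ onto $w$ is replaced by a slide of that token from $x$ onto whichever of $u,v$ is adjacent to $x$ in $G$; and a slide projecting a token from $w$ onto a vertex $z$ is replaced by first, if necessary, sliding a token from the one of $u,v$ that holds one (some token is on $u$ or $v$ since $\omega_I(u)+\omega_I(v) = \omega_{I'}(w) \ge 1$ at that moment) across the edge $uv$ to the endpoint $p \in \{u,v\}$ with $zp \in E(G)$, and then projecting from $p$ onto $z$. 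After replaying the whole sequence we reach a lift of $I_t'$, from which a final sequence of free slides along the edge $uv$ adjusts the split between $u$ and $v$ to exactly $(\omega_{I_t}(u), \omega_{I_t}(v))$, producing $I_t$.

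The main thing to verify — and the only ``content'' beyond bookkeeping — is that each inserted move yields a galactic independent set: the auxiliary slides between $u$ and $v$ are always legal because both endpoints are black holes (adjacency of tokens is permitted there and black holes absorb tokens), and when a token is projected onto a planet $z$ the required condition involves only the planetary neighbors of $z$, which are unaffected by the contraction, so it holds in $G$ exactly when it held in $G'$. I do not expect a genuine obstacle here; the argument is a routine translation, and the care is entirely in checking these local independence conditions. Combining the two directions with $G' \prec G$ shows that R1 is safe.
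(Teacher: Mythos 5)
Your proof follows the same two-step strategy as the paper: the forward direction via Lemma~\ref{lem:easy_direction}, and the backward direction by replaying the $G'$-sequence in $G$ while maintaining the invariant that all weights away from $\{u,v,w\}$ agree and $\omega(u)+\omega(v)=\omega(w)$, sliding between $u$ and $v$ when a projection needs to leave from the "wrong" endpoint. This is essentially identical to the paper's argument. The one place you are slightly more careful is the closing step: after replaying the sequence you only have a lift of $I_t'$, i.e.\ some split of $\omega_{I_t'}(w)$ between $u$ and $v$, and you correctly add the final free slides along $uv$ to reach exactly $I_t$; the paper's write-up glosses over this (it only notes that $I_t$ \emph{is} such a lift, without observing that the replay need not land there). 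That is a small improvement in completeness, but not a different route.
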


\begin{proof}
Let $G$ be the initial galactic graph and $G'$ be the graph obtained after contracting the two adjacent black holes $u$ and $v$ into a single black hole $b$.
Let $I_s,I_t$ be the two galactic independent sets of $G$ and let $I_s',I_t'$ be their counterparts in $G'$.
If there is a transformation from $I_s$ to $I_t$ in $G$, then, by Lemma~\ref{lem:easy_direction}, there is a transformation from $I_s'$ to $I_t'$ in $G'$.

Assume now that there is a transformation from $I_s'$ to $I_t'$ in $G'$.
We adapt it into a sequence in $G$ maintaining the fact that, at each step, the weight of every vertex $z \ne u,v$ is the weight of $z$ at the same step of
the transformation in $G'$ and $\omega(u)+\omega(v)=\omega(b)$.
Note that $I_s$ (resp.\ $I_t$) satisfies these conditions with $I_s'$ (resp.\ $I_t'$).
We perform the same sequence in $G$ if possible, that is, if both vertices exist in $G$, we perform the slide (which is possible by the above condition).
Now, let us explain how we simulate the moves between $b$ and its neighbors.
If a token on $z \in N(b)$ slides to $b$, then in $G$ we simulate this move by sliding the corresponding token to $u$ or $v$, depending on which vertex $z$ is incident to (note that if $z \in N(u) \cap N(v)$, then the token on $z$ can be slid to $u$ or $v$).
If the move corresponds to a token leaving $b$ in $G'$ to a vertex $z$, then
if a vertex in $\{u,v\}$ incident to $z$ has positive weight, we slide a token from one of these vertices to $z$.
So we can assume, up to symmetry, that $u$ is incident to $z$ and $\omega(u)=0$.
Since $\omega(u)+\omega(v)=\omega(b)$ (at every step) and a token leaves $b$ in $G'$, we have $\omega(v)>0$. Hence, we can move a token from $v$ to $u$, and eventually move this token from $u$ to $z$.
\end{proof}

\begin{lemma}\label{lem:rule_domination}
Reduction rule R2, the dominated black hole rule, is safe.
\end{lemma}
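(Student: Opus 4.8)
The plan is to show that the R2-reduced instance $(G',k',I_s',I_t')$ is a yes-instance if and only if $(G,k,I_s,I_t)$ is, where $G'$ is obtained from $G$ by deleting the dominated black hole $u$. Since $\omega_{I_s}(u)=\omega_{I_t}(u)=0$, the two galactic independent sets are untouched by the deletion, so $I_s'=I_s$, $I_t'=I_t$, and $k'=k$. First I would record a preliminary remark that is used below: because the reduction rules are applied in order and R1 (the adjacent black holes rule) has already been applied exhaustively, $u$ and $v$ are non-adjacent, so $u\notin N(v)$ (and of course $v\notin N(v)$ since $G$ is simple).

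For the implication ``$(G',k,I_s',I_t')$ is a yes-instance $\Rightarrow$ $(G,k,I_s,I_t)$ is a yes-instance'', I would simply observe that $G'$ is the induced subgraph $G-u$, so $A(G')=A(G)$ and $E(G')\subseteq E(G)$. Hence any galactic independent set of $G'$, extended to $G$ by setting $\omega(u)=0$, is a galactic independent set of $G$ (the planet part is identical, so $G[I\cap A]$ is still independent), and any legal token slide in $G'$ is a legal token slide in $G$ with the same outcome. Thus a transformation from $I_s'$ to $I_t'$ in $G'$ is, verbatim, a transformation from $I_s$ to $I_t$ in $G$.

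For the converse ``$(G,k,I_s,I_t)$ is a yes-instance $\Rightarrow$ $(G',k,I_s',I_t')$ is a yes-instance'', the plan is to invoke Lemma~\ref{lem:easy_direction} with $Q=\{u,v\}$. That lemma produces a yes-instance on the graph in which $u$ and $v$ are identified into a single black hole $q$ adjacent to $(N(u)\cup N(v))\setminus\{u,v\}$, with weights $|Q\cap I_s|$ and $|Q\cap I_t|$ on $q$. I would then check that this graph is exactly $G'$ under the renaming $q\mapsto v$: since $N(u)\subseteq N(v)$ and $u,v\notin N(v)$, the neighborhood $(N(u)\cup N(v))\setminus\{u,v\}$ equals $N(v)$; and since $\omega_{I_s}(u)=\omega_{I_t}(u)=0$, the weights on $q$ coincide with the weights on $v$. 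This gives the claim. Finally, since $|A(G')|=|A(G)|$ and $|B(G')|=|B(G)|-1$, we have $G'\prec G$, so R2 genuinely reduces the instance.

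I do not expect a real obstacle here; the only point that needs attention is the neighborhood bookkeeping that identifies the output of Lemma~\ref{lem:easy_direction} with the R2-reduced instance, and this is precisely where all three hypotheses of the rule are used: $N(u)\subseteq N(v)$, the non-adjacency of $u$ and $v$ inherited from the exhaustive application of R1, and $\omega_{I_s}(u)=\omega_{I_t}(u)=0$. If one prefers to avoid invoking Lemma~\ref{lem:easy_direction}, the same implication can be obtained by mirroring the proof of Lemma~\ref{lem:rule_mergingbh}: given a transformation in $G$, reroute every slide into or out of $u$ to the corresponding slide into or out of $v$ (legal since every neighbor of $u$ is a neighbor of $v$), while maintaining the invariant $\omega_{G'}(v)=\omega_G(u)+\omega_G(v)$ and $\omega_{G'}(z)=\omega_G(z)$ for all $z\neq u,v$; the invariant holds at both endpoints because $u$ carries no token in $I_s$ or $I_t$, and it guarantees that $v$ always holds a token to perform each rerouted projection.
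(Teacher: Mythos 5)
Your proof is correct, but it takes a genuinely different route from the paper's. The paper proves the harder direction by a minimality argument: take a transformation in $G$ that minimizes the number of times a token enters $u$; if that number is positive, reroute the offending visit through $v$ (legal since $N(u)\subseteq N(v)$) to get a transformation with fewer visits, a contradiction, so $u$ is never used and the sequence carries over to $G-u$ verbatim. You instead reuse Lemma~\ref{lem:easy_direction} as a black box: contract $Q=\{u,v\}$ to a single black hole $q$, and observe that under the dominance hypothesis this contracted graph is isomorphic to $G-u$ (rename $q$ to $v$), with weights matching because $\omega_{I_s}(u)=\omega_{I_t}(u)=0$. Both arguments are sound. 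Your version is shorter and highlights that R2 is a special case of the general contraction lemma; the paper's is self-contained and does not need the bookkeeping identification. Two small remarks on your write-up: first, you do not need to appeal to the prior exhaustive application of R1 to deduce that $u$ and $v$ are non-adjacent — this is already forced by $N(u)\subseteq N(v)$ in a simple graph, since $uv\in E$ would give $v\in N(u)\subseteq N(v)$, impossible. Second, when matching weights you should read the quantity $|Q\cap I_s|$ in Lemma~\ref{lem:easy_direction} as the total weight $\sum_{w\in Q}\omega_{I_s}(w)$ carried by $Q$ (as is done in the proof of R1), not as a set cardinality; with that reading the identification is exact, and your conclusion stands.
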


\begin{proof}
Let us denote by $G$ the original galactic graph and $G'$ the graph where $u$ has been deleted.
Clearly, every reconfiguration sequence in $G'$ is a reconfiguration sequence in $G$.
We claim that every reconfiguration sequence in $G$ from $I_s$ to $I_t$ can be adapted into a reconfiguration sequence where the dominated black hole vertex $u$ never contains a token.
Consider a reconfiguration sequence from $I_s$ to $I_t$ that minimizes the number of times a token enters $u$, and suppose for a contradiction that at least one token enters $u$.
Let $s$ be the last step where a token enters $u$ and $s'$ be the next time a token is leaving from $u$ (note that both steps $s$ and $s'$ exist, since $\omega_{I_s}(u) = \omega_{I_t}(u) = 0$).
Instead of moving a token to $u$ at step $s$ we move it to $v$ and at step $s'$, we move the token from $v$ (which is possible since $N(u) \subseteq N(v)$).
It still provides a sequence from $I_s$ to $I_t$ and the number of times a token enters $u$ is reduced, a contradiction with our choice of sequence.
Hence, there exists sequence from $I_s$ to $I_t$ in $G$ such that $u$ never contains a token, and thus the rule is safe.
\end{proof}

In what follows we assume that $|I_s \cap N(b) \cap A| \leq 1$ and $|I_t \cap N(b) \cap A| \leq 1$ for each black hole $b$. In other words, there is at most one token of the initial and target independent sets in the neighborhood of a black hole. This is a safe assumption for the following reason. 
Suppose that $N(b) \cap A$ contains at least two vertices of $I_s$ or at least two vertices of $I_t$, for some black hole $b$. Let $I_s'$ (resp. $I_t'$) be the galactic independent set obtained by moving the token on $I_s$ (resp. $I_t$) to $b$. By definition of black holes this is a valid reconfiguration sequence, and thus there is a sequence transforming $I_s$ to $I_t$ if and only if there is one transforming $I_s'$ into $I_t'$, and $N(b) \cap A$ contains no vertex of the initial nor target independent set in its neighborhood.

\begin{lemma}\label{lem:nice_seq}
Assume that there exists a sequence $\langle I_0 = I_s, I_1, \ldots, I_\ell = I_t \rangle$ between two galactic independent sets $I_s$ and $I_t$ of a galactic graph $G$ such that $|I_s \cap N(b) \cap A| \leq 1$ and $|I_t \cap N(b) \cap A| \leq 1$, for every black hole $b$. Then this sequence can be modified such that for each black hole $b$ we have at most one token on $N(b) \cap A$ at all times, i.e., $|I_i \cap N(b) \cap A| \leq 1$ for all $0 \leq i \leq \ell$.
\end{lemma}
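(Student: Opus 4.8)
I would prove this by an exchange argument whose engine is the following simple fact about galactic graphs: a slide of a token from a planet $v\in N(b)\cap A$ into the black hole $b$, or out of $b$ onto an empty planet of $N(b)$ whose planet‑neighbours are token‑free, is always a legal move, and a token sitting inside $b$ constrains no other token. Thus, whenever two tokens simultaneously occupy $N(b)\cap A$, one of them can be temporarily absorbed by $b$ and re‑emitted later. I would process the black holes one at a time; fixing $b$ and writing $S=N(b)\cap A$, it suffices to turn the sequence into one with $|I_m\cap S|\le 1$ for every $m$ while never \emph{increasing} $|I_m\cap N(b')\cap A|$ for any other black hole $b'$ at any step, so that black holes already treated stay fixed.

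First I would locate the earliest step $i$ with $|I_i\cap S|\ge 2$. Since $|I_0\cap S|\le 1$ and one slide changes this count by at most one, $|I_{i-1}\cap S|=1$ — say token $t'$ is on $w\in S$ — and the slide $I_{i-1}\to I_i$ places a second token $t$ onto some $v\in S\setminus\{w\}$. Let $j\ge i$ be the first step at which $t'$ is moved in the original sequence, with the convention $j=\ell+1$ if $t'$ is never moved again. I would replace the sequence by: the original prefix up to $I_{i-1}$; the slide $t'\colon w\to b$ (legal as $w\in N(b)$); then the original slides of steps $i,\dots,j-1$, each of which involves a token other than $t'$ and stays legal since the only change to the configuration is that $t'$ has been removed from the planet $w$ (in particular no slide targets $w$, which was occupied in the original); then, if $j\le\ell$, the slide $t'\colon b\to w$ followed by the original slide of $t'$ off $w$, and if $j=\ell+1$, simply the slide $t'\colon b\to w$ at the very end, which correctly lands at $I_t$ because then $w$ is, by hypothesis, the unique token of $I_t$ on $S$; and finally the original suffix verbatim. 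Throughout the hidden window the configuration is the original one minus the single token on $w$, so $|I_m\cap N(b')\cap A|$ is never increased for any $b'$, and the count at $b$ at step $i$ has dropped from $2$ to $1$.

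\textbf{Main obstacle.} The delicate point is termination: re‑emitting $t'$ reproduces the original configuration $I_{j-1}$ (or $I_t$), which may itself violate the property, so the crude potential $\sum_{m,b'}\max(0,|I_m\cap N(b')\cap A|-1)$ need not decrease. I would remove this difficulty by choosing the re‑emission step more carefully: if $t'$'s next vertex $w'$ lies in $N(b)$, un‑park $t'$ directly onto $w'$ and skip the recreated configuration; otherwise un‑park $t'$ at the earliest step in $[i,j-1]$ at which the original configuration already satisfies the property for every black hole (from that step on the modified and original configurations coincide, so the remaining slides of $t'$ replay verbatim). When neither applies I would fall back to an induction whose measure is the number of legal configurations in the violation‑free prefix of the current sequence: by minimality of $i$ this prefix was $I_0,\dots,I_{i-1}$ before the modification and now extends at least through the re‑emission point, so it strictly grows, while the total length grows by only two per modification; a short bookkeeping argument then shows the process halts with $b$ fully repaired, after which one moves to the next black hole. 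I expect this termination bookkeeping to be the only genuinely technical part; the validity of every inserted slide is immediate from the black‑hole rules.
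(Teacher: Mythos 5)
Your plan is the same in spirit as the paper's proof: use the black hole $b$ to temporarily absorb a token that is creating a collision on $N(b)\cap A$, replay the intermediate slides with that token hidden, and re-emit it when it is next needed. You also correctly note that absorbing into $b$ never increases the count at any other black hole, so $b$ can be treated in isolation. The gap is exactly where you flag it, under ``Main obstacle,'' and the three fallbacks you offer do not close it: the first requires the next destination $w'$ to lie in $N(b)$; the second requires some step in $[i,j-1]$ whose original configuration already satisfies the property (which need not exist); and the third leaves the termination bookkeeping unresolved, which is fatal here because each hide/re-emit pass adds two configurations to the sequence, so ``the violation-free prefix grows by at least one'' does not bound anything.

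Here is a concrete instance where your single-token rewrite makes no progress whatsoever. Let $N(b)\cap A$ contain four pairwise non-adjacent planets and suppose the counts $|I_m\cap N(b)\cap A|$ read $1,2,3,4$ at steps $i-1,i,i+1,i+2$, with the waiting token $t'$ (on $w$) first leaving $w$ for a vertex outside $N(b)$ at step $j=i+3$. Hiding only $t'$ for the window $[i,j-1]$ and re-emitting it at $w$ produces, on the rewritten segment, the counts $1,0,1,2,3,4,3,\dots$. Neither the number of configurations with at least two tokens on $N(b)\cap A$ nor the excess $\sum_m\max\bigl(0,\,|I_m\cap N(b)\cap A|-1\bigr)$ has changed; the first violation has merely shifted two positions to the right while the sequence became two positions longer. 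Iterating the same rule from the new first violation repeats the phenomenon.

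The ingredient you are missing, and which the paper uses, is to absorb \emph{both} tokens involved in the first collision at step $i$: the token already sitting on $N(b)\cap A$ is slid into $b$ (the paper does this just after the step at which it arrived at its current vertex), \emph{and} the token that arrives on $N(b)\cap A$ at step $i$ is also slid into $b$ immediately after step $i$. Now the counts inside the hidden window drop by $2$ rather than by $1$, while each re-emission restores only one of the two tokens at a time. With this change, a short computation shows the excess $\sum_m\max\bigl(0,\,|I_m\cap N(b)\cap A|-1\bigr)$ strictly decreases: the drop by $2$ at step $i$ supplies the initial unit of progress, and any excess reintroduced at a re-emission step is paid for by the per-step reduction in the window, which one verifies using the fact that the count changes by at most one per slide (so any peak must be preceded by a staircase of intermediate values). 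With both tokens absorbed, your outline goes through.
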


\begin{proof}
Consider a reconfiguration sequence $\langle I_0 = I_s, I_1, \ldots, I_\ell = I_t \rangle$ from $I_s$ to $I_t$ and suppose that there exists a black hole $b$ such that, at some point in the sequence, $N(b) \cap A$ contains two tokens. 
Let $I_i$ be the first galactic independent set in the transformation with two tokens on $N(b) \cap A$. By the choice of $i$, $I_i$ is not the first nor the last independent set of the sequence.
In the previous galactic independent set $I_{i-1}$ in the sequence, there is a unique token $t$ on $N(b) \cap A$.
Let $v_t \in N(b) \cap A$ be the vertex containing $t$ in $I_{i-1}$.
Let $s$ be the step when $t$ enters $v_t$ and does not leave it until at least after $I_i$ (with $s$ possibly equal to $0$).
We add a move in the sequence just after $s$ consisting in sliding $t$ from $v_t$ to the black hole $b$. Similarly, we add a move just after $i$ consisting in sliding $t'$ (the token entering $N(b)$ at step $i$) from $v_{t'}$ to $b$. 
Note that, regardless of which tokens slides next, we can perform these slides by first projecting the corresponding token out of the black hole. 

Hence, we obtain a new reconfiguration sequence where the number of steps with at least two tokens on the neighborhood of $b$ has strictly decreased. We can repeat this argument as many times as needed on every black hole of $G$, up until we obtain a sequence from $I_s$ to $I_t$ where no black hole ever has two tokens in its neighborhood.
\end{proof}

\begin{lemma}\label{lem:rule_absorbe}
Reduction rule R3, the absorption rule, is safe.
\end{lemma}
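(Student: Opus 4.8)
The plan is to argue both directions of the equivalence between the instance $(G,k,I_s,I_t)$ and the instance $(G',k',I_s',I_t')$ obtained by contracting the edge $uv$, where $u$ is a black hole and $v \in N(u)\cap A$ is a planet with $|((I_s\cup I_t)\cap A)\cap N[v]| \le 1$. The easy direction is that a yes-instance of $(G,k,I_s,I_t)$ yields a yes-instance of $(G',k',I_s',I_t')$: this is exactly Lemma~\ref{lem:easy_direction} applied with $Q = \{u,v\}$, since contracting $uv$ into a single black hole whose neighborhood is $N_G(\{u,v\})\setminus\{u,v\}$ is precisely the operation described there, and the weights of the merged black hole are updated as prescribed. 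So the real work is the converse: given a reconfiguration sequence in $G'$, produce one in $G$.

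First I would set up the correspondence between configurations. Write $q$ for the merged black hole in $G'$. Given a galactic independent set $J'$ of $G'$, the natural lift to $G$ is to keep all tokens on vertices other than $q$ where they are, and to distribute the $\omega_{J'}(q)$ tokens on $q$ between $u$ and (possibly) $v$: by default put all of them on $u$, and only place one on $v$ when convenient. The key structural fact to exploit is the hypothesis $|((I_s\cup I_t)\cap A)\cap N[v]| \le 1$: this says that in $G$, during any reconfiguration sequence, the closed neighborhood of $v$ contains at most one token of $I_s\cup I_t$; combined with Lemma~\ref{lem:nice_seq} (applied to the black hole $u$, whose neighborhood contains $v$), we may assume the sequence in $G$ — and symmetrically the target behavior we want to achieve — never has two tokens in $N(u)\cap A$ simultaneously, in particular $v$ and its planet-neighbors are never simultaneously occupied. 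This is what makes $v$ behave like ``part of the black hole'': whenever a token would need to pass through $v$, the vertex $v$ is guaranteed to be free of conflicts.

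Next I would walk through the moves of the $G'$-sequence and translate each one. Moves not involving $q$ are copied verbatim (they remain valid since $v$'s neighbors that matter are controlled as above, and $v$ itself is only an issue if occupied — in which case we reroute through $u$). A move projecting a token from $q$ onto a neighbor $z$: in $G$, either $u$ or $v$ is adjacent to $z$ (since $N_{G'}(q) = N_G(\{u,v\})\setminus\{u,v\}$), so we first make sure a token sits on the right one of $\{u,v\}$ — moving a token from $u$ to $v$ first if needed, which is legal because $v$ is unoccupied and non-conflicting — and then slide it to $z$. A move sliding a token from a neighbor $z$ into $q$: symmetrically, slide it onto whichever of $u,v$ is adjacent to $z$, then (if it landed on $v$) slide it on to $u$ to keep $v$ empty for later. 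One must check that at every intermediate configuration the galactic-independence constraint holds in $G$; this reduces to verifying that $v$ is never forced to carry a token while a planet-neighbor of $v$ also carries one, which is exactly guaranteed by the degree hypothesis together with Lemma~\ref{lem:nice_seq}. Finally note $G' \prec G$ since contracting an edge removes a planet vertex (so $|A(G')| = |A(G)| - 1$), confirming this is a genuine reduction.

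\textbf{Main obstacle.} The delicate point is handling the case where the token that gets ``projected out of $q$'' in $G'$ lands, in the lifted $G$-sequence, on the vertex $v$ itself — we need to be sure this never violates independence, i.e.\ that no neighbor of $v$ is occupied at that moment, and also that we can always clear $v$ afterwards. This is where the hypothesis $|((I_s\cup I_t)\cap A)\cap N[v]|\le 1$ is used in full force, in combination with Lemma~\ref{lem:nice_seq}: the point is that one can first normalize the desired $G$-sequence so that at most one token ever sits in $N(u)\cap A \supseteq \{v\}\cup(N(v)\cap A)$, after which routing through $v$ is always safe. Getting this normalization to interact correctly with the token-by-token translation — and in particular ensuring the ``buffer'' token on $v$ is always introduced and removed at moments when $v$'s planet-neighbors are empty — is the part of the argument that needs care.
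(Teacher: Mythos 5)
Your proposal is correct and follows essentially the same approach as the paper's proof: the easy direction via Lemma~\ref{lem:easy_direction} with $Q=\{u,v\}$, and the converse via normalizing the $G'$-sequence with Lemma~\ref{lem:nice_seq} and then translating moves step by step, routing transiently through $v$ when the contracted black hole must interact with a planet that in $G$ is only adjacent to $v$. One small slip worth flagging: you describe Lemma~\ref{lem:nice_seq} as being ``applied to the black hole $u$'' --- the normalization must be applied to the merged black hole $q$ \emph{in $G'$} (the graph where the sequence actually lives; in $G$ you are building a sequence, not normalizing one), which is what guarantees at most one token in $N_{G'}(q)\cap A$ at any step and hence makes your transient visits to $v$ safe. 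The reasoning that follows is otherwise sound, and your strategy of keeping $v$ empty by default (always immediately sliding $v\to u$) is, if anything, a slightly cleaner bookkeeping discipline than the paper's, which instead clears $v$ on demand before a conflicting move.
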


\begin{proof}
Let $u$ be a black hole with a planet neighbor $v$ such that $|(I_s \cup I_t) \cap A \cap N[v]| \leq 1$. 
We denote by $G'$ the galactic graph where $u$ and $v$ are contracted into black hole $b$ and we let $I_s'$ and $I_t'$ be the galactic independent sets corresponding to $I_s$ and $I_t$. If there is a transformation from $I_s$ to $I_t$ in $G$, then, by Lemma~\ref{lem:easy_direction}, there is a transformation from $I_s'$ to $I_t'$ in $G'$.
Consider a transformation from $I_s'$ to $I_s'$ in $G'$.
We claim that the transformation in $G'$ can be changed into a transformation in $G$.
 
By Lemma~\ref{lem:nice_seq}, we can assume the existence of a sequence in $G'$ where the number of tokens in $N(b) \cap A$ is at most one throughout the sequence.
If there is a move in $G'$ between two vertices $x$ and $y$ where $x,y \notin N(b)$, then the same move can be performed in $G$.
If a token $t$ in the sequence in $G'$ has to move to $N(b)$ from a position distinct from $b$, then we first move the token $t'$ on $v$ (if such a token exists) to $u$ in $G$ (since $N_{G}(v) \setminus \{u\} \subseteq N_{G'}(b)$, leaving a token on $v$ in $G$ may result in two tokens being adjacent) before moving $t$.
So we are left with the case where a token has to enter or leave $b$.
If the token enters $b$ from a neighbor $w$ of $u$ (in $G$), then we simply move the token to $u$ (in $G$).
So we can assume that the token enters $b$ from a neighbor $w$ of $v$ (in $G$).
In that case, we can perform the slides $w$ to $v$ and then $v$ to $u$ to put the token on the black hole.
Such a transformation is possible since there is no other token on $N(v)$ (in $G'$, there is at most one token in $N(b)$ at all times).
Similarly, if a token has to go to some vertex $w$ of $N(v)$ from $b$, then there is currently no token on $N(v)$, and thus the sequence of moves $u$ to $v$ and $v$ to $w$ is possible, which completes the proof. 
\end{proof}

As immediate consequences, the following properties hold in an instance where reduction rules R1, R2, and R3 cannot be applied.

\begin{corollary}\label{cor:planet-neighbor}
Each (planet) neighbor of a black hole must have at least two vertices of $I_s \cup I_t$ in its planet neighborhood.
\end{corollary}

\begin{proof}
If a planet neighbor of a black hole has at most one vertex of $I_s \cup I_t$ in its planet neighborhood, then reduction rule R3 can be applied and we get a contradiction.
\end{proof}

\begin{corollary}\label{cor-planet-components}
Every planetary component must contain at least one token and therefore $G$ can have at most $k$ planetary components, when $k \geq 2$.
\end{corollary}

\begin{proof}
Assume that some planetary component $C$ contains zero tokens. Since we always assume the input graph to be connected (and none of the reduction rules disconnect the graph), all vertices of
the component will be absorbed by neighboring black holes (by reduction rule R3).
\end{proof}

\begin{lemma}\label{lem:rule_twins}
Reduction rule R4, the twin planets rule, is safe.
\end{lemma}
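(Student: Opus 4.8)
The plan is to verify separately the three operations that rule R4 may perform on an instance $(G,k,I_s,I_t)$, writing $(G',k',I_s',I_t')$ for the produced instance. A preliminary remark organizes the cases: if $u,v$ are \emph{true} twins then $uv\in E(G)$, so no (galactic) independent set can contain both; hence the second and third operations only ever trigger for \emph{false} twins, where $uv\notin E(G)$ and $N(u)=N(v)$, and then $N(u)=N(v)$ contains no token of $I_s$ whenever $u\in I_s$ (and similarly for $I_t$), simply because $u$ is then an occupied planet.

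\textbf{Operation 1 (some twin, say $u$, is not in $I_s\cup I_t$; delete $u$).} Here $G'=G-u$ and $k'=k$, so every galactic reconfiguration sequence in $G'$ is automatically one in $G$. For the converse I would pick a reconfiguration sequence in $G$ from $I_s$ to $I_t$ that minimizes the number of slides whose target is $u$, and argue this number must be $0$. Otherwise, take a maximal time window during which $u$ holds a token; the token enters $u$ from some neighbor $w$ and later leaves $u$ to some neighbor $w'$ (both slides exist because $u\notin I_s\cup I_t$). Since $u$ and $v$ have the same open (resp.\ closed) neighborhood, and an occupied planet $u$ blocks all planets of $N(u)$, no token can occupy $v$ during this window, so the window can be replayed verbatim with the token parked on $v$ instead of $u$ (legal precisely because $N(v)=N(u)$, hence the independence constraints felt by all the other tokens are unchanged), contradicting minimality. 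The one subtlety is that a token could be slid onto $v$ during the window \emph{from a black-hole neighbor} of $u$, if one exists; in that case one first consolidates both of the involved tokens into that black hole and re-projects them only afterwards, which again strictly lowers the number of slides targeting $u$.

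\textbf{Operations 2 and 3 (both $u,v\in I_s$; either some twin is missing from $I_t$, in which case return a no-instance, or $u,v\in I_s\cap I_t$, in which case delete $N[u]\cup N[v]$ and set $k'=k-2$; the cases with the roles of $I_s$ and $I_t$ exchanged are symmetric).} The engine for both is a \emph{freezing} claim: in every galactic independent set reachable from $I_s$, both $u$ and $v$ carry a token and $N(u)=N(v)$ is token-free. To establish it I first use Corollary~\ref{cor:planet-neighbor} to rule out black-hole neighbors of $u$ and $v$: a black hole in $N(u)=N(v)$ would force at least two vertices of $I_s\cup I_t$ into the planet neighborhood of $u$, which is impossible since $u\in I_s$ (and, for operation~3, also $u\in I_t$). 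With no black hole in $N(u)$, the claim follows by induction along the sequence: initially both twins are occupied; a token on $u$ can only slide to a planet of $N(u)$, each adjacent to the occupied $v$, which is illegal (symmetrically for $v$), and no other token can enter $N(u)=N(v)$ since every such planet is adjacent to the occupied $u$. For operation~2 the freezing claim directly shows $I_t$ (which lacks a token on $u$) is unreachable, so the no-instance answer is correct. For operation~3 the two frozen tokens are inert: a reconfiguration sequence in $G$ restricts to one in $G-(N[u]\cup N[v])$ with $k-2$ tokens (delete those two tokens; $N(u)=N(v)$ never carried one), and conversely any reconfiguration sequence in the reduced instance extends to $G$ by keeping the two tokens on $u$ and $v$; since $|A(G')|<|A(G)|$ we get $G'\prec G$, as demanded by the definition of a reduction.

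\textbf{Main obstacle.} I expect the real work to sit in Operation~1's backward direction, namely making the ``consolidate into a black hole'' repair fully rigorous when the sequence genuinely uses $u$ and $v$ at the same time and interleaves these moves with the rest of the transformation; once the no-black-hole-neighbor consequence of Corollary~\ref{cor:planet-neighbor} is in hand, Operations~2 and~3 reduce to a short induction.
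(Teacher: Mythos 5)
You take a noticeably more careful route than the paper, whose four-sentence proof simply asserts ``at most one token in $\{u,v\}$'' for operation~1 and ``these tokens can never slide out of $u$ or $v$'' for operations~2--3. You are right to worry about black holes: in the galactic model a black hole $b\in N(u)=N(v)$ can hold a token while $u$ does and can project onto $v$, so none of those assertions is automatic for false twins with a black-hole neighbor. Your minimization argument (operation~1) and explicit freezing claim (operations~2--3) are a sensible way to close that gap, although the ``consolidate into that black hole and re-project afterwards'' repair for operation~1 stays at the level of a hint --- you never specify when the absorbed tokens are re-projected or why the surrounding slides remain legal while $u$ and $v$ are temporarily both vacated.

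The concrete error is in how you rule out black-hole neighbors for operation~2. You argue that Corollary~\ref{cor:planet-neighbor} forces at least two vertices of $I_s\cup I_t$ into the planet neighborhood of $u$, ``which is impossible since $u\in I_s$.'' It is not impossible: $u\in I_s$ only gives $N(u)\cap A\cap I_s=\emptyset$, and nothing stops $N(u)\cap A\cap I_t$ from containing two planets when $u\notin I_t$ (exactly the case operation~2 permits), so the corollary can be satisfied with no contradiction. (For operation~3 your argument does go through, because there $u\in I_s\cap I_t$.) The missing ingredient is the standing normalization introduced just before Lemma~\ref{lem:nice_seq}: every black hole $b$ satisfies $|I_s\cap N(b)\cap A|\le 1$ and $|I_t\cap N(b)\cap A|\le 1$. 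A black hole adjacent to $u$ is also adjacent to $v$, which would put both $u,v$ in $I_s\cap N(b)\cap A$, a contradiction. This is not cosmetic: without that normalization, operation~2 is genuinely unsafe. Take $N(u)=N(v)=\{b,p,q\}$ with $b$ a black hole and $pq\notin E$, set $I_s=\{u,v\}$ and $I_t=\{p,q\}$; then $u\to b$, $v\to b$, $b\to p$, $b\to q$ reaches $I_t$, yet operation~2 would declare a no-instance. Your freezing claim as written is false in this example (and so is the paper's ``can never slide out'' assertion); both are rescued only by citing that preprocessing assumption, which your proposal does not do.
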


\begin{proof}
Let $u,v \in A(G)$ be two planet vertices such that either $uv \not\in E(G)$ and $N(u) = N(v)$ or $uv \in E(G)$ and $N[u] = N[v]$. 

Assume $u \not\in I_s \cup I_t$. 
Note that in any transformation from $I_s$ to $I_t$ we can have at most one token in $\{u,v\}$; once a token is on $u$ or $v$, the neighborhood of the other vertex cannot contain a token. Hence deleting $u$ is safe, as we can use $v$ instead. 

If both $u$ and $v$ are in $I_s$ (resp. $I_t$) and at least one of them is not in $I_t$ (resp. $I_s$) then since none of these tokens can ever slide out of $u$ or $v$ we can safely return a trivial no-instance. 

Finally, if both $u$ and $v$ are in $I_s$ as well as $I_t$ then, since those tokens can never slide, we can safely delete $N[u] \cup N[v]$, decrease $k$ by two, and set $I_s' = I_s \setminus \{u,v\}$ and $I_t' = I_t \setminus \{u,v\}$. 
\end{proof}

\begin{lemma}\label{lem:rule_mergingpaths}
Reduction rule R5, the path reduction rule, is safe.
\end{lemma}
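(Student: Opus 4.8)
The plan is as follows. One direction is immediate: if $(G,k,I_s,I_t)$ is a yes-instance, then Lemma~\ref{lem:easy_direction} applied with $Q = V(P)$ produces a transformation from $I_s'$ to $I_t'$ in $G'$, where $G'$ is obtained by identifying $V(P)$ into a single black hole $q$. Since $(A\cap N[P])\cap(I_s\cup I_t)=\emptyset$ we have in particular $V(P)\cap(I_s\cup I_t)=\emptyset$, so $\omega_{I_s'}(q)=\omega_{I_t'}(q)=0$ and $I_s',I_t'$ coincide with $I_s,I_t$ on $V(G)\setminus V(P)$, exactly the instance prescribed by rule R5.

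For the converse, the idea is to take a transformation $\mathcal{R}'=\langle I_0'=I_s',\dots,I_{\ell'}'=I_t'\rangle$ in $G'$ and \emph{unfold} the black hole $q$ back into the path $P$ to obtain a transformation $\mathcal{R}$ in $G$. I would maintain, all along $\mathcal{R}$, the invariant that the current configuration agrees with the corresponding $I_i'$ on $V(G)\setminus V(P)$, that the number of tokens lying on $V(P)$ equals $\omega_{I_i'}(q)$, and that no token on a vertex of $P$ is adjacent in $G$ to a vertex outside $P$ that holds a token. (Tokens on $P$ are automatically pairwise non-adjacent since $P$ consists of planets, and black-hole neighbours of $P$ are harmless, so only planet neighbours of $P$ holding tokens matter.) Given this, each step of $\mathcal{R}'$ is simulated as follows: a slide that does not touch $q$ is replayed verbatim; a slide sending a token from a neighbour $w$ of $q$ into $q$ is replaced by sliding that token from $w$ onto a vertex of $V(P)\cap N_G(w)$ and then pushing it far into $P$; and a slide projecting a token from $q$ onto a neighbour $w'$ of $q$ is replaced by first routing some token already on $P$ to a vertex of $V(P)\cap N_G(w')$ and then sliding it out onto $w'$. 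Since $\mathcal{R}'$ ends with $\omega_{I_{\ell'}'}(q)=0$, the simulation ends with $V(P)$ empty of tokens, hence at exactly $I_t$, and the base configuration is $I_s$ for the same reason.

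The technical heart is a rearrangement step performed \emph{before} replaying each move: using only slides internal to $P$, reshuffle the (at most $k$) tokens currently on $P$ so that they avoid a prescribed ``forbidden'' set $F\subseteq V(P)$ and, when the next move is an interface move, so that a designated token sits on the required interface vertex of $P$. This is where the hypothesis that $P$ has length at least $5k$, hence at least $5k+1$ vertices, is used, together with the $A$-geodesic property: any planet outside $P$ is adjacent to at most three \emph{consecutive} vertices of $P$, since a neighbour adjacent to two vertices of $P$ at $P$-distance at least $3$ would create a length-$2$ shortcut in $G[A]$, contradicting $A$-geodesicity. Hence the only vertices of $P$ the simulation ever needs to forbid are the at most $3(k-1)$ vertices adjacent to the tokens that lie neither on $P$ nor are moved by the current step, plus a bounded number of further vertices determined by the outside endpoint of the current step; a direct count shows that these forbidden vertices, together with a contiguous token-free stretch of about $2k$ vertices (enough to hold and reorder up to $k$ pairwise non-adjacent tokens), fit inside a path on $5k+1$ vertices, so the rearrangement is always possible by elementary token-sliding on a long path.

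I expect the main obstacle to be exactly this rearrangement lemma: stating it in a form that simultaneously supports the three simulation cases, and verifying that the constant $5k$ genuinely suffices --- in particular that one can always bring a designated token of $P$ onto a designated interface vertex while every other token of $P$ is shifted past the rest without ever landing on a forbidden vertex, and that at an ``entry'' move at least one of the (at most three) candidate vertices in $V(P)\cap N_G(w)$ can be made available. Once the rearrangement lemma is in place, the rest --- replaying non-interface slides, the running bookkeeping of $\omega(q)$, and the endpoint configurations --- is routine.
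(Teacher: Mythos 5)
Your proposal follows essentially the same plan as the paper's proof: Lemma~\ref{lem:easy_direction} for the forward direction, and for the converse an unfolding of the black hole $q$ back into $P$, replaying moves and reshuffling tokens on $P$, with the $A$-geodesic property ($\le 3$ consecutive neighbours of any planet on $P$) and the length $5k$ furnishing the counting. However, there is a genuine gap at the entry step that you flag but do not resolve. When a token $t$ slides from $w \in N_{G'}(q)\cap A$ into $q$ in $G'$, the simulation must land $t$ on some $p \in V(P)\cap N_G(w)$, a set of at most three consecutive vertices of $P$. But there may be another planet $w' \in N_{G'}(q)\cap A$ that also carries a token and is adjacent in $G$ to \emph{every} vertex of $V(P)\cap N_G(w)$: for instance $w$ adjacent to $\{p_i,p_{i+1}\}$ and $w'$ adjacent to $\{p_i,p_{i+1},p_{i+2}\}$ with $ww'\notin E$ is perfectly consistent with $A$-geodesicity and with both $w$, $w'$ holding tokens in a galactic independent set of $G'$. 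Reshuffling tokens inside $P$ cannot fix this, since the blocker $w'$ is not on $P$. So ``making at least one candidate available'' is not a matter of tuning the constant $5k$ in your rearrangement lemma — it can simply fail.

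The paper closes this gap by first invoking Lemma~\ref{lem:nice_seq}: the sequence in $G'$ is modified so that at every step at most one token occupies $N(b)\cap A$, for every black hole $b$. Its hypothesis is met for the new black hole $q$ because the R5 precondition $(A\cap N[P])\cap(I_s\cup I_t)=\emptyset$ gives $N_{G'}(q)\cap A\cap(I_s'\cup I_t')=\emptyset$, and for pre-existing black holes by the standing assumption the paper makes just before Lemma~\ref{lem:nice_seq}. After this normalization there is never a second token $w'$ in $N_{G'}(q)\cap A$ at an entry move, and your counting argument for rearranging tokens on $P$ is then enough. Your proposal omits this preprocessing entirely, so the converse direction as written does not go through.
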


\begin{proof}
Let $P$ be an $A$-geodesic path of length $5k$ in $G$ such that no vertex of $A \cap N[P]$ are in the initial or target independent sets, $I_s$ and $I_t$.
Let $G'$ be the graph obtained after contracting $P$ into a single black hole $b$ (recall that multi-edges and loops are ignored).
Let $I_s'$ and $I_t'$ be the galactic independent sets corresponding to $I_s$ and $I_t$.
If there is a transformation from $I_s$ to $I_t$ in $G$, then, by Lemma~\ref{lem:easy_direction}, there is a transformation from $I_s'$ to $I_t'$ in $G'$.
We now consider a transformation from $I_s'$ to $I_t'$ in $G'$ and show how to adapt it in $G$.

By Lemma~\ref{lem:nice_seq}, we can assume the existence of a sequence in $G'$ where the number of tokens in $N(b) \cap A$ is at most one
throughout the sequence, for any black hole $b$. If there is a slide from a vertex $u$ to a vertex $v$ in $G'$ such that $u,v \not\in N[b]$, then the same
slide can be applied in $G$. Whenever a token slides (in $G'$) to a vertex $u$ in $N(b)$, then we know that either $u$ later slides to $b$ or slides out of $N(b)$
(since we have at most one token in the neighborhood of black holes at all times).
If the token does not enter $b$, then the same slide can be applied in $G$.
If the token enters $b$, then we slide the token to a corresponding vertex in $P$ (in $G$). Following that slide, two things can happen. Either this token leaves $b$, in
which case we can easily adapt the sequence in $G$ by sliding along the path $P$. In the other case, more tokens can slide into $b$, which is the problematic case.
Note, however, that $P$ is of length $5k$ and is $A$-geodesic. Hence, every vertex $a \in A$ has at most three neighbors in $P$ and any independent set of
size at most $k$ in $A$ has at most $3k$ neighbors in $P$. This leaves $2k$ vertices on $P$ which we can use to hold as many as $k$ tokens that need to slide into $b$ (in $G'$).
In other words, whenever more than one token slides into $b$ in $G'$, we simulate this by sliding the tokens in $P$ onto the $2k$ vertices of $P$ that are free.
Since initially $(A \cap N[P]) \cap (I_s \cup I_t) = \emptyset$, every time a token enters into a vertex $v \in N(b)$ in $G'$, in $G$ we can rearrange the tokens on $P$
to guarantee that $N[v]$ contains no tokens. Finally, when a token leaves $b$ to some vertex $v \in N(b)$ (in $G'$), then we rearrange the tokens on $P$
so that a single token in $P$ becomes closest to $v$. This token can safely slide from $P$ to $v$. 
\end{proof}

\begin{corollary}\label{cor-diam}
Let $(G,k,I_s,I_t)$ be an instance of {\sc Galactic Token Sliding} where reduction rules R1, R3, and R5 (adjacent black holes rule, absorption rule, and the path reduction rule) have been exhaustively applied. Then, the graph $G$ has diameter at most $O(k^2)$. Moreover, any planetary component has diameter at most $O(k^2)$.
\end{corollary}

\begin{proof}
Suppose for a contradiction that there exists a geodesic path $P$ (not necessarily a planetary path) such that $|P| > 25k^2$. Since the path reduction rule and the adjacent black holes rules have been applied exhaustively, every maximal planetary subpath of $P$ has length at most $5k$ and no two consecutive vertices of $P$ are black holes. It follows that $P$ contains at least $4k+1$ disjoint maximal planetary subpath, each of which is adjacent to at least one black hole of $P$. Since $A$ is geodesic, every vertex of $I_s \cup I_t$ is adjacent to at most two planetary subpath of $P$. Since furthermore $|I_s \cup I_s| \leq 2k$, there exists a maximal planetary subpath $P'$ of $P$ such that $N[P'] \cap (I_s \cup I_t) = \emptyset$. Therefore there exists an edge $(b,u)$ of $E(G')$ such that $b \in P$ is a black hole and $u \in P'$ on which the absorption rule applies, a contradiction.

Assume by contradiction that the diameter of any planetary component is at least $5k(k + 1)$.
Let $P$ be a shortest $A$-path between two vertices at $A$-distance $5k(k + 1)$. Note that $P$ is $A$-geodesic.
Since the size of each independent set is at most $k$ and each planet can see at most three vertices on an
$A$-geodesic path, there is a subpath of $P$ of length at least $5k$ that does not have any vertex of the
independent set in its neighborhood. Hence, the path reduction rule can be applied, a contradiction.
\end{proof}

We now show how the galactic reconfiguration framework combined with the previous reduction rules immediately implies that \textsc{Token Sliding} is fixed-parameter tractable for parameter $k + \Delta(G)$. 

\begin{theorem}\label{thm:ts-bounded-degree-fpt}
\textsc{Token Sliding} is fixed-parameter tractable when parameterized by $k + \Delta(G)$. Moreover, the problem admits a bikernel\footnote{A kernel where the resulting instance is not an instance of the same problem.} with $k\Delta(G)^{O(k^2)} + (2k + 2k\Delta(G))\Delta(G)$ vertices. 
\end{theorem}

\begin{proof}
Let $(G, k, I_s, I_t)$ be an instance of \textsc{Token Sliding}. We first transform it to an instance of \textsc{Galactic Token Sliding} where all vertices are planetary vertices. We then apply all of the reduction rules R1 to R5 exhaustively. By a slight abuse of notation we let $(G, k, I_s, I_t)$ denote the irreducible instance of \textsc{Galactic Token Sliding}.

The total number of planetary components in $G$ is at most $k$ by Corollary~\ref{cor-planet-components} and the diameter of each such component is at most $O(k^2)$ by Corollary~\ref{cor-diam}. Hence the total number of planet vertices is at most $k\Delta(G)^{O(k^2)}$. 

To bound the total number of black holes, it suffices to note that no black hole can have a neighbor in $B \cup (A \setminus N[I_s \cup I_t])$. In other words, no black hole can be adjacent to another black hole (since the adjacent black holes reduction rule would apply) and no black hole can be adjacent to a planet without neighboring tokens (otherwise the absorption reduction rule would apply). Hence, combined with the fact that each black hole must have degree at least one, the total number of black holes is at most $(2k + 2k\Delta(G))\Delta(G)$.
\end{proof}

Theorem~\ref{thm:ts-bounded-degree-fpt} immediately implies  positive results for graphs of bounded bandwidth/bucketwidth. 
The bandwidth of a graph is the minimum over all assignments $f: V(G) \rightarrow \mathbb{N}$ of the quantity $\mathrm{max}_{uv \in E(G)} |f(u) - f(v)|$. A graph of bandwidth $b$ can easily
be seen to have pathwidth and treewidth at most $b$ and maximum degree at most $2b$. On the other hand, the family of stars $K_{1,n}$ gives an example with bounded pathwidth but unbounded bandwidth. A bucket arrangement of a graph is a partition of the vertex set into a sequence of buckets, such that the endpoints of any edge are either in one bucket or in two consecutive buckets. If a graph has a bucket arrangement where each bucket has at most $b$ vertices, then it has bandwidth at most $2b$ (arrange one bucket after another, with any ordering within one bucket).

\section{The multi-component reduction rule  (R6)}\label{sec-components}

\subsection{General idea}
The goal of this section is to show how we can reduce a graph when we have a small vertex separator with many components attached to it. We let $X$ be a subset of vertices and $H$ be an induced subgraph of $G - X$ (we assume $G$ is a non-galactic graph in this section). Let $I_{s}$ and $I_{t}$ be two independent sets which are disjoint from $H$ and consider a reconfiguration sequence from $I_s$ to $I_t$ in $G$. Let $v$ be a vertex of $H$ and assume that there is a token $t$ that is \emph{projected} on $v$ at some point of the reconfiguration sequence, meaning that the token $t$ is moved from a vertex of $X$ to $v$.
This token may stay a few steps on $v$, move to some other vertex $w$ of $H$, and so on until it eventually goes back to $X$. Let this sequence of vertices (allowing duplicate consecutive vertices) be denoted by $v_1=v,v_2,\ldots,v_r$. We call this sequence the \emph{journey} of $v$ (formal definitions given in the next section). 

Assume now that the number of connected components attached to $X$ is arbitrarily large. Our goal is to show that one of those components can be safely deleted, that is, without compromising the existence of a reconfiguration sequence if one already exists.
Suppose that we decide to delete the component $H$.
The transformation from $I_s$ to $I_t$ does not exist anymore since, in the reconfiguration sequence, the token $t$ was projected on $v \in V(H)$.
But we can ask the following question:
Is it possible to simulate the journey of $v$ in another connected component of $G - X$? 
In fact, if we are able to find a vertex $w$ in a connected component $H' \neq H$ of $G - X$ and a sequence $w_1=w,\ldots,w_r$ of vertices such that $w_iw_{i+1}$ is an edge for every $i$ and such that $N(w_i) \cap X = N(v_i) \cap X$, then we could project the token $t$ on $w$ instead of $v$ and perform this journey instead of the original journey of $t$.\footnote{We assume for simplicity in this outline that the component of $w$ does not contain tokens.} 
One possible issue is that the number $r$ of (distinct) vertices in the journey can be arbitrarily large, and thus the existence of $w$ and $H'$ is not guaranteed \textit{a priori}. 
This raises more questions:
What is really important in the sequence $v=v_1,v_2,\ldots,v_r$?
Why do we go from $v_1$ to $v_r$?
Why so many steps in the journey if $r$ is large?
The answers, however, are not necessarily unique.
We distinguish two cases.

First, suppose that in the reconfiguration sequence, the token $t$ was projected from $X$ to $v$, performed the journey without having to ``wait'' at any step (so no duplicate consecutive vertices in the journey), and then was moved to a vertex $x' \in X$. Then, the journey only needs to ``avoid'' the neighbors of the vertices in $X$ that contain a token.
Let us denote by $s_1$ the step where the token $t$ is projected on $v$ and by $s_2$ the last step of the journey (that is, the step where $t$ is one move/slide away from $X$).
Let $Y$ be the vertices of $X$ that contain a token between the steps $s_1$ and $s_2$. 
The journey of $t$ can then be summarized as follows: a vertex whose neighborhood in $X$ is equal to $N(v) \cap X$, a walk whose vertices all belong to $H$ and are only adjacent to subsets of $X \setminus Y$, and then a vertex whose neighborhood in $X$ is equal to $N(v_r) \cap X$. 
In particular, if we can find, in another connected component of $G-X$, a vertex $w$ for which such a journey (with respect to the neighborhood in $X$) also exists, then the we can project $t$ on $w$ instead of $v$.
Clearly, the obtained reconfiguration sequence would also be feasible (assuming again no other tokens in the component of $w$).

However, we might not be able to go ``directly'' from $v_1=v$ to $v_r$. Indeed, at some point in the sequence, there might be a vertex $v_{i_1}$ which is adjacent to a token in $X$.
This token will eventually move (since the initial journey  with $t$ in $H$ is valid), which will then allow the token $t$ to go further on the journey.
But then again, either we can reach the final vertex $v_r$ or the token $t$ will have to wait on another vertex $v_{i_2}$ for some token on $X$ to move, and so on (until the end of the journey). We say that there are \emph{conflicts} during the journey\footnote{Actually, there might exist another type of conflict we do not explain in this outline for simplicity.}.

So we can now "compress" the path as a path from $v_1$ to $v_{i_1}$, then from $v_{i_1}$ to $v_{i_2}$ (together with the neighborhood in $X$ of these paths), as we explained above.
However, we cannot yet claim that we have reduced the instance sufficiently since the number of conflicts is not known to be bounded (by a function of $k$ and/or the size of $X$). The main result of this section consists in proving that, if we consider a transformation from $I_{s}$ to $I_{t}$ that minimizes the number of moves ``related'' to $X$, then (almost) all the journeys have a ``controllable'' amount of (so-called important) conflicts. Actually, we prove that, in most of the connected components $H$ of $G - X$, we can assume that we have a "controllable" number of important conflicts for every journey on $H$ in a transformation that minimizes the number of token modifications involving $X$. The idea consists in proving that, if there are too many important conflicts during a journey of a token $t$, we could mimic the journey of $t$ on another component to reduce the number of token slides involving $X$.

Finally, we will only have to prove that if all the vertices have a controllable number of conflicts (and there are too many components), then we can safely delete a connected component of $G - X$.

\subsection{Journeys and conflicts}\label{sec-journeys-and-conflicts}

We denote a reconfiguration sequence from $I_s$ to $I_t$ by $\mathcal{R} = \langle I_0,I_1,\ldots,I_{\ell-1},I_\ell \rangle$. 
Let $X \subseteq V(G)$ and $H$ be a component in $G - X$ such that $I_s \cap V(H) = I_t \cap V(H) = \emptyset$. All along this section, we are assuming that tokens have labels just so we can keep track of them. For every token $t$, let $v_i(t)$, $0 \leq i \leq \ell$, denote the vertex on which token $t$ is at position $i$ in the reconfiguration sequence $\mathcal{R}$.

Whenever a token enters $H$ and leaves it, we say that the token makes a journey in $H$.  Let $I_i$ denote the first independent set in $\mathcal{R}$ where $v_i(t) \in V(H)$ and let $I_j$, $i \le j$, denote the first independent set after $I_i$ where $v_{j+1}(t) \not\in V(H)$. Then the \emph{journey $J$ of $t$ in $H$} is the sequence $(v_i(t), \ldots, v_j(t))$. The journey is a sequence of vertices (with multiplicity) from $H$ such that consecutive vertices are either the same or connected by an edge. We associate each journey $J$ with a walk $W$ in $H$. The \emph{walk $W$ of $t$} in $H$ is the journey of $t$ where duplicate consecutive vertices have been removed.

We say that a token is \emph{waiting at step $i$} if $v_i(t) = v_{i-1}(t)$; otherwise the token is \emph{active}. 
Given a journey $J$ and its associated walk $W$, we say that $w \in W$ is a \emph{waiting vertex} if there is a step where the vertex $w$ is a waiting vertex in the journey.
Otherwise $w$ is an \emph{active vertex} (with respect to the reconfiguration sequence). 
So we can now decompose the walk $W$ into waiting vertices and transition walks. That is, assuming the walk starts at $y$ and ends at $z$, we can write $W = y P_0 w_0 P_1 w_1 \ldots w_\ell P_\ell z$, where each $w_i$ is a waiting vertex and each $P_i$ is a \emph{transition walk} (consisting of the walk of active vertices between two consecutive waiting vertices). Note that the transition walks could be empty. 

We are interested in why a token $t$ might be waiting at some vertex $w$. In fact, we will only care about waiting vertices that we will call important waiting vertices.
Let $w_1,\ldots,w_\ell$ be the waiting vertices of the journey and, for every $i \le \ell$, let us denote by $[s_i,s'_i]$ the time interval of the reconfiguration sequence where the token $t$ is staying on the vertex $w_i$. Note that $s_i < s_i'$. Also note that, since when $t$ is active the other tokens are not moving; thus the position of any token different from $t$ is the same all along the interval $[s_i'+1,s_{i+1}]$ for every $i \le \ell-1$. 

Let $i < j \le \ell$ and let $w_i$ be a waiting vertex. 
We say that $w_j$ is the \emph{important waiting vertex after $w_i$} if $j>i$ and $j$ is the largest integer such that no vertex along the walk of token $t$ between $w_i$ (included) and $w_j$ (included) is adjacent to a token $t' \neq t$ or contains a token $t' \neq t$ between steps $s_i'$ and $s_j$ (note that the important waiting vertex after $w_i$ might be the last vertex of the sequence). Since token $t$ is active from $s_i' + 1$ to $s_{i+1}$ and is moving from $w_i$ to $w_{i+1}$ during that interval, the important waiting vertex after $w_i$ is well-defined and is at least $w_{i+1}$. Let $Q_{i,j}$ denote the walk in $H$  that the token $t$ follows to go from $w_i$ to $w_j$ (both $w_i$ and $w_j$ are included in $Q_{i,j}$). In other words, $Q_{i,j}=w_iP_{i+1}w_{i+1}\ldots P_jw_j$.
Now, note that since $w_j$ is the important waiting vertex after $w_i$ (i.e. we cannot replace $w_j$ by $w_{j+1}$), then we claim that the following holds:

\begin{claim}\label{clm:defconflict}
If $w_j$ is not the last vertex of the walk $W$, either 
\begin{enumerate}[(i)]
    \item\label{c1} there is a token on or adjacent to a vertex of $P_{j+1}w_{j+1}$ (the transition walk after $w_j$) at some step in $[s_i',s_j']$ or,
    \item\label{c2} there is a token on or adjacent to a vertex of $Q_{i,j} - w_j$ in the interval $[s_j,s_{j+1}]$.
\end{enumerate}
\end{claim}

Before explaining the claim indeed holds, let us define the notion of conflicts. Since we cannot replace $w_j$ by $w_{j+1}$, it means that, by definition, there is at least one step $s_q$ in $[s_i',s_{j+1}]$ where a token $t_q \ne t$ is adjacent to (or on a vertex) $v_q$ of $Q_{i,j+1}$. We call such a step a \emph{conflict}. We say that $(s_q,v_q,t_q)$ is the \emph{conflict triplet} associated to the conflict (for simplicity, we will mostly refer to a triplet as a conflict). 

We can now prove Claim~\ref{clm:defconflict}.
\begin{proof}
If the token is (on or) adjacent to $Q_{i,j}$, it cannot be in the interval $[s_i',s_j]$ by definition of $w_j$. So, if we are in case (ii), the conflict with $t'$ is after step $s_j$. And since $t$ is waiting on $w_j$, the conflict is indeed with a vertex in $V(Q_{i,j}) \setminus w_j$.
If we are in the case (i), the conflict can hold at any step between $s_i'$ and $s_j$. Note however that, after step $s_j'$, the token $t$ becomes active and goes from $w_j$ to $w_{j+1}$ in the sequence. So there is no token anymore in the neighborhood of $P_{j+1}w_{j+1}$ at step $s_j'$. In other words, if we have conflicts of type (i), there is a last such conflict in the interval $[s_i',s_j']$. 
\end{proof}

The conflicts of type (i) are called \emph{right conflicts} and the conflicts of type (ii) are called  \emph{left conflicts}. It might be possible that $w_j$ is the important waiting vertex because we have (several) left and right conflicts. We say that $w_j$ is a \emph{left important vertex} if there is at least one left conflict and a \emph{right important vertex} otherwise.
 
If $w_j$ is a left important vertex, we let the \emph{important conflict} $(s_\star,v_\star,t_\star)$ denote the first conflict associated with $Q_{i,j}$ between steps $s_j$ and $s_j'$, i.e., there exists no $s$ such that $s_j \leq s < s_\star  \leq s_j'$ such that there is a conflict at step $s$ with a token $t' \ne t$ which is either on $Q_{i,j}$ or incident to $Q_{i,j}$. 
Note that $v_\star$ cannot be a vertex of $Q_{i,j}$ since that would imply at least one more conflict before $s_\star$, hence $v_\star \in N(V(Q_{i,j}))$. 

If $w_j$ is a right important vertex, we let $(s_\star,v_\star,t_\star)$ denote the \emph{important conflict} associated with $P_{j+1}w_{j+1}$ between steps $s_i'$ and $s_j'$ as the last conflict associated to $P_{j+1}w_{j+1}$, i.e., there exists no $s$ such that $s_i' \leq s_\star < s \leq s_j'$ and there is a conflict $(s,v_s,t_s)$ such that $v_s$ in $P_{j+1}w_{j+1}$ or incident to $P_{j+1}w_{j+1}$. Note that $v_\star$ cannot be a vertex of $P_{j+1}w_{j+1}$ since that would imply at least one more conflict after $s_\star$, hence $v_\star \in N(V(P_{j+1}w_{j+1}))$. 
We use $\mathcal{C}(Q_{i,j + 1})[s_i', s_j']$ to denote all conflict triplets (left and right conflicts) associated with $Q_{i,j + 1}$ between steps $s_i'$ and $s_j'$.

To conclude this section, let us remark that the conflicts might be due to vertices of $H$ or vertices of $X$. In other words, for a conflict triple $(s, v, t) \in \mathcal{C}(Q_{i,j + 1})[s_i', s_j']$, $v$ is an $H$-conflict or an $X$-conflicts depending on whether $v$ is in $X$ or in $H$. In what follows we will only be interested in $X$-conflicts. The \emph{$X$-important waiting vertex after $w_i$} is $w_j$ where $j > i$ is the smallest integer such that 
$\mathcal{C}(Q_{i,j + 1})[s_i', s_j']$ contains at least one triplet 
$(s, v, t')$ where $t' \neq t$, $v \in X$, and $s_i' \leq s \leq s_j'$. Now given a journey we can define the sequence of $X$-important waiting vertices as the sequence $w'_1,\ldots,w'_r$ starting with vertex $w_1$ and such that $w'_{j+1}$ is the $X$-important waiting vertex after $w_j$. What will be important in the rest of the section is the length of this sequence, i.e., $r$. If this sequence is short (bounded by $f(k)$), then we can check if we can simulate a similar journey in other components efficiently. If the sequence is long, we will see that it implies that we can find a "better" transformation.

Since we will mostly be interested in how a journey interacts with $X$, we introduce the notion of the $X$-walk associated with journey $J$. The \emph{$X$-walk} is  written as $W^X = y P_0 w_0 P_1 w_1 \ldots w_\ell P_\ell z$, where each $w$ is an $X$-important waiting vertex and each $P$ is the walk that the token takes (note that this walk could have non-important waiting vertices) before reaching the next $X$-important waiting vertex. We call each $P$ in an $X$-walk an \emph{$X$-transition walk}.

\subsection{Types and signatures}
Let $X$ be a subset of vertices and $H$ be a component of $G - X$. An \emph{$\ell$-type} is defined as a sequence 
$I Y_1 W_1 Y_2 W_2 \ldots Y_\ell W_\ell Y_{\ell + 1} F$ such that for every $i$, $W_i$ is a (possibly empty) subset of $X$ and $Y_i$ is a (possibly empty) subset of $X$ or a special value $\perp$ (the meaning of $\perp$ will become clear later on). We call $I$ the \emph{initial value} and $F$ the \emph{final value} and they are both non-empty subsets of vertices of $X$. 
The $0$-type is defined as $I Y_0 F$ and we allow $I$ to be equal to $F$. 
We will often represent an $\ell$-type by $(I (Y_iW_i)_{i \le \ell}Y_{\ell + 1} F)$. Note that if $X$ is bounded, then the number of $\ell$-types is bounded. More precisely, we have:

\begin{remark}\label{rem:nr-types}
The number of $\ell$-types is at most $(2^{|X|}+1)^{2(\ell+2)}$.
\end{remark}

The neighborhood of a set of vertices $S \subseteq V(H)$ in $X$ is called the \emph{$X$-trace} of $S$. 
A journey $J$ is \emph{compatible} with an $\ell$-type $I Y_1 W_1 Y_2 W_2 \ldots Y_\ell W_\ell Y_{\ell+1} F$ if it is possible to partition the $X$-walk $W$ of $J$ into $W^X = y P_0 w_0 P_1 w_1 \ldots P_\ell w_\ell P_{\ell + 1} z$ such that:
\begin{itemize}
    \item the $X$-trace of each vertex $w_i$ is $W_i$, 
    \item for every walk $P_i$ which is not empty, the $X$-trace of $P_i$ is $Y_i$, i.e., $\cup_{x \in P_i} N(x) \cap X = Y_i$ (note that we can have $Y_i = \emptyset$), 
    \item for every empty walk $P_i$, we have $Y_i=\perp$, and
    \item the $X$-trace of $y$ is $I$ and the $X$-trace of $z$ is $F$. 
\end{itemize}
The \emph{$\ell$-signature} of a vertex $v \in V(H)$ (with respect to $X$) is the set of all $\ell'$-types with $\ell' \le \ell$ that can be \emph{simulated} by $v$ in $H$. In other words, for every $\ell$-type, there exists a walk $W$ starting at $v$ such that $W = v P_0 w_0 P_1 w_1 \ldots P_\ell w_\ell P_{\ell + 1} z$ is compatible with the $\ell$-type if and only if the $\ell$-type is in the signature. We say that two vertices are \emph{$\ell$-equivalent} if their $\ell$-signatures are the same.

\begin{lemma}\label{compute-ell-signature-fpt}
One can compute in $O^*((2^{|X|}+1)^{2(\ell+2)})$ the $\ell$-signature of a vertex $v$ in $H$.
\end{lemma}

\begin{proof}
In order to prove the lemma, we need the following simple claim.
Let $T_1,T_2,T_3$ be three types. Let us simply prove that, given a subset $A$ of vertices of $H$ all of type $T_1$, the set of vertices $B$ of type $T_3$ which can be reached via a walk whose union of types is exactly $T_2$ can be found in polynomial time. Indeed, we delete all the vertices whose type is not included in $T_2$. For each connected component $C$, if the union of the types of the vertices of the connected component are not equal to $T_2$ then a walk (whose union type is $T_2$) from a vertex of $A$ to a vertex of $B$ passing through $C$. So we can remove $C$. Now for every vertex $v$ of type $T_3$, if there is a component $C$ that is connected (or contains) a vertex of $A$ and is connected (or contains) $v$, then $v \in B$.

Now consider an $\ell$-type  $I Y_1 W_1 Y_2 W_2 \ldots Y_\ell W_\ell Y_{\ell+1} F$. We will apply the previous claim iteratively starting with $A=\{ v\}$ and setting (at every step $i$, $T_1=W_i$, $T_2=Y_{i+1}$ and $T_3=W_{i+1}$ (with $W_0=I$ and $W_{\ell+1}=F$). 
Since there are at most $(2^{|X|}+1)^{2(\ell+2)}$ $\ell$-types by Remark~\ref{rem:nr-types}, the conclusion follows.
\end{proof}

\subsection{$X$-reduced sequences and equivalent journeys}

Let $J$ be a journey and $W^X$ be the $X$-walk associated with it.
One can wonder what is really important when we consider a journey and its associated $X$-walk. Clearly, there is something special about  $X$-important waiting vertices and the conflict triples that happen before sliding to the next $X$-important waiting vertex. But what is really important in those $X$-transition walks? The only purpose of these walks (with respect to $X$) is basically to link the $i$-th $X$-important waiting vertex to the $(i+1)$-th $X$-important waiting vertex. But we cannot say that the walk is completely irrelevant since we cannot select any walk to connect these two vertices. 
Indeed, there might be some vertices of a walk whose neighborhood in $X$ actually contains a vertex on which there is a token (or there might be $H$-conflicts).
By definition of an $X$-transition walk, the neighborhood of the walk between two $X$-important waiting vertices in $X$ must be empty of tokens for the transition to happen. So (assuming no $H$-conflicts) any walk having the same $X$-trace would be ``equivalent'' to the considered walk.
In other words, if we have another walk from $w_i$ to $w_{i+1}$ which avoids the same subset of vertices in $X$ we can replace the current $X$-transition walk by it (again assuming no $H$-conflicts).

Let $J$ be a journey with exactly $r$ $X$-important waiting vertices (in its $X$-walk). Let $w_i$ and $P_i$ be respectively the $i$-th $X$-important waiting vertex and the $i$-th $X$-transition walk. Let $P_{r+1}$ be the final $X$-transition walk. Let us denote by $W_i$ the neighborhoods of $w_i$ in $X$, and by $Y_i$ the neighborhood of $P_i$ in $X$.
Let $I$ and $F$ be the neighborhoods of the initial and final vertices of the walk associated with $J$, respectively.
The \emph{type $T$ of the journey $J$} is $I (Y_i W_i)_{i \le r} Y_{r+1} F$. 

\begin{definition}
Two journeys are \emph{$X$-equivalent} whenever the following conditions are true.
\begin{itemize}
\item They have the same number of $X$-important waiting vertices;
\item The initial and final vertex of the $X$-walk have the same $X$-trace; 
\item For every $i$, the $X$-trace of the $i$th $X$-important waiting vertex is the same in both journeys; and 
\item For every $i$, the $X$-trace of the $i$th $X$-transition walk is the same in both journeys.
\end{itemize}
\end{definition}

We conclude this section by introducing the notion of $X$-reduced transformations. 
Let $I,J$ be two independent sets and $X$ be a subset of vertices of $G$. A slide of a token \emph{is related to $X$} if the token is moving from or to a vertex in $X$ (possibly from some other vertex in $X$). We call such a move an \emph{$X$-move}. 

\begin{definition}
A transformation $\mathcal{R}$ from $I$ to $J$ is \emph{$X$-reduced} if the number of $X$-moves is minimized and, among the transformations that minimize the number of $X$-moves, $\mathcal{R}$ minimizes the total number of moves. 
\end{definition}

\subsection{The multi-component reduction}

Let $H$ be a connected component of $G - X$. The $\ell$-signature of $H$ is the union of the $\ell$-signatures of the vertices in $H$. Let $\mathcal{H}$ be a subset of connected components of $G - X$. We say that $H \in \mathcal{H}$ is \emph{$\ell$-dangerous for $\mathcal{H}$} if there is a $\ell$-type in the $\ell$-signature of $H$ that appears in at most $\ell$ connected components of $\mathcal{H}$. Otherwise we say that $H$ is \emph{$\ell$-safe}. If there are no $\ell$-dangerous components, we say that $\mathcal{H}$ is $\ell$-safe.

\begin{lemma}\label{lem:safecomp}
Let $\ell = 5|X|k$. If there are more than $\ell(2^{|X|}+1)^{2(\ell+2)} + 2k + 1$ components in $G - X$, then there exists a collection of at least $2k+1$ components that are $\ell$-safe which can be found in $f(k,|X|) \cdot n^{O(1)}$-time, for some computable function $f$.
\end{lemma}

\begin{proof}
Let $\mathcal{H}_i$ be a subset of connected components of $G - X$.
We say that $H \in \mathcal{H}_i$ is \emph{$\ell$-dangerous at depth $i$} if there is an $\ell$-type in the $\ell$-signature of $H$ that appears in at most $5|X|k$ connected components of $\mathcal{H}_i$. 
Let $\mathcal{H}_{i+1}$ be the components of $\mathcal{H}_i$ which are not $\ell$-dangerous at depth $i$. 
If $\mathcal{H}_i=\mathcal{H}_{i+1}$, then all the components in $\mathcal{H}_i$ are $\ell$-safe.

So we need to prove that if the set of components $\mathcal{H}_0$ of $G - X$ is large enough, then there exists $i$ such that $|\mathcal{H}_i| \geq 2k+1$ and all the components in $\mathcal{H}_i$ are $\ell$-safe. By Remark~\ref{rem:nr-types}, there are at most $(2^{|X|}+1)^{2(\ell+2)}$ $\ell$-types.
If a component $H$ is deleted at some step $j$ (that is, $H$ belongs to $\mathcal H_{j-1}$ but not to $\mathcal H_j$), then this is because there exists some $\ell$-type $T$ that appears at most $5|X|k$ times in $H_{j-1}$ and belongs to the $\ell$-signature of $H$.
Note that all the components containing $T$ in their $\ell$-signatures are removed together at step $j$, and there are at most $5|X|k$ of them.
Hence, after at most $(2^{|X|}+1)^{2(\ell+2)}$ steps before we obtain a step $i$ such that $\mathcal{H}_i = \mathcal{H}_{i+1}$.
Therefore, less than $5|X|k(2^{|X|}+1)^{2(\ell+2)}$ components have been deleted between $\mathcal{H}_0$ and $\mathcal{H}_i$, and thus $\mathcal{H}_i$ contains at least $2k+1$ components. Furthermore, for any $t \geq 0$,  $\mathcal{H}_{t+1}$ can be computed from $\mathcal{H}_t$ in the claimed running time by Lemma~\ref{compute-ell-signature-fpt}, which concludes the proof.
\end{proof}

\begin{lemma}\label{lemma-bound-conflicts}
Let $I_{s},I_{t}$ be two independent sets and $X$ be a subset of $V(G)$. Let $\mathcal{R}$ be an $X$-reduced transformation from $I_{s}$ to $I_{t}$.  Assume that there exists a subset $\mathcal{H}$ of at least $2k+1$ connected components of $G - X$ that is $(5|X|k)$-safe. Then, for every $C \in \mathcal{H}$, any journey on the component $C$ has at most $5|X|k-1$ $X$-important waiting vertices in its associated $X$-walk.
\end{lemma}

\begin{proof}
Assume for a contradiction that there exists a safe component $C \in \mathcal{H}$ and a journey $J$ of some token $t$ in $C$ that has at least $5|X|k$ $X$-important waiting vertices. Amongst all such journeys, select the one that reaches first its $(5|X|k)$-th $X$-important waiting vertex.
Let us denote by $I (Y_i W_i)_{i \le 5k} Y_{5k + 1}$ the type of the journey $J$ that we truncate after $Y_{5k + 1}$. And, let us denote by $v (P_i w_i)_{i \le 5k} P_{5k + 1}$ the partition of the walk into $X$-important waiting vertices and $X$-transition walks (we assume the walk starts at vertex $v \in V(C)$). 

For each $X$-important waiting vertex $w_i$, let $(q_i,x_i,t_i)$ be the important conflict associated to it.
Since there are at most $k$ labels of tokens and $|X|$ vertices in $X$, there exists a vertex $x \in X$ and a token with label $t'$ such that there exists at least $5$ waiting vertices such that the important conflict is of the form $(q,x,t')$ for some $q$. In other words, there exists $P_{i_1}, \ldots, P_{i_5}$ such that for each $P_i$ we have a triplet $(q, x, t')$ (recall that $q$ denotes the step in the reconfiguration sequence). Let us denote by $s_1,\ldots,s_5$ the steps of those  conflicts whose token label is $t'$ and whose vertex in $X$ is $x$ and they appear as the important conflict triple in five different $X$-transition walks.

Let us denote by $\mathcal{C}$ the connected components of $G - X$ that contain a token at step $s_1$ and $s_5$. Since $\mathcal{H}$ contains at least $2k+1$ components and there are $k$ vertices in the independent set, there is a component $\mathcal{C}'$ in $\mathcal{H}$ which is not in $\mathcal{C}$.
Now we show we can use $C'$ to reduce the number of $X$-moves, which leads to a contradiction since we assumed that the transformation is $X$-reduced.

Since there is a vertex $w \in V(C')$ which is $5|X|k$-equivalent to $v$ (by the definition of safe component), there is, in particular a walk $W'$ starting from $w \in V(C')$ which has type $$N_X(v) (N_X(P_i) N_X(w_i))_{i \le 5k} N_X(P_{5k + 1}).$$ We denote this walk $W'$ by $w (P'_i w'_i)_{i \le 5k} P'_{5k + 1}$.

Using this walk, we can mimic the behavior of $W$ in particular between $s_1$ and $s_5$. Now the idea of the proof consists in projecting on $W'$ the token $t'$ between $s_1$ and $s_5$ which, in turn, will permit to decrease the number of $X$-moves since $t'$ will stay on $C'$ all along. Consider the step $s_1$. If the $X$-important waiting vertex $w_{i_1}$ is a left (resp. right) important waiting vertex, then the token $t'$ is moving in (resp. out) of from a vertex $x \in X$ in the initial sequence, i.e., we are considering the $X$-conflict triple $(s_1, x, t')$.

If the important waiting vertex is a left waiting vertex, the token $t'$ has to enter $X$ at step $s_1$ since this is the first $X$-conflict. Note that such a conflict happens in the interval $[s_{i_1},s'_{i_1}]$ So we can  immediately project the token $t'$ from $x$ to a vertex of $P'_{i_1}$ in component $C'$ (which is free of tokens at this point), and then slide it along $P'_{i_1}$ up to the waiting vertex $w_{i_1}$. By minimality of $s$, no vertex of $P'_{i_1}$ is incident to a token in $X$ and since the current waiting vertex is $w_{i_1}$ and $N(w_{i_1}) \cap X =N(w'_{i_1}) \cap X$ we can safely put $t'$ on $w_{i_1}'$.

If the important waiting vertex is a right waiting vertex, the token $t'$ has to leave $X$ at step $s_1+1$ since this is the last $X$-conflict being cleared before the $X$-transition walk $P_{i_1+1}$ can happen. We instead project the token $t'$ from $x$ to a vertex of $P'_{i_1+1}$ in component $C'$ (which is free of tokens at this point). Note that such a neighbor in $P'_{i_1+1}$ exists by definition of right conflicts. Then we immediately slide token $t'$ to vertex $w'_{i_1}$. 

Now token $t'$ will move to the next $X$-important waiting vertex whenever $t$ does so, i.e, it will mimic the behavior of token $t$. In other words, for every $i$, the token $t'$ will stay one the waiting vertex $w_i'$ until the token $t$ reaches $w_i$. When it does, we move the token $t'$ along the path $P_{i+1}'$ from $w_i$ to $w_{i+1}$. Note that it is possible since, by definition of next important waiting vertex, there is no vertex of $X$ that is adjacent to a vertex of $N(w_{i}P_{i+1}w_{i+1})$ in the interval $[s_i'+1,s_{i+1}]$. And by definition of types $N(w_i) \cap X = N(w_i') \cap X$, $N(w_{i+1}) \cap X = N(w_{i+1}') \cap X$ and $N(P_{i+1}) \cap X = N(P_{i+1}') \cap X$.
We repeat this mimicking until token $t$ (and $t'$) reaches $w_{i_5 - 1}$ ($w'_{i_5 - 1}$).
At this point, we consider the conflict $(s_5, x, t')$. 
Now instead of remaining in component $C'$, token $t'$ will slide back to vertex $x$ (after all other $X$-conflicts have been cleared). 
Note that, in the resulting sequence, we have at least one less move of the token $t'$ on $X$ (while the number of other moves in $X$ is not modified). Let us call this journey $J'$. So, to conclude, we have to prove that the resulting sequence remains a valid transformation from $I_s$ to $I_t$, i.e. the set of tokens is an independent set at any step.
 
By definition of equivalent journeys, we know that if there is a (new) conflict for journey $J'$, it is not with a token in $X$. We also know that it is not with the token $t$ since this token is in the component $C$ all along the journey $J'$. By our choice of $C'$, it is not with a journey that starts before $J'$ nor ends after $J'$ neither. 
So if there is a conflict, it is because there is a journey $J''$ that starts and ends between $s_1$ and $s_5$. By our choice of $C$ and $J$, the journey $J''$ has at most $5|X|k-1$ $X$-important waiting vertices and the type of the journey is not dangerous since the component $C'$ is $\ell$-safe. Hence, we can ``recursively'' apply the same reasoning and project $J''$ in a different $\ell$-safe component. Since the number of $X$-important waiting vertices in $J''$ is strictly less than in $J'$, this procedure is guaranteed to terminate, which completes the proof.  
\end{proof}

\begin{lemma}\label{lem:reducecomp}
Let $I_{s},I_{t}$ be two independent sets and $X$ be a subset of $V(G)$. If $G - X$ contains at least $4k + 2$ $(5|X|k)$-safe components, then we can delete one of those components, say $C$, such that there is a transformation from $I_{s}$ to $I_{t}$ in $G$ if and only if there is a transformation in $G - V(C)$. 
\end{lemma}

\begin{proof}
Let $C_1$, $C_2$, $\ldots$, $C_{4k+ 2}$ denote a set of safe components, let $C_{k + 1}$ denote the component that we delete, and let $C_1$ to $C_k$ denote the components that do not contain any vertex of $I_{s} \cup I_{t}$. 

If there is a transformation from $I_{s}$ to $I_{t}$ in $G - V(C)$, then, since $G - V(C)$ is an induced subgraph of $G$, there is also a transformation in $G$. 

Assume that there is a transformation in $G$ and consider an $X$-reduced sequence. We will show how to modify this transformation so that no token enters $C_{k + 1}$ and at any point in this transformation every component $C_1$, $C_2$, $\ldots$, $C_k$ contains at most one token. From Lemma~\ref{lemma-bound-conflicts}, we know that any journey on the components $C_1$, $\ldots$, $C_k$, $C_{k + 1}$ has at most $5|X|k-1$ $X$-important waiting vertices. Moreover, we know that any journey (in one of those components) with at most $5|X|k-1$ $X$-important waiting vertices can be performed in any one of the components $C_1$, $\ldots$, $C_k$ (since the components are safe). Now consider a transformation from $I_s$ to $I_t$ in $G$ and consider the first journey which either projects a second token into some component $C_1$, $\ldots$, $C_k$ or projects a token into $C_{k + 1}$. Then by the fact that at least one of the components $C_1$, $\ldots$, $C_k$ has no tokens, we modify the transformation so that $J$ occurs in an empty component.  We repeat this procedure for every journey that violates the required property to get a new transformation in $G$ which avoids $C_{k+1}$ and never projects more than one token in any of the components $C_1$, $\ldots$, $C_k$. This completes the proof.
\end{proof}

\begin{corollary}\label{cor-bounded-bag-degree}
Given a cutset $X$, we can assume that $G - X$ has at most $5|X|k(2^{|X|}+1)^{2(5|X|k+2)} + 4k + 2 = 2^{O(|X|^2k)}$ connected components. Moreover, when the number of components is larger we can find a component to delete in  $f(k,|X|) \cdot n^{O(1)}$-time, for some computable function $f$. 
\end{corollary}

\begin{proof}
For every connected component $C$ of $G - X$, we can compute the signature of $C$ in the required time by Lemma~\ref{compute-ell-signature-fpt}. By Lemma~\ref{lem:safecomp}, we can find a collection of components that are safe in the required time.
By Lemma~\ref{lem:reducecomp}, we can safely remove one of these components, which completes the proof.
\end{proof}
\section{Planar graphs}\label{sec-planar}
This section is dedicated to showing that 
\textsc{Token sliding} on planar graphs is fixed parameter tractable when parameterized by $k$. 
The proof also relies on the reduction rules designed for the \textsc{Galactic Token Sliding} problem. We say that a galactic graph $G = (V,E)$ is planar if the underlying simple graph is planar (i.e., if the graph $G$ where all vertices are considered as planetary vertices is planar). Let $(G, k, I_s, I_t)$ be an instance of \textsc{Galactic Token Sliding} where $G$ is planar. In addition to rules R1 to R6, we design in this section a number of rules whose exhaustive application (along with rules R1 to R6) results in an equivalent instance, which we denote again by $(G, k, I_s, I_t)$, such that $G$ is planar and has maximum degree bounded by a function of $k$ (the result then follows by applying Theorem~\ref{thm:ts-bounded-degree-fpt}). We first design reduction rules to bound the number of vertex-disjoint paths between any two vertices of $G$ in Section~\ref{sec:many-paths-pairs}, and then make use of this property to bound the degree of the graph in Sections~\ref{sec:fans-combs} and~\ref{sec:planar-degree}.

\subsection{Many paths between pairs of vertices}\label{sec:many-paths-pairs}

In what follows, we always consider an arbitrary (but fixed) planar embedding of $G$. A set of vertices is \emph{empty of tokens} if it contains no vertex of $I_s \cup I_t$. A vertex of a path is \emph{internal} if it is not one of the two endpoints of the path and it is \emph{external} otherwise. A set of paths of $G$ is \emph{internally vertex disjoint} if no two paths of the set share a common internal vertex.
Let $u,v \in V(G)$ and $P, P'$ be two internally vertex-disjoint $(u,v)$-paths. Since $G$ is planar, $P \cup P'$ is a separating cycle. We denote by $G_{P, P'}$ the graph induced by $P \cup P'$ and the vertices that lie inside of $P \cup P'$. The \emph{interior} of $G_{P,P'}$ is the graph induced by the vertices of $G_{P,P'} - (P \cup P')$ and is denoted by $G^o_{P,P'}$. Let $\mathcal{P}_{P,P'}$ be a set of internally vertex-disjoint $(u,v)$-paths of $G_{P,P'}$. Two paths $P_i$ and $P_{i+1}$ in $\mathcal{P}_{P,P'}$ are \emph{consecutive} if $G_{P_i, P_{i+1}} - \{u,v\}$ is connected and if any path from an internal vertex of $P_i$ to an internal vertex of $P_{i+1}$ in $G_{P_i,P_{i+1}} - \{u,v\}$ contains no internal vertex of any other path $P_j \in \mathcal{P}_{P,P'}$ that is distinct from $P_i$ and $P_{i+1}$. Note that since $G$ is planar, a path $P \in \mathcal{P}_{P,P'}$ can be consecutive with at most two other paths of $\mathcal{P}_{P,P'}$. By a slight abuse of notation, we say that $\mathcal{P} \subseteq \mathcal{P}_{P,P'}$ is a subset of consecutive paths if there exists an ordering $P_1, \ldots, P_{|\mathcal{P}|}$ such that for every $i \leq |\mathcal{P}|-1$, $P_i$ and $P_{i+1}$ are consecutive. We denote such an ordering as a \emph{consecutive ordering}. Finally, for every $i \leq |\mathcal{P}|-1$, we refer to the graph $G_{P_i, P_{i+1}}$ as the \emph{$ith$ section} of $\mathcal{P}$ and a section is \emph{internal} if it is not the first section nor the last section of $\mathcal{P}$.

Let $u,v \in V(G)$. Note that by the multi-component reduction rule (see Section \ref{sec-components}), the number connected components of $G-\{u,v\}$ is bounded by a function of $k$. Hence we can safely assume in the remaining of this section that $G-\{u,v\}$ is connected (if not, just apply the same reasoning to show that the number of vertex-disjoint $(u,v)$-path is bounded in every connected component of $G-\{u,v\}$).

\subparagraph*{Notion of crossing paths.} 
Let $u,v \in V(G)$. Let $\mathcal{P}$ be a set of internally vertex-disjoint and consecutive $(u,v)$-paths in $G$ of size $q \geq 3$, and let $P_1, \ldots, P_q$ denote a consecutive ordering of the paths in $\mathcal{P}$. Since the paths are consecutive, there exists a path from a vertex of $P_i - \{u,v\}$ to a vertex of $P_{i+1} - \{u,v\}$ for every $i \leq q-1$. Furthermore, since $q \geq 3$, the internal vertices of such a path all belong to $G^o_{P_{i}, P_{i+1}}$. It follows that there exists a path $P_C$ from a vertex of $P_{1}-\{u,v\}$ to a vertex of $P_q-\{u,v\}$ such that every internal vertex of $P_C$ belongs to $G^o_{P_1, P_q}$. We call such a path $P_C$ a \emph{crossing path of $\mathcal{P}$}, or just a \emph{crossing path} when $\mathcal{P}$ is obvious from context. Such a set of consecutive paths along with a crossing path are illustrated in Figure \ref{fig:consecutive-paths}. Let us note that since $G$ is planar, a vertex $v$ of $P_C$ can have neighbors in at most two sections of $\mathcal{P}$. More particularly, if $v$ belongs to $G^o_{P_i, P_{i+1}}$ for some $i$, then it cannot have neighbors in $G^o_{P_{j}, P_{j+1}}$ for any $j \neq i$. Our goal is now the following: We will show that as long as there is a "large enough" set of internally vertex-disjoint and consecutive $(u,v)$-paths $\mathcal{P}$, then we can contract an edge of a crossing path of $\mathcal{P}$. Formally, we show that the following reduction rules are safe:

\begin{itemize}
    \item Reduction rule P1: Let $u,v \in V(G)$ and let $P, P'$ be two internally vertex-disjoint $(u,v)$-paths such that $G_{P,P'} - \{u,v\}$ is empty of tokens. Let $\mathcal{P}$ be a set of internally vertex-disjoint and consecutive $(u,v)$-paths of $G_{P,P'}$ of size at least $10k + 21$. Let $P_1, \ldots, P_{|\mathcal{P}|}$ be a consecutive ordering of $\mathcal{P}$ such that  there exists at least $5k+10$ sections containing a non-neighbor of $u$ (which is not $v$) and at least $5k+10$ sections containing a non-neighbor of $v$ (which is not $u$).
    There exists an edge of a crossing path of $\mathcal{P}$ contained in an internal section of $\mathcal{P}$ that can be safely contracted. 
    
    \item Reduction rule P2: Let $u,v \in V(G)$ and let $P, P'$ be two internally vertex-disjoint $(u,v)$-paths such that $G_{P,P'} - \{u,v\}$ is empty of tokens. Let $\mathcal{P}$ be a set of internally vertex-disjoint and consecutive $(u,v)$-paths of $G_{P,P'}$ of size at least $(10k + 21)^2$ such that $u$ is complete to $G_{P,P'}-v$. 
    There exists an edge of a crossing path of $\mathcal{P}$ contained in an internal section of $\mathcal{P}$ that can be safely contracted.
\end{itemize}

\begin{figure}
    \centering
    \includegraphics[scale=0.7]{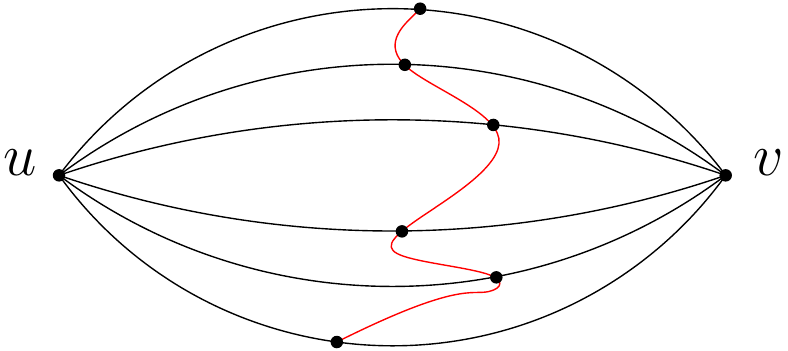}
    \caption{A set $\mathcal{P}$ of $6$ internally disjoint and consecutive $u,v$-paths defining $5$ sections. Black lines represent paths (that can be of any length). A section is a subgraph contained in the area of the plane between two consecutive paths (paths included). The red line represents a crossing path of $\mathcal{P}$. }
    \label{fig:consecutive-paths}
\end{figure}

Since reduction rules R1 to R6, P1, and P2 either delete vertices or contract edges, the graph $G$ obtained after exhaustive application of these rules is also a galactic planar graph. The remainder of this section is dedicated to proving that the rules above are safe, and then to showing that after exhaustive application of these rules the number of vertex-disjoint paths connecting any two vertices of $G$ can be bounded by a function of $k$. Note that the reduction operation is exactly the same in rule P1 and P2 and that we only divide the reduction into two separate rules for the sake of clarity. 

Let us begin by proving a couple of lemmas that will be useful to show the safety of rules P1 and P2. The following lemma shows that as long as the tokens remain "well distributed" on the interior of a set of consecutive paths, then we can always reconfigure on this interior:

\begin{lemma}\label{lem:inside-paths-reconf}
Let $\mathcal{P}$ be a set of consecutive and internally vertex-disjoint $(u,v)$-paths along with a consecutive ordering $P_1, \ldots, P_{|\mathcal{P}|}$ of these paths, and let $I$ and $J$ be independent sets of $G^o_{P,P'}$ such that $|I| = |J|$. If every two consecutive sections of $\mathcal{P}$ contain at most one vertex of $I$ and at most one vertex of $J$ then there exists a reconfiguration sequence from $I$ to $J$ in $G^o_{P,P'}$.
\end{lemma}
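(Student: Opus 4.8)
The plan is to prove Lemma~\ref{lem:inside-paths-reconf} by induction on the number of tokens $n := |I| = |J|$, the base case $n=0$ being vacuous. Throughout, I would orient the sections from left to right using the consecutive ordering $P_1,\dots,P_{|\mathcal{P}|}$ and, for a token on a vertex $x$ of $G^o_{P,P'}$, record the set $\sigma(x)$ of indices $i$ with $x\in G_{P_i,P_{i+1}}$ (a single index, or two consecutive ones when $x$ lies on a shared boundary path). Two easy preliminary facts are needed. First, $G^o_{P,P'}$ is connected, and more generally the subgraph of $G^o_{P,P'}$ induced by the vertices whose sections all lie in a fixed contiguous block $\{a,a+1,\dots,b\}$ is connected; this is proved exactly as the existence of a crossing path was argued above, since each section with $u,v$ removed is connected and two consecutive sections overlap on the internal vertices of their common path. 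Second, a planarity observation: as the $P_j$ are nested arcs between $u$ and $v$, a vertex lying in section $i$ has all its neighbours in sections $i-1,i,i+1$ (and a vertex on the common boundary of sections $i$ and $i+1$ has its neighbours in sections $i-1,\dots,i+2$). Together with the sparsity hypothesis, this forces the tokens of $I$ (and of $J$) to occupy sections that are pairwise at least two apart, so there is always an empty section between consecutive tokens and in particular $|\mathcal{P}|-1\ge 2n-1$.

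For the inductive step, fix $n\ge 1$ and let $t$ be the leftmost token of $I$, lying in section $a$ (the smallest index of a section containing it). Every other token of $I$ lies in a section of index at least $a+2$, so by the planarity observation none of them has a neighbour among the vertices whose sections are contained in $\{1,\dots,a\}$; since that induced subgraph is connected and contains the current vertex of $t$, I can slide $t$ (with all other tokens frozen) to a fixed canonical vertex $v^{\star}$ placed in the first one or two sections, without ever creating a forbidden adjacency. Doing the same in $J$ — sliding its leftmost token to the very same $v^{\star}$ — brings both configurations to ones containing $v^{\star}$, whose closed neighbourhood is contained in $V(G_{P_1,P_2})$, i.e.\ in section $1$. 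Now delete $v^{\star}$ together with the vertices of sections $1$ and $2$: what remains is $G^o_{P_3,P_{|\mathcal{P}|}}$, again the interior of a suffix of $\mathcal{P}$, which is a set of consecutive internally vertex-disjoint $(u,v)$-paths; and the $n-1$ remaining tokens of each configuration lie in sections of index at least $3$ and avoid $P_3$, hence form sparse independent sets there. The induction hypothesis yields a reconfiguration between them inside $G^o_{P_3,P_{|\mathcal{P}|}}$; as this never uses a vertex of section $1$ or $2$, it is disjoint from $N[v^{\star}]$ and therefore stays valid after re-adding the frozen token at $v^{\star}$, so it is a reconfiguration in $G^o_{P,P'}$. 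Concatenating ``slide the leftmost token of $I$ to $v^{\star}$'', the recursive sequence, and the reverse of ``slide the leftmost token of $J$ to $v^{\star}$'' produces the desired sequence from $I$ to $J$.

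The main obstacle is the geometric bookkeeping packaged into the two preliminary facts and into the choice of $v^{\star}$: one has to be careful about tokens sitting on shared boundary paths (where $\sigma$ has two elements and the off-by-one accounting is tightest), to verify that ``the union of the sections weakly to the left of $t$'' is simultaneously connected and disjoint from the neighbourhoods of all the remaining tokens, and to handle degenerate sections whose interior is empty — in which case $v^{\star}$ may have to be taken on a boundary path $P_i$ and the recursion set up on $G^o_{P_{i+1},P_{|\mathcal{P}|}}$ instead, with the same planarity argument showing that $v^{\star}$ does not obstruct the recursively produced moves. None of this is conceptually deep, but it is where all the care goes; the combinatorial skeleton is the short induction above, and the hypothesis that $I$ and $J$ put at most one token in any two consecutive sections is precisely what keeps a moving token always at section-distance at least two from every other token.
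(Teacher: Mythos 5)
Your proof takes a genuinely different route from the paper's, and it is worth contrasting the two. The paper does \emph{not} do induction on the number of tokens; instead it runs a potential-function argument directly on $(I,J)$: it picks $x\in I\setminus J$ and $y\in J\setminus I$ in sections $i\le j$ with $j-i$ minimum, observes that $x,y$ are the only elements of $I\bigtriangleup J$ in $G_{P_i,P_{j+1}}$, and then performs a single slide along an internal $xy$-walk — either $x\to y$ directly (if no token of $I\cap J$ lies strictly between them), or $z\to y$ where $z\in I\cap J$ is the rightmost ``blocking'' token. Each step either decreases $|I\bigtriangleup J|$ or decreases the number of sections separating the nearest mismatched pair, so termination is immediate, and no canonical target vertex ever needs to be fixed. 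Your approach — peel off the leftmost token of each of $I$ and $J$, park them at a common $v^\star$ in the leftmost sections, recurse on $G^o_{P_3,P_q}$ — is conceptually cleaner as an induction, and it makes explicit the ``park the current token out of the way'' intuition; but it pays for it in having to certify that $v^\star$ exists, is simultaneously reachable by both leftmost tokens, and has $N[v^\star]$ disjoint from the recursion region. The paper's argument never has to produce such a $v^\star$, so it sidesteps exactly the degenerate cases (empty section interiors, tokens sitting on a shared $P_i$) that you flag at the end.

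The one place you should be careful before calling the proof complete: your preliminary connectivity claim is stated as ``the subgraph of $G^o_{P,P'}$ induced by the vertices whose sections all lie in a contiguous block $\{a,\dots,b\}$ is connected, exactly as the crossing-path argument,'' but the crossing-path argument establishes connectivity of $G_{P_a,P_{b+1}}-\{u,v\}$, \emph{with} the boundary paths $P_1,P_q$ present. Once you intersect with $G^o_{P,P'}$ you are also deleting $P_1$ (when $a=1$), and $G_{P_1,P_2}-\{u,v\}-V(P_1)$ need not be connected in general; the consecutiveness hypothesis only guarantees connectivity before that extra deletion. You need this precisely when sliding the leftmost token and when arguing that a path from $J$'s leftmost token down to $v^\star$ exists, so it is not a cosmetic point. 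The paper avoids it because the walk $P^o$ it uses lives in $G_{P_i,P_{j+1}}-\{u,v\}$ for \emph{interior} indices $i,j$ (as $x,y\in I\cup J\subseteq G^o$), never forcing $P_1$ or $P_q$ to be removed from the ambient region. Everything else in your sketch — the section-spacing of $\ge 2$ coming from the hypothesis, the planarity fact that interior-of-section-$i$ vertices only see section $i$, the preservation of the hypothesis under re-indexing on $G^o_{P_3,P_q}$, and the disjointness of $N[v^\star]$ from the recursion — checks out once one is careful about boundary-path tokens, which you explicitly acknowledge.
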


\begin{proof}
Let $q := |\mathcal{P}|$. If $|I \Delta J| = 0$ then we are done. Otherwise let $x \in (I \setminus J) \cap V(G_{P_i, P_{i+1}})$, for some $i \leq q$, and $y \in (J \setminus I) \cap G_{P_j, P_{j+1}}$, for some $j \leq q$. We can suppose, w.l.o.g, that $i \leq j$ and we choose $x$ and $y$ so that $j-i$ is minimum. In other words, we choose $x$ and $y$ as to minimize the number of sections that separate $x$ from $y$. Note that $x$ and $y$ are the only elements of $I \Delta J$ in $G_{P_i, P_{j+1}}$: Indeed, there can be at most one element of $I$ in $G_{P_i, P_{i+2}}$ and at most one element of $J$ in $G_{P_{j-1}, P_{j+1}}$, and by the choice of $i$ and $j$ there can be no other elements of $I \Delta J$ in $G^o_{P_{i+2}, P_{j-1}}$. Since the paths in $\mathcal{P}$ are consecutive there exists a path $P^o$ from $x$ to $y$ in $G_{P_i, P_{j+1}} - \{u,v\}$. Then there are two cases. If there is no elements of $I\cap J$ in  $G^o_{P_{i+2}, P_{j-1}}$, then the only vertices of $I \cup J$ in $N[P^o]$ are $x$ and $y$ and we can simply slide the token from $x$ to $y$ along $P^o$. Otherwise there exists $z \in I \cap J$ such that $z \in G_{P_t, P_{t+1}}$ for some $t \in [i+2, j-1]$. We choose $z$ that maximizes $t$. It follows that $z$ and $y$ are the only vertices of $I \cup J$ in $G_{t-1, j+1}$ and we can slide the token on $z$ to $y$ following a path of $G^o_{t-1, j+1}$. Hence, we can always either strictly reduce the size of $I \Delta J$ or strictly reduce the number sections that separate a vertex of $I \setminus J$ from a vertex of $J \setminus I$, which concludes the proof. 
\end{proof}

\begin{lemma}\label{ref:two-frozen-tokens}
Let $\mathcal{P}$ be a set of consecutive and internally vertex-disjoint $(u,v)$-paths of size $q$ along with a consecutive ordering $P_1, \ldots, P_{q}$ of these paths, and let $I$ be an independent set such that $\{u,v\} \subseteq I$. If $N(u) \cap G_{P_1,P_q} = N(v) \cap G_{P_1,P_q}$ then no other token can move from $G - G_{P_1,P_q}$ to $G^o_{P_1,P_q}$ as long as none of the tokens on $\{u,v\}$ move to $G - G_{P_1,P_q}$.
\end{lemma}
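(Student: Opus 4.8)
The plan is to establish two things in turn: first, that under the stated hypotheses neither of the tokens occupying $u$ and $v$ ever moves, and second, that because $u$ and $v$ stay occupied, no other token can cross the cycle $P_1\cup P_q$ from outside $G_{P_1,P_q}$ into its interior $G^o_{P_1,P_q}$.

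For the first point, I would argue as follows. Suppose a token sitting on $u$ slides along an edge $uw$. By assumption this slide does not land in $G-G_{P_1,P_q}$, so $w\in N(u)\cap G_{P_1,P_q}=N(v)\cap G_{P_1,P_q}$; but then $w$ is adjacent to $v$, which carries a token, contradicting that the configuration is independent. Hence the token on $u$ never leaves $u$, and symmetrically the token on $v$ never leaves $v$, so throughout the relevant part of the sequence both vertices remain permanently occupied.

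For the second point, note that $P_1$ and $P_q$ are internally vertex-disjoint $(u,v)$-paths, at least one of length two or more, so $C:=P_1\cup P_q$ is a cycle; fix the planar embedding of $G$. Then $V(G^o_{P_1,P_q})$ consists exactly of the vertices drawn strictly inside $C$ and $V(G-G_{P_1,P_q})$ of those drawn strictly outside, so these two sets lie in different connected components of $G-V(C)$ and every walk of $G$ joining them meets $V(C)=V(P_1)\cup V(P_q)$. Now suppose for contradiction that some token $t$, different from the tokens on $u$ and $v$, moves from $G-G_{P_1,P_q}$ to $G^o_{P_1,P_q}$; following $t$, let $i$ be the first step at which $t$ lies in $G^o_{P_1,P_q}$. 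The vertex carrying $t$ at step $i-1$ is a neighbour of $G^o_{P_1,P_q}$ that is not itself in $G^o_{P_1,P_q}$, hence it lies in $V(C)$; since $u$ and $v$ are occupied it is neither of them, so it is an internal vertex $a$ of $P_1$ or $P_q$, and since it carries $t$ in an independent set it is adjacent to neither $u$ nor $v$.

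The step I expect to be the main obstacle is ruling out this last configuration — showing that such an internal vertex $a$ cannot in fact have any neighbour strictly inside $C$. The plan here is to use the planar structure of the consecutive ordering $P_1,\dots,P_q$, and in particular the way the sections $G_{P_i,P_{i+1}}$ are arranged around $C$, together with the hypothesis $N(u)\cap G_{P_1,P_q}=N(v)\cap G_{P_1,P_q}$, to show that every vertex of $P_1\cup P_q$ having a neighbour in $G^o_{P_1,P_q}$ is either $u$, $v$, or a vertex of $N(u)\cap G_{P_1,P_q}$; such a vertex is then adjacent to the frozen token on $u$, so it cannot carry $t$, giving the contradiction. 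The freezing argument is routine and the Jordan-curve bookkeeping is standard; essentially all the care goes into this planarity/consecutivity step and into tracking, across several slides, where $t$ can move while $u$ and $v$ stay frozen.
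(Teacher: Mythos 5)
Your first step --- freezing $u$ and $v$ because each carries a token and $N(u)\cap G_{P_1,P_q}=N(v)\cap G_{P_1,P_q}$ --- is correct and is also how the paper begins. After that the approaches diverge, and your plan has a genuine gap. The claim you flag as the remaining obstacle, namely that every vertex of $P_1\cup P_q$ with a neighbour in $G^o_{P_1,P_q}$ must be $u$, $v$, or a vertex of $N(u)\cap G_{P_1,P_q}$, is not a consequence of the hypotheses. The condition $N(u)\cap G_{P_1,P_q}=N(v)\cap G_{P_1,P_q}$ constrains only the neighbourhoods of $u$ and $v$; it says nothing about which other cycle vertices see the interior. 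Take $P_1=u,x_2,x_3,x_4,v$: the hypothesis forces $x_2,x_4\in N(u)\cap N(v)$, but $x_3$ may perfectly well have a neighbour inside $G^o_{P_1,P_2}\subseteq G^o_{P_1,P_q}$ while being adjacent to neither $u$ nor $v$, so the configuration you hope to exclude can actually occur. There is also a structural weakness: your argument constrains only the vertex $a$ from which $t$ makes its \emph{last} slide into $G^o_{P_1,P_q}$, and never examines how $t$ got onto $P_1\cup P_q$ in the first place.

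The paper avoids both issues by attacking the \emph{first} slide rather than the last. Suppose a token $t$ slides from $w\in G-G_{P_1,P_q}$ onto an internal vertex $x_i$ of (say) $P_1$; since $u$ and $v$ are occupied and $x_2,x_{\ell-1}$ are adjacent to them, necessarily $3\le i\le\ell-2$ where $P_1=x_1,\dots,x_\ell$. The hypothesis then supplies the two chords $vx_2$ and $ux_{\ell-1}$ of the cycle $P_1\cup P_q$; by planarity they cannot both be drawn inside the cycle, and the paper uses a Jordan-curve crossing argument on the chord drawn outside together with the edge $wx_i$ (with $w$ outside the cycle and $x_i$ strictly between $x_2$ and $x_{\ell-1}$) to derive a contradiction. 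This proves the stronger statement that no token from $G-G_{P_1,P_q}$ can ever reach $P_1\cup P_q$ at all, which makes the ``last slide into the interior'' moot. If you want to repair your proof, this is the step to supply: rule out the entry move onto the cycle rather than trying to block the final step off the cycle.
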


\begin{proof}
Let $t_u$ denote the token on $u$ and $t_v$ denote the token on $v$. As long as $t_v$ stays on $v$, the token $t_u$ cannot move to $G_{P,P'}$ and vice-versa. It is then sufficient to show that as long as $t_u$ and $t_v$ do not move, no other token can move from $G - G_{P_1,P_q}$ to $G_{P_1,P_q}$. Suppose otherwise: Since $P_1 \cup P_q$ separates $G^o_{P_1, P_q}$ from the rest of the graph, a token can only enter $G_{P_1, P_q}$ by first sliding to either $P_1$ or $P_q$. We suppose, w.l.o.g, that a token $t$ slides from $w \in G - G_{P_1,P_q}$ to $P_1$. Let $x_1, \ldots, x_\ell$ denote the vertices of $P_1$ where $x_1 = u$ and $x_\ell = v$. If $\ell = 3$ then we are done since no token can move to $x_2$. So $\ell \geq 4$ and the token $t$ must slide to some $x_i \in \{x_3, \ldots, x_{\ell-2}\}$. Since $N(u) \cap G_{P_1,P_q} = N(v) \cap G_{P_1,P_q}$ we have that $x_2 \in N(v)$ and that $x_{\ell-1} \in N(u)$ and since $G$ is planar, at most one of the edges $(v, x_2)$, $(u, x_{\ell-1})$ can be contained is the interior of $G_{P_1, P_q}$. But then the edge $(w,x_i)$ crosses the other one, a contradiction.
\end{proof}

\begin{lemma}\label{lem:rule-P1}
Reduction rule P1 is safe.
\end{lemma}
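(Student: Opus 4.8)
Here is my plan for proving that Reduction rule P1 is safe.

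\textbf{Overall strategy.} The rule contracts an edge $e = (a,b)$ of a crossing path $P_C$ lying in an \emph{internal} section of $\mathcal{P}$, so the vertices $a,b$ are internal vertices of $G^o_{P_1,P_q}$ and are far (in the ``section'' ordering) from both $u$ and $v$. Let $G'$ be the graph after contracting $e$, with $I_s', I_t'$ the corresponding independent sets. Since contracting an edge can only merge vertices, the forward direction — a transformation in $G'$ lifts to a transformation in $G$ — is the easy one and I would dispatch it quickly: a slide along an edge of $G'$ is either a slide along an edge of $G$, or a move into/out of the contracted vertex $w$, which can be simulated in $G$ by first moving the token to whichever of $a,b$ is adjacent to the relevant endpoint and then (if needed) sliding across $e$; the only thing to check is that we never need both $a$ and $b$ occupied, which follows because $a,b$ are adjacent in $G$ so at most one token sits in $\{a,b\}$ at any time anyway.

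\textbf{The hard direction: $G$ yes-instance $\Rightarrow G'$ yes-instance.} Take a reconfiguration sequence $\mathcal{R}$ in $G$ from $I_s$ to $I_t$. Because $G_{P,P'}-\{u,v\}$ is empty of tokens and $|\mathcal{P}| \geq 10k+21$, I would first argue that at every step of $\mathcal{R}$, at most one token at a time is ``deep'' inside $G^o_{P_1,P_q}$ near the contracted edge; more precisely, I want to show we may assume the two vertices $a,b$ are never both in the ``token-zone'' and in fact that the local neighbourhood of $e$ behaves like a path segment through which tokens pass one at a time. The mechanism is the pigeonhole/corridor argument already used in Lemma~\ref{lem:inside-paths-reconf} and Lemma~\ref{ref:two-frozen-tokens}: the hypothesis on $5k+10$ sections containing non-neighbours of $u$ (resp.\ of $v$), together with $|I_s\cup I_t|\le 2k$ and the planarity constraint that a vertex of a crossing path sees at most two sections, guarantees enough ``room'' on either side of $e$ so that whenever a token needs to traverse the region around $e$, we can reroute or park other tokens using Lemma~\ref{lem:inside-paths-reconf}. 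Then contracting $e$ does not obstruct any traversal: a single token crossing from the $a$-side to the $b$-side in $G$ corresponds to a single token passing through $w$ in $G'$, and conversely there is never a configuration in the (rerouted) sequence with tokens on both $a$ and $b$ that we would need to preserve.

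\textbf{Making this precise.} Concretely I would: (1) fix a planar embedding and the consecutive ordering $P_1,\dots,P_q$; let $e$ lie strictly between sections with index in $[5k+11, q-5k-10]$ so that there are $\ge 5k+10$ sections on each side containing a non-neighbour of $u$ (resp.\ $v$) — this is exactly what the rule's hypothesis provides and is why we need ``internal section''. (2) Show that in any reconfiguration sequence, using Lemma~\ref{lem:inside-paths-reconf} we can normalize so that the tokens lying inside $G^o_{P_1,P_q}$ are always ``well distributed'' (at most one per two consecutive sections), hence in particular the section containing $e$ and its neighbours can be kept token-free except for at most one transient token. (3) With that normalization, translate $\mathcal{R}$ step-by-step to $G'$: steps outside the $e$-region are copied verbatim; a step sliding a token onto $a$ or $b$ from outside becomes a slide onto $w$; a step sliding across $e$ becomes a no-op; a step leaving $a$ or $b$ becomes a slide out of $w$. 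Independence is preserved because, by the normalization, no other token is adjacent to the $e$-region during these transient steps, and $N_{G'}(w) = (N_G(a)\cup N_G(b))\setminus\{a,b\}$ so adjacencies elsewhere are unchanged.

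\textbf{Main obstacle.} The crux — and where the real work lies — is step (2): proving that we may assume the tokens inside $G^o_{P_1,P_q}$ stay well distributed throughout the \emph{entire} reconfiguration sequence, not just at the endpoints. A token could in principle be forced to dwell near $e$ because something outside $G_{P,P'}$ is blocking it. Handling this requires combining Lemma~\ref{ref:two-frozen-tokens} (two tokens on $u,v$ with equal traces freeze the interior) with the counting hypothesis on non-neighbours of $u$ and of $v$: when the relevant traces \emph{coincide} no token needs to be inside at all, and when they differ the non-neighbour sections give a sink region where a transient token can wait without occupying the $e$-region. Carefully choosing, among all reconfiguration sequences in $G$, one that minimizes the number of steps with a token in the $e$-region, and then deriving a contradiction or a local reroute whenever that count is positive, is the technically delicate part; everything else is bookkeeping.
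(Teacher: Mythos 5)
Your forward direction (lifting a sequence from $G'$ to $G$) is fine and matches the paper. Your hard direction, however, takes a genuinely different route from the paper, and the route has a gap where you yourself locate the crux.

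The paper does \emph{not} try to track the interior trajectory of the original sequence and patch it near the contracted edge $e$. Instead it abandons the interior behaviour entirely: in the constructed $G'$-sequence, every move whose endpoints both lie in $G^o_{P_2,P_{q-1}}$ is simply discarded, and the tokens currently in the region are kept parked on one of two \emph{preselected} independent sets $A$ (built from non-neighbours of $u$, one per well-separated section) or $B$ (non-neighbours of $v$). When a token must cross the frontier (through $u$, $v$, or another frontier vertex), the parked tokens are first reshuffled between $A$ and $B$ via Lemma~\ref{lem:inside-paths-reconf}, the crossing is performed, the newcomer is immediately projected onto a free vertex of $A$ or $B$, and the invariant ``at most one token on the frontier at a time, and it leaves at the next step'' is restored. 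This is preceded by two preprocessing steps you omit: ensuring $|\{u,v\}\cap I_s|\le 1$ (via either Lemma~\ref{ref:two-frozen-tokens} when $N(u)$ and $N(v)$ agree on the region, or by sliding one of them inward), and expelling black holes from the interior. The hypothesis on ``$\ge 5k+10$ sections with a non-neighbour of $u$'' and ``$\ge 5k+10$ with a non-neighbour of $v$'' is what makes $A$ and $B$ large enough to serve as parking lots of size $k$ \emph{after} the preprocessing has shrunk $\mathcal{P}$.

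The gap in your plan is step (2). You invoke Lemma~\ref{lem:inside-paths-reconf} to ``normalize so that the tokens inside $G^o_{P_1,P_q}$ are always well distributed throughout the entire reconfiguration sequence,'' but that lemma only connects two fixed well-distributed configurations by a sequence that stays inside the region; it says nothing about modifying a sequence that must repeatedly interact with the frontier so that \emph{every} intermediate set is well distributed. A token may have to occupy a specific vertex of the region precisely because the next move crosses the frontier there, and your normalization cannot move it elsewhere without breaking that move. You acknowledge this (``the technically delicate part'') but the minimality argument you sketch is not obviously closeable with the tools you cite; the paper sidesteps it entirely by never matching interior positions. There is also a second, quieter issue with step (3): after contraction $N_{G'}(w)=N_G(a)\cup N_G(b)$ strictly grows, so ``a step sliding across $e$ becomes a no-op'' is only sound if no other token is ever adjacent to $a$ \emph{or} $b$ while a token sits on either; your well-distribution guarantee (one token per two consecutive sections) does not by itself control adjacency of tokens on the frontier or on $u,v$ to the endpoints of $e$, whereas the paper's choice to keep interior tokens on $A$ or $B$ (non-neighbours of $u$, resp.\ $v$, away from the first and last sections) is exactly what rules out these conflicts by construction.
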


\begin{proof}
Let $q$ be the number of paths in $\mathcal{P}$ and let $P_C$ be a crossing path of $\mathcal{P}$. We pick an edge $e$ of $P_C$ that is contained in the interior of $P_3 \cup P_{q - 2}$ and contract it. Furthermore, we choose $e$ so that both endpoints of $e$ do not belong to $N(u) \setminus N(v)$ nor $N(v) \setminus N(u)$ if such an edge exists. Let $G'$ denote the contracted graph. Note that since $G_{P_1,P_q} - \{u,v\}$ is empty of tokens, both $I_s$ and $I_t$ remain independent sets of $G'$. Furthermore, if $e$ is adjacent to some path $P_i$ of $\mathcal{P}$ then we delete this path from $\mathcal{P}$. Note that the newly obtained set $\mathcal{P}$ remains a set of internally vertex-disjoint and consecutive $(u,v)$-paths for both $G$ and $G'$ which defines at least $5k+8$ sections containing a non neighbor of $u$, and the same goes for $v$.

Let us begin with a preprocessing step before we show that the reduction rule is safe. We show that up to slightly reducing the size of $\mathcal{P}$ we can always assume that $|\{u,v\} \cap I_s| \leq 1$. Suppose that there are two tokens on $\{u,v\}$. Note that initially $G_{P_1,P_q} - \{u,v\}$ is empty of tokens and hence the tokens on $\{u,v\}$ are the only tokens on $G_{P_1, P_q}$. Then there are two cases: 
\begin{itemize}
    \item $N(u) \cap V(G_{P_2, P_{q-1}}) = N(v) \cap G_{P_2, P_{q-1}}$. As long as none of the tokens on $u$ or $v$ moves during the sequence in $G$, Lemma \ref{ref:two-frozen-tokens} ensures that no other token can enter $G_{P_2, P_{q-1}}$, and thus that no token can use the edge $e$ before one of the tokens on $u$ or $v$ moves to $G - G_{P_2,P_{q-1}}$. If no such move happens in the sequence then contracting $e$ is obviously safe. Otherwise, the independent set $I'_s$ obtained after such a move satisfies $I'_s \cap \{u,v\} = 1$ and we set $\mathcal{P} := \{P_2, \ldots, P_{q-1}\}$. Note that $\mathcal{P}$ defines at least $5k+6$ sections containing a non neighbor of $u$ (resp. $v$).
    \item Otherwise there exists, w.l.o.g, a vertex $y \in N(u) \cap V(G_{P_2, P_{q-1}})$ that is not in $N(v) \cap V(G_{P_2, P_{q-1}})$. By our choice of $e$, we can choose $y$ so that it is not an endpoint of $e$ since if $e$ has an endpoint in $N(u) \setminus N(v)$ then at most one vertex of $P_C$ is not in $N(u) \setminus N(v)$ and we are free to choose $y$ so that it is not an endpoint of $e$.  Then we can slide the token on $u$ to $y$ (in both $G$ and $G'$) in some section $i \in [2, q-1]$. Since a vertex of $G_{P_1, P_{q}}-\{u,v\}$ can have neighbors in at most two consecutive sections of $\mathcal{P}$ and since $q \geq 6$, $y$ cannot have internal vertices of both $P_2$ and $P_{q-1}$ in its neighborhood. Suppose, w.l.o.g, that $y$ has no internal vertices of $P_{q-1}$ in its neighborhood. Then, we can slide the token on $v$ to an internal vertex of $P_{q-1}$ (recall that there are no other tokens but the one initially on $u,v$ in $G_{P_1, P_q})$. If $i \leq q-4$ we can slide the token on $y$ to an internal vertex of $P_2$ by following a path of $G^o_{P_2, P_{i+1}}$ (whose neighborhood is disjoint from $P_{q-1}$). We then set $\mathcal{P} := \{P_3, \ldots, P_{q-2}\}$. Otherwise, when $i \geq q-4$, we set $\mathcal{P} := \{P_2, \ldots, P_{q-5}\}$. In both case we end-up with an independent set $I's$ satisfying $I'_s \cap \{u,v\} = \emptyset$ and a set $\mathcal{P}$ which sections are empty of tokens and that defines at least $5k+1$ sections that contain a non-neighbor of $u$ (resp. $v$). 
\end{itemize}

It follows that we can always suppose that $|\{u,v\} \cap I_s| \leq 1$ and that the set $\mathcal{P}$ defines at least $5k$ sections containing a non-neighbor of $u$ and at least $5k$ sections containing a non-neighbor of $v$ (in both $G$ and $G'$). Since there is at most one token on $G_{P_1, P_q}$ initially, we are free to move this token on any vertex of $G^o_{P_1, P_q}$ different from $u$ or $v$. Moreover, we claim that we can assume that there are no black holes inside the face defined by $P_2 \cup P_{q-1}$. To see why, recall that initially $G_{P_1,P_q} - \{u,v\}$ is empty of tokens. Hence, if any black hole inside $P_2 \cup P_{q-1}$ is adjacent to a vertex in $V(G_{P_2, P_{q - 1}}) \setminus \{u,v\}$ then the absorption rule would apply (reduction rule R3). So any black hole inside $P_2 \cup P_{q-1}$ can be adjacent either to $u$ or to $v$; we cannot have a black hole adjacent to both $u$ and $v$ because of the path $P_C$. By the dominated black hole role we can have at most one black hole of each type which we can safely move to the outside of the face without modifying the rest of the embedding. 

We are now ready to prove that rule $P_1$ is safe. If there exists a sequence between $I_s$ an $I_t$ in $G'$ then there also exists one in $G$. Let us show that the other direction is also true. We consider a sequence from $I_s$ to $I_t$ in $G$ and show that there also exists one in the contracted graph $G'$. In order to do so, we consider the following independent sets of $G'$:
\begin{enumerate}
    \item An independent set $A := \{a_{i_1}, \ldots a_{i_k}\}$ such that $ A \subseteq V(G'_{P_2, P_{q-1}}) \setminus N(u)$, for every $j \leq k$ we have that $a_{i_j}$ belongs to the $i_j$-th section of $G'_{P_2, P_{q - 1}}$ and for every $j \leq k-1$, $i_j < i_{j+1}$ and $i_{j+1} - i_j \geq 3$. In other words, we pick non-neighbors of $u$ (excluding $v$ from the non-neighbors of $u$) in different sections of $G'_{P_2, P_{q-1}}$ that are ``far enough'' from each other and from the first and last sections. Such an independent set exists since $G'_{P_2, P_{q-1}}$ defines at least $5k$ sections containing a non-neighbor of $u$.
    \item An independent set $B \subseteq V(G'_{P_2, P_{q-1}}) \setminus N(v)$ defined similarly to $A$. Such an independent set exists since $G'_{P_2, P_{q-1}}$ defines at least $5k$ sections containing a non-neighbor of $v$ (excluding $u$ from the non-neighbors of $v$). 
\end{enumerate}
Note that by construction, the independent sets $A$ and $B$ satisfy the condition of Lemma~\ref{lem:inside-paths-reconf}. It follows that as long as there are no tokens on $P_2 \cup P_{q-1}$, we are free to move the tokens in between these independent sets. 
We now follow the sequence from $I_s$ to $I_t$ in $G$ and we construct a sequence in $G'$ such that there is at most one token on $F := N(V(G_{P_2,P_{q-1}})) \setminus V(G_{P_2,P_{q-1}})$ such that if a token slides on $F$, then it slides out of $F$ at the next step. We refer to the set $F$ as the \emph{frontier} of $G'_{P_2, P_{q-1}}$. Note that there is initially at most one token on $G'_{P_1,P_{q}}$, and hence we are free to move this token to a vertex of either $A$ or $B$, chosen arbitrarily (which are both included in the interior of $G'_{P_2, P_{q-1}}$, satisfying our condition). Let $\bar{G}$ denote the graph $G - N[V(G_{P_2,P_{q-1}})]$. We construct the sequence for $G'$ as follows: If a token slides between two vertices of $\bar{G}$, then the same move can be performed in $G'$, and if a token slides between two vertices of $G^o_{P_2, P_{q-1}}$ we cancel the move in $G'$ (and this process satisfies the condition on the frontier). We now have to explain how we deal with moves that include vertices of the frontier:

\begin{itemize}
    \item If a token slides from a vertex of $\bar{G}$ to $u$ (resp. $v$) in $G$, we first move the tokens on $G'_{P_2, P_{q-1}}$ to $A$ (resp. $B$) using Lemma \ref{lem:inside-paths-reconf}. We can then perform the move to $u$ (resp. $v$) in $G'$ and then further move this token to another vertex of $A$ (resp. $B$). Such a vertex $a_{i_j}$ exists since $|A| = |B| = k$ and can be reached following the internal path $P_{i_j}$ of $G'_{P_2, P_{q-1}}$. This satisfies our condition on $F$ since the only vertex of the frontier in these two moves is $u$ (or $v$). 
    
    \item If a token slides from a vertex of $\bar{G}$ to a vertex $x$ of $F \setminus \{u,v\}$ in $G$, then we  first move the tokens on $G'_{P_2, P_{q-1}}$ to $C \setminus \{c_{i_1}\}$ if $x \in N(P_2)$ or to $C \setminus \{c_{i_k}\}$ if $x \in N(P_{q-1}) \setminus N_(P_2)$. Note that by our condition on $F$, the token on $x$ is the only one on $F$ at that point. It follows that in the first case, we can further slide the token on $P_2$ and then slide to the vertex $c_{i_1}$ following a path in $G'_{P_2, P_{i_1}}$. These are valid moves, since the tokens on $G_{P_2, P_{q-1}}$ lie on the vertices $c_{i_j}$ with $j > 1$. The same goes in the second case, by first moving the token from $x$ to $P_{q-1}$ and then to $c_{i_k}$ by following a path in $G'_{P_{i_k}, P_{q-1}}$. Note that the condition on $F$ remains satisfied since all these moves are consecutive.

    \item Finally, if a token slides from a vertex of $G_{P_2, P_{q-1}}$ to $F$ in $G$, then we look ahead in the sequence: if the token stays on $F$ or slides to another vertex of $F$, or slides back to $G_{P_2, P_{q-1}}$ then we do nothing in $G'$. Otherwise it slides from $x \in F$ to a vertex of $y \in \bar{G}$ at the next step. Then in $G'$ we first slides a token from $G'_{P_2, P_{q-1}}$ to $x$ (applying Lemma \ref{lem:inside-paths-reconf} just as in the previous case if necessary), and then further slide this token to $y$. This is a valid move since the slides in $\bar{G}$ are kept as is in $G'$. Since all these slides are consecutive and since $x$ is the only vertex of $F$ involved, the condition on the frontier remains satisfied.
\end{itemize}

Note that in the constructed sequence, the tokens in $V(\bar{G})$ lie on the same vertices in $G$ and $G'$ at any step. It is in particular the case at the last step. Since furthermore $I_t \subseteq  V(\bar{G})$ the tokens all lie on $I_t$ at the last step of the constructed sequence, which concludes the proof.
\end{proof}

\begin{lemma}
Reduction rule P2 is safe.
\end{lemma}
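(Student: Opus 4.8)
The plan is to follow the proof of Lemma~\ref{lem:rule-P1} as closely as possible, exploiting the remark (made right after the statements of P1 and P2) that the reduction operation is literally the same in both rules. Fix a consecutive ordering $P_1,\dots,P_q$ of $\mathcal{P}$ with $q\ge (10k+21)^2$, take a crossing path $P_C$ of $\mathcal{P}$, pick an edge $e=ab$ of $P_C$ lying in an internal section (avoiding, as in Lemma~\ref{lem:rule-P1}, endpoints in $N(u)\setminus N(v)$ or $N(v)\setminus N(u)$ when possible), and contract $e$ to obtain $G'$. The direction ``$G'$ is a yes-instance $\Rightarrow$ $G$ is a yes-instance'' is immediate: a reconfiguration sequence in $G'=G/e$ lifts to $G$ by placing a token that sits on the contracted vertex on $a$ or on $b$ and sliding within $\{a,b\}$ when needed, and since $G_{P,P'}-\{u,v\}$ is initially empty of tokens this lift never creates a conflict. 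The substance is the converse direction.

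For the converse I would reuse the whole machinery of the proof of Lemma~\ref{lem:rule-P1}: first a preprocessing phase, using Lemma~\ref{ref:two-frozen-tokens} together with the absorption rule R3 and the dominated black hole rule R2, reducing to the situation where $|I_s\cap\{u,v\}|\le 1$, there is no black hole inside the relevant faces, and the interior is still empty of tokens, at the cost of discarding only a bounded number of paths of $\mathcal{P}$; then one walks along the given sequence in $G$, maintaining a sequence in $G'$ that agrees on every token outside the frontier of the interior, keeps the tokens currently lying in the interior ``well spread'', and shuffles them via Lemma~\ref{lem:inside-paths-reconf} whenever a token has to cross the frontier. The single place where the proof of Lemma~\ref{lem:rule-P1} genuinely uses that $u$ has non-neighbors is the parking set $A$, invoked whenever a token slides onto $u$. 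Here $A$ does not exist, but it is not needed: since $u$ is complete to $G_{P,P'}-v$, at every step of the $G$-sequence at which a token occupies $u$ the interior is necessarily token-free, hence so is the interior of the $G'$-sequence, so there is nothing to reshuffle and no conflict with $u$ can arise. Slides onto or off of $v$ are handled exactly as in Lemma~\ref{lem:rule-P1}, parking the interior tokens on a set $B$ of non-neighbors of $v$ lying in at least $5k$ pairwise far-apart sections.

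The main obstacle, and the reason the required number of paths jumps from $10k+21$ to $(10k+21)^2$, is producing the set $B$: one must show that after preprocessing the $(10k+21)^2$ consecutive paths still contain $5k$ well-separated sections each with a non-neighbor of $v$. This can fail only if $v$ is, like $u$, (almost) complete to the interior, so I would treat that degenerate configuration separately. Since $u$ is complete to $G_{P,P'}-v$, the graph $G_{P,P'}-v$ is outerplanar (adding the universal vertex $u$ keeps it planar), a very rigid structure; so if $v$ were adjacent to interior vertices of a number of the parallel segments that is superlinear in $k$, then together with the crossing path $P_C$ this would either force a $K_{3,3}$- or $K_5$-minor, which is impossible by planarity, or exhibit a large fan or comb, to which the reductions of Section~\ref{sec:fans-combs} apply. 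Hence in an instance irreducible under all of R1--R6 and the fan/comb rules this configuration is absent and $B$ exists with room to spare, the square providing the slack needed to absorb the preprocessing losses and the spreading out of the $5k$ parking spots. I expect pinning down precisely which sub-configurations are ``degenerate'' and verifying that the constant $(10k+21)^2$ survives the worst case to be the fiddly part; once the reduction is established, applying it exhaustively alongside R1--R6 bounds the maximum degree of the (galactic) planar graph, and the result then follows from Theorem~\ref{thm:ts-bounded-degree-fpt}, as announced.
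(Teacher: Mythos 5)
You correctly identify both the key simplification (when a token sits on $u$, completeness of $u$ to $G_{P,P'}-v$ forces the interior to be empty, so no parking set is needed for $u$) and the real obstacle (producing a parking set $B$ of non-neighbors of $v$), and you correctly locate the role of the square in $(10k+21)^2$. But the way you dispose of the case where $B$ may fail to exist is a genuine gap. You assert that a configuration in which $v$ is (almost) complete to the interior is ``degenerate'' and cannot survive: either it yields a $K_{3,3}$- or $K_5$-minor, or it triggers the fan/comb rules. Neither claim holds. Planarity does not prevent two vertices $u,v$ from both being complete to arbitrarily many nested sections --- two vertices in a planar graph can share arbitrarily many common neighbours, and nothing here forces a Kuratowski subgraph. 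And invoking the fan/comb reductions of Section~\ref{sec:fans-combs} is circular: rules P3 and P4 are stated and proved \emph{under the assumption that P1 and P2 no longer apply}, so their safety cannot be presupposed inside the proof of P2 (and you would in any case need to verify their $I$-empty/safe/$I$-complete side conditions, which you do not).

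The paper treats this ``degenerate'' configuration head-on as one of two cases, and it is the case where the quadratic bound is actually consumed. If at most $5k+10$ sections contain a non-neighbor of $v$, then since $q\ge(10k+21)^2$ one can restrict $\mathcal{P}$ to a run of $10k+20$ consecutive paths whose sections are complete to \emph{both} $u$ and $v$. After contracting $e$ and dropping the at most two incident paths, one has $N(u)\cap G_{P_2,P_{q-1}} = N(v)\cap G_{P_2,P_{q-1}}$, so Lemma~\ref{ref:two-frozen-tokens} gives $|I_s\cap\{u,v\}|\le 1$, and --- crucially --- \emph{any} token on $u$ or on $v$ forces the interior to be token-free, so no parking set of non-neighbors of $u$ or $v$ is needed at all. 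One only needs an independent set $S$ in the interior with consecutive members separated by at least two paths of $\mathcal{P}$ (available since $|\mathcal{P}|\ge10k$), and Lemma~\ref{lem:inside-paths-reconf} then lets one shuffle tokens among $S$ as required. Your proposal should replace the ``impossibility'' claim with this explicit second case; the rest of your outline (the easy direction, the bookkeeping along the frontier, the reuse of the set $B$ of non-neighbors of $v$ when at least $5k+10$ sections do contain one) matches the paper's Case~2 and is fine.
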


\begin{proof}
The proof follows the exact same steps as the proof of Lemma~\ref{lem:rule-P1} with a few distinctions that we highlight next. Since there are no tokens initially in $G_{P_1,P_q} - \{u,v\}$ we know that we can assume that there are no black holes inside $P_2 \cup P_{q-1}$. We let $q$ be the number of paths in $\mathcal{P}$ and we distinguish between two cases depending on the number of sections containing a non-neighbor of $v$:

\noindent\textbf{Case 1:} There exists at most $5k + 10$ sections containing a non-neighbor of $v$. Since $q \geq (10k + 21)^2$, there must exist a subset of consecutive paths of $\mathcal{P}$ of size at least $10k + 20$ that defines sections that are also complete to $v$. We restrict $\mathcal{P}$ to the aforementioned subset and, for the sake of clarity, we again call this set $\mathcal{P}$ and denote its size by $q \geq 10k + 20$. Let $P_C$ be a crossing path of $\mathcal{P}$. We choose the edge $e$ to contract to be any edge of $P_C$ contained in the interior of $P_3 \cup P_{q - 2}$ and contract it and call $G'$ the contracted graph. We proceed to remove from $\mathcal{P}$ the paths that are adjacent to $e$, which leaves at least $10k+18$ paths in $\mathcal{P}$. Note that now we have $N(u) \cap G_{P_2, P_{q-1}} = N(v) \cap G_{P_2, P_{q-1}}$ and therefore, similarly to the proof of Lemma~\ref{lem:rule-P1}, we can assume that $|I_s \cap \{u,v\}| \leq 1$. The rest of the proof proceed by modifying the sequence in $G$ so that it can be simulated in $G'$ by restricting the number of tokens in the frontier to be at most one (just like in the proof of Lemma~\ref{lem:rule-P1}). Note that here, we can enforce the tokens on the interior of $G'_{P_2, P_{q-1}}$ to always lie on an independent set $S$ of size $k$ such that any two vertices of $S$ are separated by at least two paths of $\mathcal{P}$ (which exists since $\mathcal{P}$ contains at least $10k$ paths). This way, we can apply Lemma \ref{lem:inside-paths-reconf} to freely move the move the tokens in-between vertices of $S$ if needed. 
    
\noindent\textbf{Case 2:} There exists at least $5k + 10$ sections containing a non-neighbor of $v$. Here we can proceed exactly like in the proof of Lemma~\ref{lem:rule-P1} (including the choice of $e$) since we can easily guarantee that $|I_s \cap \{u,v\}| \leq 1$. Indeed, we can always find a vertex on $P_C \setminus N(v)$ on which to slide the token on $u$, which then allows us to slide the token on $v$ to some internal vertex of $P_2$ or $P_{q-1}$. Then, just has in the proof of Lemma~\ref{lem:rule-P1} we can also slide the token initially on $u$ onto some some internal vertex of $P_2$ or $P_{q-1}$ and restrict ourselves to considering either $\mathcal{P} := \{P_3, \ldots, P_{q-2}\}$ or $\mathcal{P} := \{P_2, \ldots, P_{q-5}\}$. We can then construct the sequence for the contracted graph just as in the proof of Lemma~\ref{lem:rule-P1}, by making use of the fact that the reduction of the size of $\mathcal{P}$ and the contraction the edge $e$ leaves more than $5k$ sections containing a non-neighbor of $v$ (and this does not change anything for $u$ which remains complete to these sections). 
\end{proof}

\begin{lemma}\label{lem-bounded-paths}
Let $u,v$ be two vertices of a galactic planar graph $G$ and $C$ be a connected component of $G-\{u,v\}$ on which rules $P1$ and $P2$ can no longer be applied. Then the number of internally vertex-disjoint $(u,v)$-paths in $C$ is at most $2k(10k+21)^3$.  
\end{lemma}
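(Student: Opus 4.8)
The plan is to argue by contradiction: assuming $C$ contains more than $2k(10k+21)^3$ internally vertex-disjoint $(u,v)$-paths, I will exhibit a subfamily of paths to which reduction rule P1 or reduction rule P2 applies, contradicting the hypothesis that neither rule can be applied on $C$.

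First I would fix a maximum family $\mathcal{P}=\{P_1,\dots,P_q\}$ of internally vertex-disjoint $(u,v)$-paths in $C$, with $q>2k(10k+21)^3$, and use planarity of $G[V(C)\cup\{u,v\}]$ to put $\mathcal{P}$ into a consecutive ordering $P_1,\dots,P_q$; here I rely on the standard planar facts that such paths admit a rotational order making every contiguous block a set of consecutive paths, and that the $q-1$ sections $G_{P_i,P_{i+1}}$ together cover $V(G_{P_1,P_q})$. Next I would extract a token-free window. Each vertex of $I_s\cup I_t$ lying in $G_{P_1,P_q}-\{u,v\}$ lies either on the internal vertices of a single path $P_i$ or in the interior $G^o_{P_i,P_{i+1}}$ of a single section; since $|I_s\cup I_t|\le 2k$, these at most $2k$ ``obstructions'' split $\{1,\dots,q\}$ into at most $2k+1$ maximal index blocks $[a,b]$ on which $G_{P_a,P_b}-\{u,v\}$ is empty of tokens. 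Renaming the largest block, I obtain a consecutive subfamily $\mathcal{Q}=\{Q_1,\dots,Q_m\}$ with $G_{Q_1,Q_m}-\{u,v\}$ empty of tokens and $m\ge (q-2k)/(2k+1)\ge \frac{2}{3}\big((10k+21)^3-1\big)$.

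Then I would classify the $m-1$ sections of $\mathcal{Q}$: call $G_{Q_i,Q_{i+1}}$ \emph{$u$-bad} if it contains a vertex other than $v$ not adjacent to $u$, and \emph{$u$-good} otherwise (so a $u$-good section has all its vertices in $N(u)\cup\{u,v\}$), with the symmetric notions for $v$. If at least $5k+10$ sections are $u$-bad and at least $5k+10$ are $v$-bad, then since $m\ge 10k+21$ and $G_{Q_1,Q_m}-\{u,v\}$ is token-free, the natural ordering of $\mathcal{Q}$ witnesses that reduction rule P1 applies, a contradiction. Otherwise, since the roles of $u$ and $v$ are interchangeable, assume fewer than $5k+10$ sections are $u$-bad; these split the $m-1$ sections into at most $5k+10$ runs of consecutive $u$-good sections, so some run has length $\ell\ge (m-5k-10)/(5k+10)$, which is at least $(10k+21)^2-1$ because $m\ge\frac{2}{3}\big((10k+21)^3-1\big)\ge(5k+10)(10k+21)^2$ (a routine polynomial comparison in $10k+21$). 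Such a run $G_{Q_a,Q_{a+1}},\dots,G_{Q_{a+\ell-1},Q_{a+\ell}}$ yields the consecutive subfamily $Q_a,\dots,Q_{a+\ell}$ of size $\ell+1\ge(10k+21)^2$, still inside the token-free window, and since $V(G_{Q_a,Q_{a+\ell}})$ is the union of these $u$-good sections, every vertex of $G_{Q_a,Q_{a+\ell}}$ other than $u$ and $v$ is adjacent to $u$; hence reduction rule P2 applies, again a contradiction. Either way the assumption on $C$ is violated, so $C$ has at most $2k(10k+21)^3$ internally vertex-disjoint $(u,v)$-paths.

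I expect the main obstacle to be the planar-topology groundwork rather than the counting: one must carefully justify that a maximum family of internally vertex-disjoint $(u,v)$-paths can be taken consecutive, that every contiguous block of such an ordering again satisfies the precise definition of \emph{consecutive} (connectedness of $G_{Q_i,Q_{i+1}}-\{u,v\}$ and the no-foreign-internal-vertex condition), and that the sections of a block partition the vertex set of the block's ambient graph. This last point is exactly what makes ``token-free on each section'' imply ``token-free on $G_{Q_1,Q_m}-\{u,v\}$'', and what makes the $u$-good/$v$-bad bookkeeping translate into the neighbourhood hypotheses required by P1 and P2. Once these are in place, the remainder is the pigeonhole argument above, together with the (elementary) verification that $2k(10k+21)^3$ is large enough to feed P1 ($10k+21$ paths and $5k+10$ bad sections of each type) or P2 ($(10k+21)^2$ paths).
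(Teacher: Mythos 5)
Your proof follows the same overall strategy as the paper's: assume too many internally vertex-disjoint $(u,v)$-paths, extract a large token-free consecutive subfamily, then pigeonhole on which sections contain non-neighbours of $u$ (and symmetrically of $v$) to feed either rule P1 or rule P2. The one respect in which you go beyond the paper's write-up is worth noting: the paper only records that no internal vertex of the chosen paths lies in $I_s\cup I_t$, whereas both P1 and P2 require the stronger condition that the entire region $G_{P,P'}-\{u,v\}$ is token-free, including the interiors of the sections. You correctly charge each token of $I_s\cup I_t$ either to the unique path $P_i$ it sits on or to the unique section interior $G^o_{P_i,P_{i+1}}$ it sits in, so that at most $2k$ indices are ``blocked'' and the largest token-free block has the required size; this is the precise version of the argument. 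Your constant-tracking ($m\ge(q-2k)/(2k+1)\ge\tfrac{2}{3}((10k+21)^3-1)$, and the comparison against $(5k+10)(10k+21)^2<\tfrac{1}{2}(10k+21)^3$) also checks out, whereas the paper's ``$r=(10k+21)^3$'' does not quite follow from its own pigeonhole but is harmless since only $r>(5k+10)(10k+21)^2$ is needed. As you anticipate, the remaining groundwork you flag --- that a maximum internally vertex-disjoint family in the (single, by the multi-component rule) component admits a consecutive ordering whose contiguous blocks again satisfy the paper's definition of consecutive, and that the sections of a block cover its ambient graph --- is exactly what the paper leaves implicit; you identify it rather than prove it, but so does the paper, so this is a faithful reconstruction with the counting tightened up.
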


\begin{proof}
Let $\mathcal{P}$ be an inclusion-wise maximum set of internally vertex-disjoint $(u,v)$-paths in $C$ of size $q$. Suppose for a contradiction that $q > 2k(10k+21)^3$. Since $|I_s \cup I_t| \leq 2k$ there exists a subset of consecutive paths $\mathcal{P}'$ of $\mathcal{P}$ of size $r = (10k+21)^3$ such that no internal vertex of these paths belongs to $I_s \cup I_t$. Let $P_1, \ldots, P_r$ be a consecutive ordering of the paths in $\mathcal{P}'$. Since rule $P1$ does not apply we can suppose, w.l.o.g, that there exists at most $s < 5k+10$ integers $j$ such that $G_{P_j,P_{j+1}}$ contains a non-neighbor of $u$. But then there must exist a subset $\mathcal{P''}$ of consecutive paths of $\mathcal{P'}$ of size at least $\frac{r}{s} > (10k+21)^2$ such that $u$ is complete to all the sections defined by the paths in $\mathcal{P''}$. But then rule P2 would apply, a contradiction.
\end{proof}

\subsection{Reducing fans and combs}\label{sec:fans-combs}
In the previous subsection we have proved that there cannot exist too many internally vertex-disjoint paths between two vertices (by Lemma~\ref{lem-bounded-paths} combined with the multi-component reduction rule when needed). It is however not enough to bound the degree of a single vertex of a planar graph. Before proving that the degree can indeed be bounded in the next subsection, let us prove that we can reduce some particular substructures called fans and combs.

\begin{figure}
    \centering
    \includegraphics[scale=0.7]{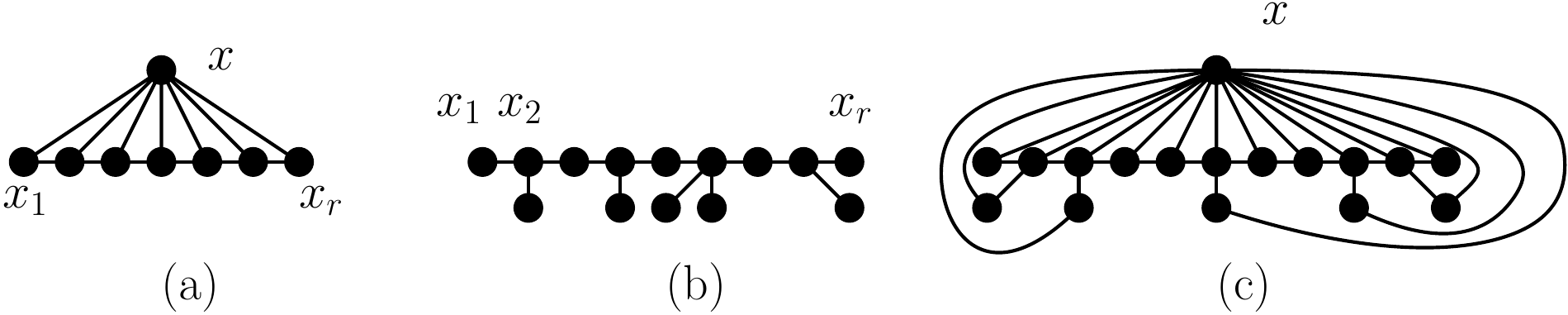}
    \caption{(a) An $r$-fan. (b) An $r$-comb (c) A complete $r$-comb. }
    \label{fig:fancomb}
\end{figure}

An \emph{$r$-fan} centered in $x$ is a graph on $r + 1$ vertices $x,x_1,\ldots,x_r$ where $x_1,\ldots,x_r$ is an induced path and $x$ is complete to $x_1,\ldots,x_r$, i.e., $x$ is adjacent to every vertex in $x_1,\ldots,x_r$. We denote by $P$ the path from $x_1$ to $x_r$ (see Figure~\ref{fig:fancomb} (a)) and we let $r$ be the \emph{size} of the fan. 
We call the \emph{interior} of an $r$-fan $I$ the set of vertices that are in the part of the plane that is enclosed by the curve $xPx$ and that does not contain the external face of the embedding. The \emph{exterior} of the fan is the set of vertices minus the interior of the fan and the vertices of the fan itself. 
We say that a fan is \emph{$I$-empty} if there is no token of the initial and target independent sets in the interior of the fan nor on the vertices of the fan itself. 
We say that a fan is \emph{$B$-empty} if there is no black hole in the interior of the fan nor on the vertices of the fan itself.
We say that an $r$-fan is \emph{safe} if no vertex but the vertices $x,x_1$ and $x_k$ have a neighbor in the exterior of the fan. We say that $x$ is \emph{$I$-complete} to the fan if $x$ is complete to all the vertices of $I$ (and to the vertices of the fan itself).
Assume that none of the previous reduction rules apply, i.e., the galactic rules and the planar rules. Then, we claim that the following reduction rule is safe:

\begin{itemize}
    \item Reduction rule P3: Let $G$ be a (galactic) graph containing a safe $r$-fan which is $I$-empty with $r \ge 3k+2$ and such that $x$ is $I$-complete to the fan. Then we can replace $I \cup \{x_2,\ldots,x_{3k+1}\}$, where $\{x_2,\ldots,x_{3k+1}\}$ denotes the induced path from $x_2$ to $x_{3k+1}$, by a path of length $3k$ whose first vertex is connected to $x_1$, the last vertex is connected to $x_{3k+2}$, and the whole path is connected to $x$.
\end{itemize}

Informally speaking, reduction rule P3 ensures that if a vertex $x$ is connected to many consecutive vertices and these vertices only have neighbors on one side of the plane (and that moreover $x$ is complete to that part of the plane), then we can replace this part by a single path. The reason this rule is safe is that we can hide as many tokens on the path as we can initially hide tokens in the interior plus $P$. Note that the new graph contains less vertices and so the rule can be applied at most $n$ times. 

\begin{lemma}\label{lem:fan}
Assume that none of the previous rules applies.
Then, reduction rule P3 is safe.
\end{lemma}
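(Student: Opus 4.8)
The plan is to show that the instances $(G,k,I_s,I_t)$ and $(G',k,I_s',I_t')$ are equivalent by translating reconfiguration sequences in both directions (that $G'\prec G$ is a routine vertex count, using that the interior is nonempty). Write $Z$ for the part of $G$ that rule~P3 deletes --- the path vertices $x_2,\dots,x_{3k+1}$ together with the interior vertices of the fan drawn between the edges $xx_1$ and $xx_{3k+2}$ --- and write $Q=q_1q_2\cdots q_m$ (with $m\ge 3k$ as specified by the rule, $x_1\sim q_1$, $x_{3k+2}\sim q_m$, and $x$ complete to $Q$) for the replacement path in $G'$. Since the fan is $I$‑empty we have $I_s'=I_s$ and $I_t'=I_t$ verbatim, and $I_s,I_t$ avoid both $Z$ and $Q$. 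The argument rests on a structural step, one elementary path‑reconfiguration lemma, and a simulation.

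The structural step is where I expect the main difficulty. Using that none of R1--R5 applies, I would show that $Z$ (and symmetrically $Q$) contains no black hole and that $x$ is a planet. Indeed any black hole $b$ inside the fan is adjacent to $x$ since $x$ is $I$‑complete, so $x$ is a planet by R1; and by safeness every planet neighbour of $b$ other than $x,x_1,x_r$ lies inside the token‑free region $\{x_1,\dots,x_r\}\cup I$ and hence has empty planetary neighbourhood, so R3 would have absorbed it --- leaving $N(b)\subseteq\{x,x_1,x_r\}$, which is then killed by R2/R1 (or by a short preprocessing step that deletes such a $b$). What I need from this is the dichotomy: $N_G(Z)\setminus Z\subseteq\{x,x_1,x_{3k+2}\}$, the vertex $x$ is complete to $Z$, every vertex of $Z$ holds at most one token (so at most $k$ tokens lie in $Z$), and therefore a token on $x$ forces $Z$ to be token‑free; the analogous statements hold for $Q$ in $G'$. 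I flag that the rule may in fact require the fan to be additionally $B$‑empty, or that one genuinely needs the preprocessing deleting black holes adjacent only to $\{x,x_1,x_r\}$; this is the delicate point.

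The elementary lemma is: if $J$ is a galactic independent set with $\omega_J(x)=0$ and at most $k$ tokens in $Z$, then without ever moving a token out of $Z$ these tokens can be reconfigured to lie on any prescribed ``spread‑out'' subset of the path $x_2,\dots,x_{3k+1}$ --- in particular on a subset of size $|J\cap Z|$ avoiding $\{x_2,x_3\}$ (making the port edge $x_1x_2$ usable) or avoiding $\{x_{3k},x_{3k+1}\}$ (for the other port). This is exactly the path‑reconfiguration argument from the proof of Lemma~\ref{lem:rule_mergingpaths}: since $x$ is empty and complete to $Z$, every token on an interior vertex can be slid to a path vertex after clearing it, and then the at most $k$ tokens are sorted along the $3k$‑vertex path, where $k\ge 2$ leaves enough room to vacate either port's approach. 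The identical statement holds for $Q$.

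Finally the simulation. Given a reconfiguration sequence in $G$, I build one in $G'$ maintaining the invariant that the two configurations agree on $V(G)\setminus Z=V(G')\setminus Q$, have the same number of tokens inside the pocket, and keep all of $Q$'s pocket tokens in the ``middle'' $q_3,\dots,q_{m-2}$. Moves of $G$ disjoint from $Z$ are copied; moves internal to $Z$ are skipped. When a token enters $Z$ in $G$ across $x_1x_2$ or $x_{3k+2}x_{3k+1}$ --- at which instant $x$ and the used port are empty, hence so are their copies in $G'$, and $Q$ holds at most $k-1$ tokens --- I use the lemma to pull $Q$'s tokens to the middle, slide the new token onto $q_1$ (resp.\ $q_m$), and push it inward; a token leaving $Z$ in $G$ is handled symmetrically, first parking one of $Q$'s tokens at $q_1$ (resp.\ $q_m$). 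The reverse direction is identical with $Z$ and $Q$ exchanged, using that the $3k$‑vertex induced path $x_2,\dots,x_{3k+1}$ inside $Z$ has at least the storage‑and‑port capacity of $Q$. Since the external part of every configuration is reproduced exactly and the endpoints avoid the pocket, $I_s$ reaches $I_t$ in $G$ iff $I_s'$ reaches $I_t'$ in $G'$, which is the safety of rule~P3. The secondary, lengthier obstacle is simply checking, in each of the handful of move types above, that the lemma is applicable (the centre and the port really are empty then) and that the single copied boundary slide is legal on both sides.
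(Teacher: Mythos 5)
Your plan follows the same strategy as the paper's proof: first reduce to the case where the fan's interior is black-hole-free, then run a simulation argument in which pocket tokens are always stored on a path of length $3k$, exploiting that $x$ is complete to the whole pocket (so a token entering/leaving $x$ forces the pocket empty) and that $\{x,x_1,x_{3k+2}\}$ separates the pocket from the rest. The simulation step and the invariants you maintain match the paper's essentially verbatim, just phrased as ``build a sequence in $G'$'' rather than ``normalize the sequence in $G$ and then observe it lives in $G'$.''

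The one genuine gap is exactly the point you flag: the disposal of a black hole $b$ with $N(b)\subseteq\{x,x_1,x_r\}$. Rules R1 and R2 do \emph{not} apply to such a $b$ in general --- R1 requires another black hole adjacent to $b$, and R2 requires another black hole $v$ with $N(b)\subseteq N(v)$, neither of which is guaranteed. Nor is deletion obviously safe: a degree-one black hole on $x$ is a legitimate parking spot (tokens can coexist on $b$ and on $x$ since one endpoint is a black hole), so removing it could change the answer. The paper's resolution is different and cleaner: since $b$'s remaining neighbours all lie on the boundary curve $xPx$ of the fan, one can \emph{modify the planar embedding} so that $b$ is drawn in the exterior of the fan. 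The graph is unchanged, only the embedding, so nothing about the instance changes, but the fan becomes $B$-empty with respect to the new embedding and rule P3 then applies. This re-embedding observation (rather than deletion) is the missing ingredient; once you substitute it for ``killed by R2/R1 or preprocessing,'' your argument coincides with the paper's. You should also double-check the case distinction on which subsets of $\{x,x_1,x_r\}$ can actually arise as $N(b)$ under planarity, since the paper uses planarity to rule out a black hole adjacent to all three.

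Minor point: the claim that $G'\prec G$ because ``the interior is nonempty'' needs a little care --- if the interior were empty and the replacement path had the same number of vertices as $x_2,\dots,x_{3k+1}$, nothing would shrink. The paper sidesteps this by concluding ``we can perform the same transformation if we reduce the size of $P$ by $1$ and remove the vertices of $I$,'' i.e.\ the replacement path is strictly shorter, so progress is made regardless of $|I|$. Make sure your replacement path is genuinely shorter than $\{x_2,\dots,x_{3k+1}\}$ or that $|I|\ge 1$, so that the rule always decreases the planet count.
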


\begin{proof}
First, we show that if an $r$-fan, $r \ge 3k+2$, is safe, $I$-empty and such that $x$ is $I$-complete then the fan can be assumed to be $B$-empty. If any vertex of the fan itself is a black hole then, given that the fan is $I$-empty and safe, reduction rule R3 would apply (the absorption rule). Similarly, if any vertex in $I$ is a black hole that is adjacent to $P$ then the vertices of $P$ would be absorbed. It follows that we can only have black holes in $I$ that are adjacent to a non-empty subset of $\{x, x_1, x_r\}$. By the dominated black hole rule (reduction rule R2) and the fact that the fan is $I$-empty, we know that we can have at most one black hole of each ``type'' and each such black hole has either degree one or degree two in $\{x, x_1, x_r\}$; we cannot have a black hole in $I$ that is adjacent to all three vertices in $\{x, x_1, x_r\}$. Hence, we can modify the planar embedding of $G$ such that these black holes are no longer in $I$. So in what follows we assume that the $r$-fan is also $B$-empty. 

Let us denote by $Q$ the induced path $\{x_2\ldots x_{3k+1}\}$. Let us first recall that, since the fan is $I$-empty, $x \cup V(P) \cup I$ does not contain any tokens. 
To prove the lemma, we simply explain how we can transform an initial sequence into a new sequence where (i) the number of tokens in $V(P) \cup I$ is the same at any step and, (ii) all the tokens in $V(P) \cup I$ are actually on $P$. 
The condition holds for the initial and final step since there is no token on $V(P) \cup I$. Let us now prove it iteratively.
If the move in the initial sequence does not concern a vertex in $x \cup V(P) \cup I$ we just make exactly the same move. If we move a token of $V(P) \cup I$ to another vertex of that set, we ignore the move. 

Assume now that a token is entering in $V(P) \cup I$ via $x_1$ or $x_{3k+2}$, then we move it to $x_2$ and then $x_3$ (if it enters through $x_1$) or move it to $x_{3k+1}$ and then $x_{3k}$ (if it enters through $x_{3k+2}$). 
If there are already tokens on $V(P) \setminus \{ x_2,x_{3k+1} \}$, we move these tokens along the induced path $P$ to ensure that there is no vertex on $x_2,x_3,x_{3k-2},x_{3k}$. 
It is indeed possible due to the length of the path and since there are at most $k$ tokens on it. Similarly, if a token has to go from a vertex of $V(P)$ to $x_1$ (or $x_{3k+2}$) in the initial transformation, we move a token along the path to put a token on that vertex. 
Note that all these moves are possible and do not result in adjacent tokens. Indeed, when a token slides to some vertex of $P$ in the initial sequence there is no token on $x$ since $x$ is complete to $V(P) \cup I$. 
Moreover, since $\{x,x_1,x_{3k+2}\}$ separates $V(P) \cup I$ from the rest of the graph, we can freely move the tokens on $P$ at any point in the sequence (as long as there is no token on $x$). Finally, if a tokens enters or leaves $x$ then there is no other token on $V(P) \cup I$ in the initial sequence (since $V(P) \cup I$ is complete to $x$ and there are no black holes in $V(P) \cup I$) and thus in our reduced sequence by assumption. 
The conclusion simply follows by remarking that we can perform the same transformation if we reduce the size of $P$ by $1$ and remove the vertices of $I$ (since no token ever reaches a vertex of $I$ in the modified sequence), which completes the proof. 
\end{proof}

A case which is not handled by Lemma~\ref{lem:fan} is the case where the vertices of the path can now have neighbors on both sides of the plane separated by the fan. We prove that if we have a large enough such structure satisfying certain properties then we can also reduce it. We need a few additional definitions. 
An \emph{$r$-comb} is an induced path $P=x_1,\ldots,x_r$ of length $r$ plus possible pendant (degree-one) vertices on each vertex of the path where $x_2$ and $x_{r-1}$ have degree at least $3$ (see Figure~\ref{fig:fancomb} (b)).
A \emph{complete $r$-comb} denotes the graph consisting of an $r$-comb plus an additional vertex $x$ that is complete to the  $r$-comb (see Figure~\ref{fig:fancomb} (c)).
A complete $r$-comb is \emph{one-sided} if all the edges from $x$ to $N(P)$ (which are not in the interior of the fan $x \cup V(P)$) pass through the same side of the path (the $x_1$ side or the $x_r$ side). One can easily note that every complete $r$-comb contains a one-sided complete $r$-comb of size roughly half of the initial size (see Figure~\ref{fig:AB}).

\begin{figure}
    \centering
    \includegraphics[scale=0.6]{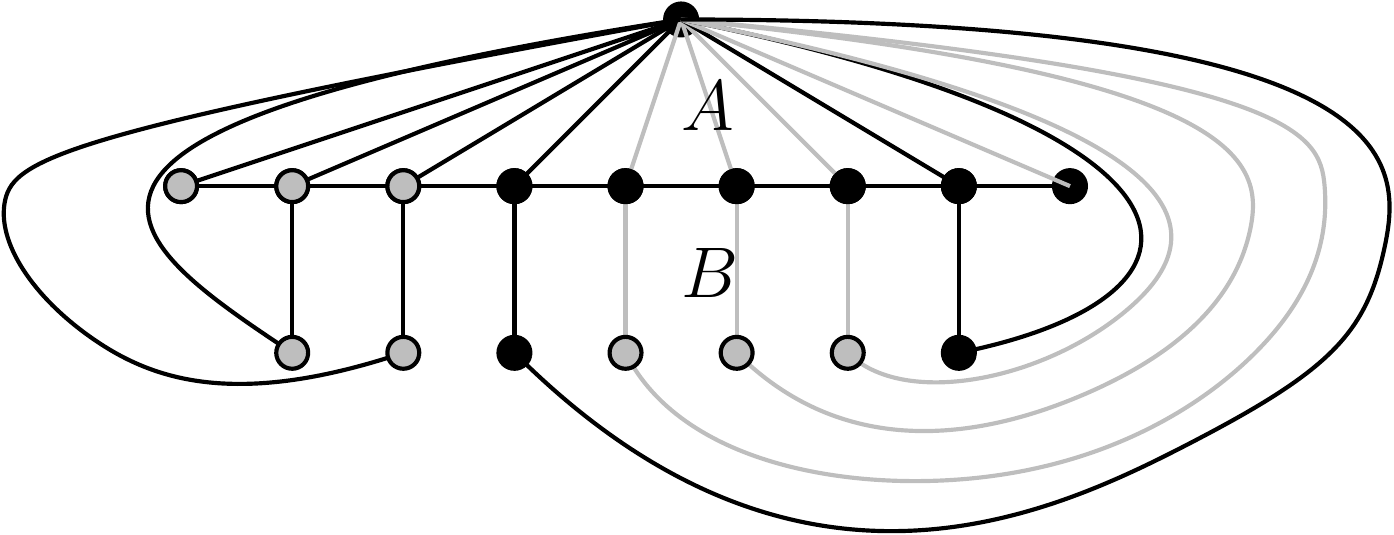}
    \caption{A one-sided complete comb in a complete comb together with the regions $A$ and $B$ of the plane.}
    \label{fig:AB}
\end{figure}

Let $C$ be a one-sided complete $r$-comb.
Let us denote by $y$ and $z$ the neighbors of respectively $x_2$ and $x_{r-1}$ that are in the same faces of respectively $x_3$ and $x_{r-2}$. Let us denote by
$H$ the graph consisting of the edges $xy$, $xz$, $x_2y$, $x_{r-1}z$, and the path $P$. The subgraph $H$ contains two faces containing $V(P) \setminus \{ x_1,x_r \}$. Let us denote by $A$ and $B$ the regions of the plane corresponding to the interior of these faces. The \emph{interior} of the $r$-comb is the set of vertices in the regions $A$ and $B$. 
We say that the complete $r$-comb is \emph{$I$-empty} if there is no token of the initial or target independent set in the interior of the $r$-comb nor on the vertices of the comb itself nor in $N(x_2)$ nor in $N(x_{r-1})$. 
We say that the complete $r$-comb is \emph{$B$-empty} if there is no black hole in the interior of the fan nor on the vertices of the comb itself.

\begin{itemize}
    \item Reduction rule P4: Assume that $G$ contains an $I$-empty one-sided complete $r$-comb with $r \ge (3k+2)^2$. Let us denote by $p_{i_1},\ldots,p_{i_d}$ the vertices of $P$ having at least one neighbor on the $B$ side. Then, we can replace $I \cup \{p_{i_1 + 1},\ldots,p_{i_d - 1}\}$, where $\{p_{i_1 + 1},\ldots,p_{i_d - 1}\}$ denotes the induced path from $p_{i_1 + 1}$ to $p_{i_d - 1}$, by a path of length at most $3k + 1$ whose first vertex is connected to $p_{i_1}$, the last vertex is connected to $p_{i_d}$ and the whole path is connected to $x$.
\end{itemize}

\begin{lemma}\label{lem:comb}
Assume that none of the previous rules applies.
Then, reduction rule P4 is safe.
\end{lemma}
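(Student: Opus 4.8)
The plan is to mimic the structure of the proof of Lemma~\ref{lem:fan}, but working simultaneously with the two regions $A$ and $B$ of the plane cut out by the skeleton $H$. First I would reduce to the case where the one-sided complete $r$-comb is also $B$-empty: since the comb is $I$-empty and one-sided, any black hole on a comb vertex or on a vertex of the interior adjacent to $P$ would trigger the absorption rule R3, and the dominated black hole rule R2 together with $I$-emptiness lets us keep at most one black hole adjacent to each admissible subset of the ``boundary'' vertices $\{x, x_1, x_r, y, z\}$; since no such black hole can be adjacent to all of $x$ together with a vertex of $P$ (the path $P$ plus the edges $xy, xz$ would force a crossing), we can perturb the embedding to push these black holes outside the interior. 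So from now on the interior, $N(x_2)$, $N(x_{r-1})$, the comb vertices, and $I$ all carry no token. Next, note that the vertices of $P$ having a neighbor on the $B$ side, call them $p_{i_1}, \ldots, p_{i_d}$, satisfy $d \le 3k+1$: on the $A$ side everything behaves like a fan (all edges from $x$ to $N(P)$ on the $x_1$-side lie in region $A$), so if there were more than $3k+1$ consecutive $B$-side rays one could already apply rule P3 (the fan rule, Lemma~\ref{lem:fan}) to the sub-comb between two far-apart $B$-side vertices after reducing the $A$ side; this is why the bound $r \ge (3k+2)^2$ suffices to either find a long $A$-only stretch reducible by P3, or to conclude $d$ is small.

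Assuming $d \le 3k+1$, I would then run the token-hiding argument. Let $Q$ denote the induced path $\{p_{i_1+1}, \ldots, p_{i_d-1}\}$ and let $S = I \cup V(Q) \cup N(x_2) \cup N(x_{r-1})$ — the whole ``swallowable'' region — which is separated from the rest of $G$ by the cutset $\{x, p_{i_1}, p_{i_d}\}$ (here one uses that the comb is one-sided so that all $B$-side neighbors of internal $p_j$'s are confined to region $B$, and that $x$ is complete to $V(P)$). Starting from an arbitrary reconfiguration sequence in $G$, I build a sequence in the reduced graph $G'$ maintaining the invariant that (i) the number of tokens inside $S$ equals the number on $V(P) \cap (\text{reduced path})$ at every step, and (ii) those tokens actually sit on the length-$\le 3k+1$ replacement path. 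The invariant holds trivially at the endpoints since $S$ is token-free there. Moves not touching $S \cup \{x\}$ are copied verbatim; moves internal to $S$ are dropped; a token entering $S$ through $p_{i_1}$ or $p_{i_d}$ is placed on the corresponding end of the replacement path (first shuffling the $\le k$ tokens already on that path, using its length, so the entry vertex and its replacement-path neighbor are free), and symmetrically for a token leaving $S$; and a token entering or leaving $x$ forces $S$ to be empty (since $V(Q) \cup I$ is complete to $x$ and $B$-empty, so no token can sit there while $x$ is occupied), so nothing special is needed. As in Lemma~\ref{lem:fan}, all intermediate sets remain independent because $x$ being complete to the relevant region guarantees $x$ is empty whenever we slide along the replacement path. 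Finally the same transformation works after deleting the vertices of $I$ and shortening the path, since no token ever lands on $I$ in the modified sequence.

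The main obstacle I anticipate is the bookkeeping around the two-sided structure: unlike the fan case, the vertices $p_{i_1}, \ldots, p_{i_d}$ are interspersed with non-$B$-side vertices of $P$, and one must be careful that the replacement path's endpoints are attached to $p_{i_1}$ and $p_{i_d}$ (not to $x_1, x_r$), that the separator $\{x, p_{i_1}, p_{i_d}\}$ genuinely isolates the region to be contracted — which requires checking that internal $B$-side rays cannot ``escape'' around $p_{i_1}$ or $p_{i_d}$, a consequence of the one-sidedness and planarity — and that the $A$-side neighbors of the internal $p_j$'s (which are all adjacent to $x$ anyway) do not cause new conflicts when we slide tokens onto the replacement path. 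The other delicate point is making precise the dichotomy ``either P3 applies or $d \le 3k+1$,'' which is really where the quadratic size bound $(3k+2)^2$ is consumed; I would state this as a short preliminary claim before entering the main hiding argument.
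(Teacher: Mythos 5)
The preliminary claim that $d \le 3k + 1$ is false, and it is in fact the opposite of what the paper's own proof establishes, namely $d \ge 3k + 2$. The correct direction of the argument is this: if some $3k + 1$ consecutive vertices of $P$ had only $A$-side neighbors, those vertices (together with the two bordering path vertices and $x$) would form an $I$-empty safe fan to which rule P3 applies, contradicting the standing assumption that no earlier rule fires. Hence every window of $3k+1$ consecutive spine vertices must contain at least one vertex with a $B$-side neighbor, and since $r \ge (3k+2)^2$ this forces $d \ge 3k + 2$. Your reasoning has this inverted: you write that ``more than $3k+1$ consecutive $B$-side rays'' would trigger P3, but a safe fan is precisely a stretch of spine vertices with \emph{no} neighbors outside the fan's interior, i.e.\ a stretch \emph{without} $B$-side neighbors; a dense run of $B$-side vertices is exactly where P3 cannot fire. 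The conclusion ``$d$ is small'' therefore goes the wrong way, and the rest of your token-hiding argument, which is scoped to a length-$(3k+1)$ replacement path absorbing what you believe are at most $3k+1$ $B$-side vertices, does not stand: $d$ can be as large as $r$.

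Because $d$ is large rather than small, the paper uses $d \ge 3k+2$ for a two-stage reduction: first collapse the interior of the comb and $\{p_{i_1+1},\ldots,p_{i_d-1}\}$ by a token-hiding argument in the style of Lemma~\ref{lem:fan}, after which what remains between $p_{i_1}$ and $p_{i_d}$ is a $d$-fan; then, since $d \ge 3k+2$, apply rule P3 to shrink that fan to a path of length at most $3k+1$. This is where the quadratic bound $r \ge (3k+2)^2$ is spent: one factor forces the $B$-side vertices to be dense (giving $d \ge 3k+2$), the other is consumed by the concluding fan reduction. Two secondary issues in your write-up: including $N(x_2)$ and $N(x_{r-1})$ wholesale in the removed region $S$ is problematic, since these sets contain $x$, $x_1$, $x_3$, $x_{r-2}$, $x_r$ which must not be deleted; and the claimed separator $\{x, p_{i_1}, p_{i_d}\}$ overlooks that spine vertices $p_j$ with $j < i_1$ or $j > i_d$ are also adjacent to $A$-side interior vertices (e.g.\ pendants) that are being removed.
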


\begin{proof}
Similarly to the proof of Lemma~\ref{lem:fan}, we can show that all black holes in the interior of the comb (if they exist) can be moved to the exterior. So in what follows we assume that the $r$-comb is also $B$-empty. 

Let $Q=x_3,\ldots,x_{r-2}$.
If there exists $3k+1$ consecutive vertices $Q'$ of $Q$ which only have neighbors on the $A$ part of the plane (that is on the finite part of the plane defined by the face $xQ'x$), then we can apply Lemma~\ref{lem:fan} to reduce the graph, a contradiction. 
So we can assume that for every consecutive $3k + 1$ vertices $Q'$ of the path $Q$ there is at least one neighbor of $Q'$ on the $B$ part of the plane. 
So if a subpath $Q'$ of $Q$ has length at least $(3k+2)^2$, then there exists at least $3k+2$ distinct vertices of $Q'$ that have neighbors in the $B$ part of the plane. 
Let us denote by $q_{i_1},\ldots,q_{i_d}$ the vertices of $Q'$ with a neighbor on the $B$ side. We have $d \ge 3k + 2$.  

Now we claim that we can replace $q_{i_1+1}\ldots,q_{i_d-1}$ as well as the interior of the complete $r$-comb with a path $Q_{i_1,i_d}$  of length at most $3k + 1$. The proof exactly follows the scheme of Lemma~\ref{lem:comb}. We end up with a graph where the structure between $i_1$ and $i_d$ is a $d$-fan which can be reduced to a size at most at most $3k+1$ by Lemma~\ref{lem:comb}, which completes the proof.
\end{proof}

We now have all the ingredients to reduce subdivided combs.
A \emph{subdivided complete $r$-comb} is a complete $r$-comb where every edge of the comb might be subdivided except the edges incident to the vertex $x$. When edges $x_ix_{i+1}$ of the path $P$ are subdivided we denote the subpath by $P_i$ and still require that the union of the paths $P_1,\ldots,P_{r-1} = P$ has to be an induced path. We call the path $P_i$ from $x_i$ to $x_{i+1}$ the \emph{$i$-th ray}. 

\begin{remark}\label{rem:nomberneighbors}
Assume that there does not exist more than $\ell$ internally vertex-disjoint paths between any pair of vertices of a planar graph $G$. Then:
\begin{itemize}
\item Every vertex in $G - x$ is adjacent to at most $\ell$ rays of a subdivided comb.
\item Every vertex in $G - x$ contains a vertex of at most $\ell^r$ rays of a subdivided comb in its $r$-th neighborhood (neighborhood at distance at most $r$).
\end{itemize}
\end{remark}

\begin{proof}
The proof follows from the fact that otherwise we can construct more than $\ell$ vertex disjoint paths from that vertex to $x$ which is a contradiction to our assumption.
\end{proof}

\begin{lemma}\label{lem:p-gcomb}
Let $G$ be a planar (galactic) graph and let $\ell$ be the maximum number of internally vertex-disjoint paths between any pairs of vertices of $G$. If $G$ contains a subdivided complete $r$-comb $C$, where $r \ge 18(3k+2)^2 k \cdot \ell^3$, then $G$ can be reduced.
\end{lemma}

\begin{proof}
We prove that we can find a sub-comb $C'$ of the initial subdivided comb such that:
\begin{enumerate}
    \item there is no vertex in $N^2(C')$ in the initial and target independent sets; and
    \item if there exists a sequence, there exists a sequence such that at each step there is at most one vertex in $N^2(C)$ and if a token is in $N^2(C)$ at step $i$ then there are no tokens in $N^2(C)$ at step $i-1$; and
    \item at each step, all the tokens in $C \cup N(C)$ are on the subdivided path of the comb $C'$. We will denote by $Y$ the set $C \cup N(C)$.
\end{enumerate}

Let us first prove that (1) holds for a specific sub-comb. Note that if $x$ belongs to the initial or target independent sets then we can move the token from $x$ to one of its neighbors on the subdivided comb safely. Indeed, any other vertex of the initial and target independent set shares at most $\ell-1$ neighbors with $x$ and since $x$ has at more than $2k(\ell-1)$ neighbors on the comb, there exists one vertex whose only neighbor in the starting and target independent set is $x$.

By Remark~\ref{rem:nomberneighbors}, every vertex of $I_s \cup I_t$ at distance at most $2$ from the subdivided comb has neighbors at distance $2$ in at most $\ell^2$ rays of the comb. So, since the number of rays is at least $18(3k+2)^2 k \cdot \ell^3$, there exists a sub-comb with at least $18(3k+2)^2 k \ell$ rays such that no vertex of the initial or target independent set is at distance at most $2$ from it in $G - x$. So (2) and (3) hold at the initial and final steps. Let us denote by $C'$ the resulting subdivided comb.

We cut the path of the subdivided comb $C'$ into \emph{slices} $S_i$, each slice consisting of $2(3k+2)^2$ consecutive rays. By Lemma~\ref{lem:comb} (and since every complete comb contains a one-sided complete comb of half the size), for each slice $S_i$ at least one of the following holds:
\begin{itemize}
    \item the vertex $x$ is non adjacent to a vertex of $S_i$ (i.e., one of the edges of the subdivided comb is really subdivided), or
    \item the vertex $x$ is non adjacent to a vertex of the interior of the fan $xS_ix$, or
    \item the vertex $x$ is non adjacent to a vertex of $N(S_i)$.
\end{itemize} 
The above is true because otherwise we would have a complete one-sided $(3k+2)^2$-comb on which we could apply reduction rule P4.

Since $C'$ has size at least $18(3k+2)^2 k \ell$ there are at least $3k\ell$ slices for which we have exactly the same behavior.
We now have all the ingredients to prove that there exists a transformation that satisfy (1), (2) and (3).
All along the transformation, we will keep the tokens on $Y$ in the path $P$ of the subdivided comb $C'$.

Assume that in the initial transformation some token has to slide to a vertex $z$ of $Y$. Assume first that $z \ne x$ and let $zz_1z_2$ be a path from $z$ to $C - x$. By assumption, $z$ has at most $\ell$ neighbors and so is non-adjacent to at least $3k$ different slices. So we can move tokens already in the comb in order to put them on slices not adjacent to $z_1$. Then we can simply move $z$ to $z_1$ and then $z_1$ to $z_2$.
Similarly if a token on $z_1$ is leaving $N(Y)$ in the initial transformation, we can simply move tokens on the path in order to move a token on $P$ to $z$ and then to the rest of the graph.

Assume now that a token is sliding to $x$. In each slice, there is a vertex of the slice which has a neighbor which is not a neighbor of $x$. Moreover, every vertex can be incident to at most $\ell$ slices. So, since there are at most $k$ tokens on the path, we can move them on vertices which have distinct neighbors in $N(C) \setminus N(x)$. We then move them to $N(C) \setminus N(x)$ in order to free the path of $C$ with tokens. Then, the token slides to $x$ and then is projected on the path (which is possible because of the number of neighbors of $x$ on $C$). Finally, we put back the tokens on $N(C) \setminus N(x)$ on $C$. 
A similar operation can be performed if a vertex is leaving from $x$ in the original sequence.

The conclusion simply follows by remarking that we can perform the same transformation if we reduce the size of $P$ by $1$, which completes the proof. 
\end{proof}

\subsection{Reducing high degree vertices}\label{sec:planar-degree}
It remains to prove that the previous reduction rules (galactic and planar) ensure that the (reduced) planar (galactic) graph has bounded degree. Let us first prove the following:

\begin{lemma}\label{lem:extractcomb}
Let $T$ be a tree with $r$ prescribed vertices $X$ and maximum degree at most $\ell$. Then there exists a subdivided comb of $T$ containing at least $\lfloor \log_{\ell+1}(|X|) \rfloor$ vertices of $X$. 
\end{lemma}

\begin{proof}
Let $f$ be a leaf of $T$ and let $P$ be a path defined starting from $f$ and using the following rule: If the current vertex $y$ is a leaf distinct from $f$, the path is over. Otherwise, at each step $i$ we have one or more vertices $z_i$ that we can add to $P$. We pick the $z_i$ such that the component of $T - z_i$ containing $f$ has the least number of vertices from $X$. We use $C^i_f$ to denote the component of $f$ at each step $i$.  

Note that, since $f$ is a leaf, initially $T - f$ only contains one component with at least $\ell-1$ vertices of $X$. Between steps $i$ and $i+1$, either the number of vertices of $X$ in $C^{i+1}_f$ is not modified or it decreases. Assume that it decreases. We denote by $z_i$ and $z_{i+1}$ the vertices added at steps $i$ and $i+1$, respectively. We denote by $r_i$ and $r_{i+1}$ the number of vertices of $X$ in $C^i_f$ and $C^{i+1}_f$, respectively. Finally, we let $P_{i+1}$ denote the current path. We claim that:
\begin{itemize}
    \item Either $z_i$ is in $X$ or there exists a path from $z_i$ to a vertex of $X$ in $T - P_i$; and
    \item $r_{i+1} \ge r_i/(\ell-1)-1$.
\end{itemize}
The first points follows from the fact that $X$ is decreasing so there must be a vertex of $X$ that is not in $C^{i+1}_f$. The second point is due to the the fact that the maximum degree of the tree is $\ell$ and in the worst case the vertices of $X$ are balanced (divided equally) among the components.

Now, since we started with $r$ vertices in $X$, we are guaranteed that when the procedure terminates we have a subdivided comb of logarithmic size, which completes the proof. 
\end{proof}

\begin{lemma}
Let $G$ be a galactic planar graph. If there exists a planet vertex $v \in V(G)$ whose planetary degree is greater than $(\ell+1)^{18(3k+2)^2 k \cdot \ell^3}$ then $G$ can be reduced. 
\end{lemma}

\begin{proof}
First we note that, for every vertex $v$, we can assume that there is a bounded number of connected components attached to $v$ by Corollary~\ref{cor-bounded-bag-degree}.

Assume now that there exists a vertex $v$ that has $r$ neighbors $X$ in a component $C$ of $G - v$ and let $T$ be a Steiner tree containing all the vertices of $X$. Note that the maximum degree of $T$ is bounded by $\ell:=f(k)$; as otherwise, we have a vertex $u$ of $T$ such that there exists too many paths between $u$ and $v$ in $C$. In this case we can apply Lemma~\ref{lem-bounded-paths}. 

So $T$ contains a subdivided comb of logarithmic size by Lemma~\ref{lem:extractcomb}, i.e. a subdivided comb with at least $18(3k+2)^2 k \cdot \ell^3$ rays. And then, by Lemma~\ref{lem:p-gcomb}, the graph can be reduced.
\end{proof}

\section{Chordal graphs of bounded clique size}\label{sec-chordal}
One of the main reasons why the proof of the multi-component reduction rule is so ``complicated'' is due to the fact that several tokens might belong to the cut set $X$ which makes the analysis hard (in particular, we might have several tokens on $X$ at the same time). As we shall see, once we restrict ourselves to chordal graphs of bounded clique size the proof of tractability becomes ``simpler''. We also note that the main reason why our result does not generalize to graphs of bounded treewidth is because we cannot prove the equivalent of Lemmas~\ref{lem:numberbag} and~\ref{lem-chordal-reduction-application} for such graphs. 

We start with a few definitions. A graph $G$ is chordal if no cycle of length greater than $3$ is induced. There are several related characterizations
of chordal graphs and we refer the reader to~\cite{10.1007/978-1-4613-8369-7_1} for more details.

A {\em tree decomposition} of a graph $G$ is a pair
$\mathcal{T} = (T, \chi)$, where $T$ is a tree and $\chi$ is
a mapping that assigns to each node $i \in V(T)$ a vertex subset
$B_i$ (called a {\em bag}) such that: 
\begin{itemize}
    \item $\bigcup_{i \in V(T)}{B_i} = V(G)$,
    \item for every edge $uv \in E(G)$, there exists a
node $i \in V(T)$ such that the bag $\chi(i) = B_i$ contains both $u$ and $v$, and
    \item for every $v \in V(G)$, the set $\{i \in V(T) \mid v \in B_i\}$
forms a connected subgraph (subtree) of $T$.
\end{itemize}
The {\em width} of any tree decomposition $\mathcal{T}$ is equal
to $\max_{i \in V(T)}|B_i| - 1$.
The {\em treewidth} of a graph $G$, $t = tw(G)$, is the minimum width of a tree decomposition of $G$. For any graph of treewidth $t$, we can compute a tree decomposition of width $t$ in $f(t) \cdot n^{O(1)}$ time~\cite{K94}. A tree decomposition is \emph{compact} if for every two adjacent bags $B$ and $B'$ we have $B \not\subseteq B'$ and $B' \not\subseteq B$. It is known that for every tree decomposition $\mathcal{T}$ of $G$, there exists a compact tree decomposition $\mathcal{T}'$ of $G$ such that each bag of $\mathcal{T}'$ is equal to some bag of $\mathcal{T}$. $\mathcal{T}'$ can be computed from $\mathcal{T}$ in polynomial time. A graph $G$ is chordal if and only if it has a compact tree decomposition $\mathcal{T}$ such that every bag induces a clique. We call such a decomposition a \emph{clique tree} of $G$. Moreover, if every clique of $G$ has size bounded by $\omega = \omega(G)$ then the size of the bags is also bounded by $\omega$. 

\begin{figure}
    \centering
    \includegraphics[scale=0.8]{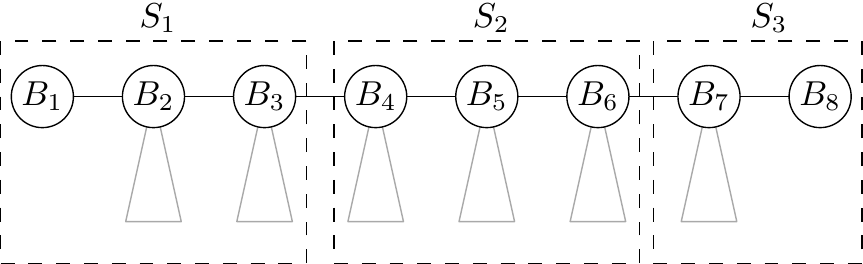}
    \caption{A tree decomposition $\mathcal{T}$ rooted at path $B_1, \ldots, B_8$ along with a $(3,2)$ block partition $\mathcal{T}$. Subtrees rooted at neighbors of nodes $B_2, \ldots, B_{7}$ are included in the rooted decomposition, whereas those rooted at neighbors of $B_1$ and $B_8$ are not.}
    \label{fig:block_partition}
\end{figure}

Given a tree decomposition $\mathcal{T}$, we let $\mathscr{P} = B_p$, $\ldots$, $B_q$ denote a path in the tree decomposition. We use $\mathcal{T}_\mathscr{P}$ to denote the \emph{tree decomposition rooted at $\mathscr{P}$}, where $\mathcal{T}_\mathscr{P}$ is obtained from $\mathcal{T}$ by deleting every subtree rooted at a neighbor of either $B_p$ or $B_q$ (except $B_{p + 1}$ and $B_{q - 1}$). Given $\mathcal{T}$, $\mathscr{P}$, and a tree decomposition rooted at $\mathscr{P}$ denoted by $\mathcal{T}_\mathscr{P}$, we define an \emph{$(\alpha,\beta)$-block partition of $\mathcal{T}_\mathscr{P}$}, denoted by $\mathcal{B}_{\alpha, \beta}(\mathcal{T}_\mathscr{P})$, as follows. We let $\mathcal{B}_{\alpha, \beta}(\mathcal{T}_\mathscr{P}) = \{\mathcal{S}_1, \mathcal{S}_2, \ldots, \mathcal{S}_\alpha\}$ where each $\mathcal{S}_i$ is a subtree (which we also call a \emph{section}) of $\mathcal{T}_\mathscr{P}$ containing at least $\beta$ distinct bags of $\mathscr{P}$. In other words, $\mathcal{S}_1$ contains the first $\frac{|\mathscr{P}|}{\alpha}$ bags of $\mathscr{P}$ along with all the subtrees attached to them except for the subtree rooted at $B_{p + \frac{|\mathscr{P}|}{\alpha}}$. $\mathcal{S}_2$ contains the bags $B_{p + \frac{|\mathscr{P}|}{\alpha}}$, $\ldots$, $B_{p + 2\frac{|\mathscr{P}|}{\alpha} - 1}$ along with all the subtrees attached to them except for the subtree rooted at $B_{p + 2\frac{|\mathscr{P}|}{\alpha}}$. We proceed as described until the last section which might contain less than $\frac{|\mathscr{P}|}{\alpha}$ bags but must contain at least $\beta$ bags of $\mathscr{P}$. We note that such a block partition is possible whenever $\mathscr{P}$ contains $\alpha \cdot \beta$ bags or more. A rooted tree decomposition along with a block partition are illustrated in Figure \ref{fig:block_partition}.

We start by showing that, in a chordal graph, one can easily modify an independent set (via token sliding) in order to be sure that every vertex of the independent set belongs to a bounded number of bags of the tree decomposition. 

\begin{lemma}\label{lem:numberbag}
Let $G$ be a chordal graph, let $\mathcal{T}$ be a compact tree decomposition/clique tree of $G$ of width at most $\omega$, and let $I$ be an independent set of $G$ of size $k$. Then, we can find a transformation from $I$ to $I'$ such that each vertex of $I'$ belongs to at most $(3\omega+3)k^2$ bags of $\mathcal{T}$.
\end{lemma}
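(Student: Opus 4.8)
The plan is to show that if some vertex $v \in I$ lies in too many bags of the clique tree $\mathcal{T}$, then we can slide the token on $v$ somewhere ``safer'' without ever breaking independence, and repeat. Since $\mathcal{T}$ is compact and each bag induces a clique of size at most $\omega$, the set of bags containing a fixed vertex $v$ forms a subtree $T_v$ of $\mathcal{T}$; let $N_v$ be the number of bags in $T_v$. The key observation is that a token at $v$ is ``blocked'' from moving along an edge $vw$ only by other tokens of $I$, and each other token $u \in I \setminus \{v\}$ occupies its own subtree $T_u$; since $u$ is adjacent to $v$ only if $T_u$ and $T_v$ share a bag, and $T_u \cap T_v$ is itself a subtree, the obstruction caused by $u$ is ``localized'' in $T_v$. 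The idea is to count how much of $T_v$ can be ``poisoned'' by the other $k-1$ tokens together with their neighborhoods, and conclude that if $N_v$ exceeds roughly $(3\omega+3)k$ then there is a free place to push the token, hence iterating over all $k$ tokens gives the bound $(3\omega+3)k^2$ in the statement (one factor of $k$ for the number of tokens to fix, one factor of $k$ hidden in the per-token budget, with $\omega$ accounting for bag sizes).

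First I would set up the reduction to a single token: process the tokens of $I$ one at a time, maintaining the invariant that already-processed tokens sit in few bags, and at each step argue we can move the current ``bad'' token $v$ (one lying in more than $(3\omega+3)k^2$ bags — or, more convenient for the induction, more than some per-token threshold $\approx (3\omega+3)k$) to a vertex lying in few bags, by a sequence of slides. Concretely, root $T_v$ at an arbitrary bag and walk toward a leaf bag $B$ of $T_v$ that is ``deep''; because $N_v$ is huge compared to the number of bags that the other $k-1$ tokens and their clique-neighbors can touch inside $T_v$, there is a leaf bag $B$ of $T_v$ such that the path in $T_v$ from $v$'s current bag to $B$ avoids all bags containing a token or a neighbor of a token other than $v$. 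Then, descending along that path bag by bag, at each adjacent pair of bags $B_i \supset \ni u_i$, $B_{i+1} \ni u_{i+1}$ we can slide the token from $u_i$ to some vertex in $B_{i+1}$ (the bags induce cliques and overlap, so such a slide is along an edge), staying independent because that path of bags was chosen to be clean. We end with the token sitting in a bag $B$ all of whose internal vertices appear in very few bags; in particular we can place it on a vertex that is ``private'' to $B$ or to the subtree hanging off $B$, which lies in at most $(3\omega+3)k^2$ bags.

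The counting step is where the precise constant $(3\omega+3)k^2$ comes from, and it is the main place to be careful. Each of the $\leq k-1$ other tokens $u$ contributes: the subtree $T_u \cap T_v$ of ``bad'' bags where $u$ itself lives, plus, for each vertex $w \in N_G(u)$, the subtree $T_w \cap T_v$; but $w \in N_G(u)$ forces $w$ to share a bag with $u$, and by the helly/subtree structure of clique trees the union of all these ``$u$-poisoned'' bags within $T_v$ is contained in a subtree of $\mathcal{T}$ of controlled size — essentially bounded by $(3\omega+3)$ times the number of bags of $\mathscr{P}$-type branching, which is why the $3\omega+3$ factor appears (each relevant bag has $\leq \omega$ vertices, each such vertex spreads its own subtree, and the branching is absorbed into the constant $3$). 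Summing over the $k-1$ tokens and requiring $N_v$ to exceed this total forces the existence of a clean leaf direction, and the remaining $(3\omega+3)k^2$ is exactly the worst-case number of bags a ``private'' vertex of the destination bag can see.

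The hardest part, and the part I would spend the most care on, is making precise the claim that the set of bags of $T_v$ that are ``contaminated'' by the other tokens and their neighborhoods is contained in a subtree (or bounded union of subtrees) of bounded size; this is a Helly-type argument on subtrees of a tree, using that in a clique tree the bags containing a fixed vertex form a subtree and adjacent vertices' subtrees intersect. The secondary subtlety is ensuring each individual slide along the chosen clean path is legal — i.e.\ that consecutive clean bags really do share a vertex we can route the token through, and that no \emph{other} token moves during this subroutine so the ``clean'' property is preserved throughout — which follows since we only slide the single token currently being processed and the path was chosen to avoid all other tokens and their neighbors. Once the single-token move is established, the lemma follows by iterating: repeatedly pick a token in more than $(3\omega+3)k^2$ bags, apply the move (this strictly decreases a suitable potential, e.g.\ the number of bad tokens or total ``badness''), and terminate with the desired $I'$.
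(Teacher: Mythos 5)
Your high-level plan matches the paper's: iterate a local operation that pushes a token $v$ lying in too many bags of $\mathcal{T}$ onto a vertex lying in few bags, then stop once every token satisfies the bound. But the mechanism you propose --- escape towards a leaf of $T_v$ along a path of bags that avoids all tokens and their neighbourhoods, justified by counting how much of $T_v$ the other tokens can ``poison'' --- contains a real gap, and you put your finger on it yourself. The poisoned portion of $T_v$ contributed by a single other token $u$ is $\bigcup_{w\in N(u)\cap N(v)} (T_w\cap T_v)$ (note $T_u\cap T_v=\emptyset$ since $u\not\sim v$). Chordality does give you that $N(u)\cap N(v)$ is a clique, hence of size at most $\omega$, but nothing bounds $|T_w\cap T_v|$: a single common neighbour $w$ can co-occur with $v$ in an unbounded number of consecutive bags, so $T_w\cap T_v$ can be a long path and even one other token can poison essentially all of $T_v$. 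Hence ``$N_v$ is huge compared to the contamination'' never kicks in, and if $T_v$ has only two leaves (a long path of bags) your escape strategy has only two directions, both of which may be blocked --- no amount of counting in $\omega$ and $k$ rescues this.

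The paper sidesteps the contamination bookkeeping entirely by a case analysis on the \emph{structure} of $T_v$: (1) if $T_v$ contains a leaf bag of $T$, slide $v$ onto a vertex private to that leaf bag (compactness guarantees one exists, and any token adjacent to it would share that bag, hence be adjacent to $v$ --- impossible); (2) if $T_v$ has at least $k$ leaves, one of the at least $k$ branches of $T$ hanging beyond a leaf of $T_v$ contains no token, and $v$ is walked out along it to a leaf of $T$; (3) otherwise $T_v$ has fewer than $k$ leaves and no leaf of $T$, so if $|T_v|>(3\omega+3)k^2$ there is a long run of at least $3\omega+3$ consecutive bags of $T_v$ each of degree two in $T$, and a vertex $w$ appearing only strictly inside this run has $T_w\subsetneq T_v$ with $|T_w|\le 3\omega+1$; again safety is automatic (a token adjacent to $w$ would lie in a bag of $T_v$ and be adjacent to $v$). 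In cases (1) and (3) the key is to find a \emph{local} destination $w$ with $T_w\subseteq T_v$, which makes the slide safe for free, rather than escaping out a leaf --- this is precisely the move your proposal lacks, and it is what handles the long-path-of-bags scenario where your leaf-escape and your counting both fail.
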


\begin{proof}
Let $v$ be a vertex of the independent set $I$, $\mathcal{T}$ be the clique tree of $G$, and $T_v$ be the subtree of $T$ in which $v$ appears. Let $L$ denote the set of leaves in $T$, and $L_v$ denote the set of leaves in $T_v$.

If $T_v$ contains a leaf of $T$, i.e., if $L \cap L_v \neq \emptyset$, then we can slide $v$ to a vertex of this leaf bag which only appears in that bag (since the decomposition is compact). 
We now assume that $T_v$ contains no leaves of $T$, i.e., $L \cap L_v = \emptyset$, and consider the case where $T_v$ contains at least $k$ leaves, i.e., $|L_v| \geq k$. 
When $T_v$ contains at least $k$ leaves, then we can slide $v$ in such a way that it only appears on one of the leaves of $T$. Let $L_r$ be a the set of leaves of $T$ that are reachable from some leaf of $T_v$, i.e., for each leaf in $L_v$ we can associate (at least) one leaf from $L$. Indeed, since $|I| = k$ and $I$ is an independent set that contains $v$, there exists at least one leaf in $L_r$ that contains no vertex of $I$. We can then project $v$ on that branch of the tree in such a way that we end up on a leaf of $T$, as needed.  

We now assume that $T_v$ contains no leaves of $T$ and has less than $k$ leaves, i.e., $L \cap L_v = \emptyset$ and $|L_v| < k$. This implies that if $|T_v| > (3\omega+3)k^2$ then $T_v$ contains a (clique) path of length at least $3\omega+3$ such that each internal node on this path has degree two in $T$. 
Let $B_p$, $\ldots$, $B_r$, denote the  bags of this path.
So there exists a bag $B_q$, $p < q < r$, containing a vertex that is neither on the left border of the path nor on the right border of the path (neither in $B_p$ nor $B_q$). So we can simply project $v$ on that vertex that belongs to less nodes than $v$. We repeat this process as long as there exists a vertex of the independent set that belongs to more than $(3\omega+3)k^2$ bags of $T$.
\end{proof}

Given an $(G,k,I_s,I_t)$ of {\sc Token Sliding}, where $G$ is chordal and of bounded clique size $\omega$, we can assume, by Lemma~\ref{lem:numberbag}, that each vertex of both $I_s$ and $I_t$ belongs to at most $(3\omega+3)k^2$ bags of a tree decomposition $\mathcal{T}$ of $G$. By Corollary~\ref{cor-bounded-bag-degree} (the multi-component reduction rule), we can assume that each bag in the tree decomposition has at most $2^{O(\omega^2k)}$ neighbors. Moreover, if the diameter of the graph is large enough, then by reduction rule R5 (the path reduction rule), we know that there is a subset of vertices that can be merged into a black hole. Note that the resulting graph is still a chordal graph since contracting edges in a chordal graph leaves a chordal graph. So we can also assume that the diameter of $G$ is also bounded by $O(k^2)$.

To complete this section, it suffices to bound the number of bags by a function of $\omega$ and $k$ since we obtain an equivalent graph of size $f(k,\omega)$ (given that the clique number is bounded). So in the rest of this section we assume all of the above and we assume that the number of bags in the tree decomposition is arbitrarily large. Given that the diameter is bounded and the degree of the tree decomposition is bounded, it follows that when the tree decomposition is arbitrarily large we must have a long path in the decomposition in which some vertices appear in most bags (as otherwise the diameter would not be bounded). This long path can of course have subtrees attached to it but we will show that we can always find a ``big enough'' substructure that does not contain any vertices of $I_s$ nor $I_t$ (this will follow from Lemma~\ref{lem:numberbag}). To conclude, we shall show that such a substructure can be reduced using the following reduction rule (or the multi-component reduction rule R6).

\begin{itemize}
\item Rule C1 (almost black hole rule): Let $Z = W \cup X \cup Y$ be the union of three cliques of $G$ (not necessarily disjoint and one or two possibly empty) such that $Z \cap (I_s \cup I_t) = \emptyset$. Let $C$ be a connected component of $G - Z$ such that no token of $I_s$ nor $I_t$ belongs to $C$. Assume moreover that $X$ (when non-empty) can be partitioned into $X_1$ and $X_2$ such that:
\begin{itemize}
    \item $X_1$ is complete to $C \cup W \cup Y$; and
    \item $C$ is connected and there exists a $2$-independent set $R$ in $C - (N_C(W) \cup N_C(Y))$ of size at least $k$ such that: 
    \begin{itemize}
        \item for every $z$ in $W \cup Y$ there exists $r$ in $R$ such that there is a path from $r$ to $z$; and
        \item every $x$ in $X_2$ misses (is not adjacent to) at least $k$ vertices of $R$. 
    \end{itemize} 
\end{itemize}
Then, $C$ can be replaced by a path $P = \{p_1, \ldots, p_{5k}\}$ of length $5k$ such that $X_1$ is complete to $P$ and $X_2 \cup W \cup Y$ is complete to $\{p_1\}$. We note that the resulting graph is still chordal and the clique number does not increase.
\end{itemize}

\begin{lemma}\label{lem-chordal-reduction-safe}
Rule C1, the almost black hole rule, is safe.
\end{lemma}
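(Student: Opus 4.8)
The plan is to show both directions of the equivalence between the original instance and the reduced one, where $C$ is replaced by the path $P = \{p_1,\ldots,p_{5k}\}$ with $X_1$ complete to $P$ and $X_2\cup W\cup Y$ complete to $\{p_1\}$. The easy direction is that a transformation in the reduced graph $G'$ lifts to $G$: since $P$ together with its attachments behaves like a ``poor man's version'' of $C$, any slide inside $P$ or between $P$ and $Z$ can be mimicked in $C$. Concretely, I would map $p_1$ to a vertex of $R$ reachable from the relevant part of $W\cup Y$ (using the reachability hypothesis), and use the fact that $R$ is a $2$-independent set of size $\geq k$ inside $C-(N_C(W)\cup N_C(Y))$ to keep at most $k$ tokens ``parked'' on pairwise non-adjacent, non-conflicting vertices of $R$; sliding between parking spots is possible since any $x\in X_2$ misses $\geq k$ vertices of $R$, so we can always shuffle tokens to avoid the (at most one) token entering through a neighbor of $W\cup Y$. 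This is essentially the same bookkeeping argument as in the proof of Lemma~\ref{lem:rule_mergingpaths} and Lemma~\ref{lem:fan}: maintain the invariant that the number of tokens on $C\cup N(C)$ equals the number on $P\cup N(P)$ at each step, and that these tokens sit on $R$ (resp.\ on $P$).

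The harder direction is $G \Rightarrow G'$: given a transformation in $G$, produce one in $G'$. First I would argue that we may assume the token traffic through $C$ is ``tame'': since $C$ and all its attachments contain no tokens of $I_s\cup I_t$, any journey of a token into $C$ enters and leaves through $Z = W\cup X\cup Y$, and $Z$ is a union of three cliques, so at most one token can sit on $W$, at most one on $X$, at most one on $Y$ at any time when a token is inside $C$ — here is where boundedness of the clique number is implicitly used, and where the structure is much simpler than the general cutset of Section~\ref{sec-components}. Then I would replay the $G$-transformation in $G'$: slides not touching $C\cup Z$ are copied verbatim; slides inside $C$ are suppressed in $G'$; when a token enters $C$ from some $z\in W\cup Y$ in $G$, in $G'$ we first shuffle the $\le k$ tokens currently on $P$ (using Lemma-style reconfiguration along $P$, possible because $P$ has length $5k$ and each vertex of $Z$ other than those complete to $P$ sees only $p_1$) so that $p_1$ and $p_2$ are empty and no token is adjacent to the relevant attachment, then slide to $p_1$ and push to $p_2$; symmetrically for a token leaving $C$; when a token enters/leaves via $X_1$ (which is complete to everything), we use that $X_1$ complete to $C$ forces that in $G$ no token was on $C$ at that moment, so in $G'$ there is no token on $P$ either, and the move is immediate.

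The main obstacle I anticipate is the bookkeeping around $X_2$ and $W\cup Y$ simultaneously: a token entering $C$ through $N_C(W)$ (or $N_C(Y)$) must be routed to a safe parking spot in $R$ without ever becoming adjacent to a token sitting on $X_2$, $W$, or $Y$, and there may already be up to $k-1$ tokens parked in $R$. The hypotheses are exactly tailored to this — $R$ is $2$-independent (so parked tokens never conflict with each other or with a newly arriving token after one slide), $R$ avoids $N_C(W)\cup N_C(Y)$ (so parked tokens don't block the entry corridors from $W$ and $Y$), and every $x\in X_2$ misses $\ge k$ vertices of $R$ (so we can always evacuate the path of an incoming token from $X_2$-neighbors). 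I would turn this into the same kind of ``$k$ free slots, reshuffle as needed'' argument used in the path/fan rules: the key lemma to extract is that as long as the $\le k$ in-$C$ tokens lie on $R$, one can reconfigure them to any desired placement on $R$ that keeps a prescribed corridor empty, and that $P$ of length $5k$ simulates this because a token on $Z\setminus X_1$ sees only $p_1$ while $X_1$ sees all of $P$, giving the same ``at most $3k$ blocked vertices, $2k$ to spare'' count as in Corollary~\ref{cor-diam}. Finally I would note that $G'$ is obtained from $G$ by deleting $C$ and adding the path $P$; since $|V(C)|$ can be assumed large (otherwise the rule need not be applied) while $|P|=5k$, we have $G' \prec G$, and chordality and the clique number are preserved because $P$ only creates cliques of the form $\{p_i,p_{i+1}\}\cup X_1$ and $\{p_1\}\cup X$, each contained in $Z\cup\{p_i,p_{i+1}\}$-sized cliques already bounded by $\omega$.
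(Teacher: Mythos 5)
Your proposal takes essentially the same approach as the paper: normalize the $G$-transformation into one where tokens inside $C$ always sit on the $2$-independent set $R$ (the paper calls this a ``well-behaved'' transformation and outsources the reshuffling step you correctly identify as the key lemma to Lemma~3.7 of~\cite{DBLP:journals/algorithmica/BartierBDLM21}), and then observe that such sequences correspond directly to transformations of $G'$ in which the in-$C$ traffic parks on $P$. Two minor imprecisions that do not affect the plan: the at-most-one-token-per-$W$/$X$/$Y$ observation is a direct consequence of the rule's hypothesis that each of $W$, $X$, $Y$ is a clique (not of the bounded clique number of $G$), and the $3k$-blocked/$2k$-free accounting of Corollary~\ref{cor-diam} does not literally transfer to $P$ since $Z\setminus X_1$ only sees $p_1$, so the bookkeeping on $P$ is in fact simpler than you suggest.
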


\begin{proof}
If there exists a transformation from $I_s$ to $I_t$ in $G$ then we can find a transformation that behaves as follows:
\begin{itemize}
    \item Whenever there is a token in $X_1$ then we can have no tokens in $C \cup W \cup Y$.
    \item Whenever a token slides into $X_2 \cup W \cup Y$ from $V(G) \setminus V(C)$ it then immediately slides into a vertex of $R$ after possibly rearranging some tokens in $R$.
    \item Whenever a token slides into $X_1$ from $V(G) \setminus V(C)$ then we know that there can be no tokens in $C \cup W \cup Y$. Hence we can again project the token on any vertex of $R$. 
    \item Whenever a token slides into $X_2 \cup W \cup Y$ from $V(C)$ then the tokens in $R$ are rearranged appropriately and then the token can slide to the corresponding vertex. 
    \item Whenever a token slides into $X_1$ from $V(C)$ then we know that this must be the only token in $C$. 
\end{itemize}
The fact that we can always rearrange the tokens in a 2-independent set of size at least $k$ follows from Lemma 3.7 in~\cite{DBLP:journals/algorithmica/BartierBDLM21}. 
We call such a transformation a well-behaved transformation. 

The proof of the lemma now easily follows by considering well-behaved transformations. Assume that there exists a well-behaved transformation from $I_s$ to $I_t$ in $G$. Let $G'$ be the graph obtained after a single application of reduction rule C4 and let $I_s'$ and $I_t'$ be the corresponding independent sets in $G'$. We construct a transformation from $I_s'$ to $I_t'$ as follows:
\begin{itemize}
    \item Whenever a token slides into $X_2 \cup W \cup Y$ from $V(G) \setminus V(P)$ it then immediately slides to $p_1$ and then slides to the last available vertex of $P$. 
    \item Whenever a token slides into $X_1$ from $V(G) \setminus V(P)$ then we know that there can be no tokens in $P \cup W \cup Y$. Hence we can project the token on the last vertex of $P$. 
    \item Whenever a token slides into $X_2 \cup W \cup Y$ from $V(P)$ then the first available can slide to $p_1$ and then to the corresponding vertex in $X_2 \cup W \cup Y$. 
    \item Whenever a token slides into $X_1$ from $V(P)$ then we know that this must be the only token in $P$. 
\end{itemize}

The other direction follows immediately since a transformation from $I_s'$ to $I_t'$ in $G'$ corresponds to a well-behaved transformation in $G$. 
\end{proof}

Let us now explain how we can use Lemma~\ref{lem-chordal-reduction-safe} to bound the total number of bags in the tree decomposition. 

\begin{lemma}\label{lem-chordal-reduction-application}
Let $(G,k,I_s,I_t)$ be an instance of {\sc Galactic Token Sliding} where $G$ is chordal and of bounded clique size $\omega$ and each vertex of both $I_s$ and $I_t$ belongs to at most $(3\omega+3)k^2$ bags of a compact tree decomposition $\mathcal{T}$ of $G$ (of width at most $\omega$). 
Moreover, assume that each bag in $\mathcal{T}$ has degree at most $\gamma = 2^{O(\omega^2k)}$ and $G$ has diameter at most $O(k^2)$. Let $\mathscr{P} = B_1$, $\ldots$, $B_\ell$ denote the longest path in the tree decomposition. If $\ell > (30k^3\omega^3\gamma)^{5k\omega\gamma} \cdot 6k^3(\omega + 1) = \zeta$ then we can either apply reduction rule C1, i.e., the almost black hole rule, or reduction rule R6, i.e., the multi-component reduction rule (we note that the bound on $\ell$ is not optimized and chosen only for convenience). 
\end{lemma}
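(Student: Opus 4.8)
The goal is to show that when the longest path $\mathscr{P}$ in the clique tree is very long (length exceeding $\zeta$), one of the two reduction rules C1 or R6 applies. The plan is to first fix the rooted tree decomposition $\mathcal{T}_{\mathscr{P}}$ and chop $\mathscr{P}$ into a large number of consecutive sections via an $(\alpha,\beta)$-block partition, with $\alpha \approx (30k^3\omega^3\gamma)^{5k\omega\gamma}$ and $\beta \approx 6k^3(\omega+1)$. Because each vertex of $I_s \cup I_t$ lives in at most $(3\omega+3)k^2$ bags of $\mathcal{T}$ and there are at most $2k$ such vertices, all but at most $O(k^3\omega)$ of the sections are entirely free of tokens of $I_s \cup I_t$ (counting both the bags on $\mathscr{P}$ and the subtrees hanging off them — here one also uses the bound $\gamma$ on the degree of each bag so that a token in a bag off the path still only infects boundedly many sections). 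So I can restrict attention to a long run of consecutive token-free sections.

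**Classifying sections by their behaviour towards the path.** Within such a token-free section $\mathcal{S}_i$, I want to locate a ``big'' substructure that is governed by a bounded-size clique separator so that either C1 or R6 fires. The two bags of $\mathscr{P}$ bounding a section, say $B$ and $B'$, are cliques of size at most $\omega$; the union $W \cup Y := B \cup B'$ plays the role of the two outer cliques in rule C1, and everything strictly between them in $\mathcal{S}_i$ (the bags of $\mathscr{P}$ in the interior of the section together with all subtrees attached) is a candidate component $C$. For each such section I would record a ``type'': for each vertex $x$ appearing in the middle bags, whether $x$ is complete to the interior and to $B, B'$, or misses many interior vertices — this is exactly the dichotomy $X = X_1 \cup X_2$ demanded by C1 (a vertex complete everywhere goes into $X_1$; a vertex missing $\ge k$ vertices of the relevant $2$-independent set goes into $X_2$). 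Since bags have size $\le \omega$ and there are at most $\gamma$ subtrees per bag, the number of distinct such types is bounded by some $g(k,\omega,\gamma)$, and with $\alpha$ chosen larger than $g(k,\omega,\gamma)\cdot(\text{number of token-bearing sections})$, a pigeonhole argument yields a large family of token-free sections all of the same type. If that type already guarantees a clique separator $Z = W\cup X\cup Y$ of size $O(\omega)$ with $G - Z$ leaving $C$ as a token-free component satisfying the $2$-independent-set condition, then C1 applies directly (invoking Lemma~3.7 of~\cite{DBLP:journals/algorithmica/BartierBDLM21} to produce the required $2$-independent set $R$ inside $C - (N_C(W)\cup N_C(Y))$, using that $C$ has many vertices because the section spans $\ge \beta$ bags of $\mathscr{P}$).

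**The fallback to R6 and the main obstacle.** The case that does not immediately give a clique separator is when some vertex $x$ of a middle bag is ``half-and-half'' in an irregular way across the sections, i.e. when the interior of a section is not cut off from the rest of $G$ by a small clique — but then, because $x$ sits in bags forming a subtree of $\mathcal{T}$ and $x \notin I_s \cup I_t$ while it must, by connectivity of its subtree and the long run of token-free sections, meet many sections, one can instead exhibit $X := \{x\}$ (or a bounded-size $X$) such that $G - X$ has more than $2^{O(|X|^2 k)}$ connected components — the interiors of distinct token-free sections become distinct components once $x$ and the path bags are deleted — so Corollary~\ref{cor-bounded-bag-degree} / rule R6 applies. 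The bookkeeping that makes both alternatives exhaustive — showing that if \emph{no} section admits the C1 structure then \emph{some} small set $X$ yields too many components — is where the argument is most delicate: I would handle it by a case analysis on whether the ``interior-to-exterior'' edges of a typical section all pass through the two boundary cliques of $\mathscr{P}$ (giving C1) or not (giving, after identifying the few offending vertices, a small cutset for R6). I expect the main obstacle to be precisely this last step: carefully choosing the section length $\beta = 6k^3(\omega+1)$ and the number of sections $\alpha$ so that \emph{both} the $2$-independent-set hypothesis of C1 (needs $\ge k$ vertices, robust against $X_2$ missing $k$ of them, hence the cubic-in-$k$ length) \emph{and} the component-count threshold of R6 are simultaneously met, while the token-bearing sections and the vertices in $I_s \cup I_t$ are absorbed into the pigeonhole slack. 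Once a section of the right type is isolated, verifying that the chordality and clique number are preserved by C1 is immediate from the statement of the rule, so the proof concludes by applying the appropriate rule.
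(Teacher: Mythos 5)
Your proposal takes a genuinely different route from the paper's, and it has a concrete gap. The paper's proof rests on an iterative refinement: it repeatedly checks whether some vertex $x$ appears in \emph{every} bag of $\mathscr{P}\cap\mathcal{S}$ for one of the $\alpha$ sections; if so it adds $x$ to $X$, restricts $\mathscr{P}$ to $\mathscr{P}\cap\mathcal{S}$, and recomputes the block partition. This loop terminates after at most $\omega$ rounds (the vertices placed in $X$ must coexist in a single bag), and when it stops it delivers the invariant that drives everything else: no vertex outside $X$ spans all bags of a section, hence, by connectedness of the subtree of bags containing any vertex, each vertex outside $X$ meets at most two consecutive sections. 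That invariant is what allows one to pick one vertex per fifth section and obtain the $2$-independent set $R$ demanded by rule~C1, and a second refinement pass of the same kind on $\mathcal{T}_{\mathscr{P}}$ (rather than $\mathscr{P}$ alone) is what yields the required $X_1/X_2$ split.

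Your pigeonhole over per-section ``types'' does not supply a substitute. First, the type you describe records, for every vertex appearing in a section's middle bags, whether it is complete to the section's interior or misses many interior vertices; but a section contains $\Theta(\beta)$ bags of size up to $\omega+1$, so the record involves an unbounded set of vertices and there is no bounded universe of types over which to pigeonhole. Second, even granting many same-typed sections, you have not identified which vertices constitute $X$: rule~C1 needs $X$ to be (almost) universal to a single component $C$ that spans many consecutive sections, i.e.\ a \emph{global} property along $\mathscr{P}$, whereas your type is a per-section, local statement; the refinement is precisely what discovers which vertices persist across the whole restricted subpath. Third, you invoke Lemma~3.7 of~\cite{DBLP:journals/algorithmica/BartierBDLM21} to ``produce'' $R$, but in the paper that lemma is used only in the safety proof of C1 to shuffle tokens already sitting on a given $2$-independent set; constructing $R$ at all requires the two-consecutive-sections property you never establish, since a chordal graph on arbitrarily many vertices (e.g.\ a split graph) may contain no two vertices at distance at least three. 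Finally, your fallback to R6 is explicitly left as ``delicate bookkeeping'' — but that case split is the heart of the lemma, and in the paper it falls out automatically by counting the components of $G-Z$ against the degree bound $\gamma$ once the refinement has produced the correct $Z = W\cup X\cup Y$; as written, your sketch defers exactly the step that cannot be deferred.
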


\begin{proof}
Since each vertex of $I_s \cup I_t$ belongs to at most $3k^2(\omega+1)$ bags of $T$, we can find a subpath of $\mathscr{P}$ which is disjoint from $I_s \cup I_t$ and of length at least $\ell \geq (30k^3\omega^3\gamma)^{5k\omega}$. For readability we again denote this path by $\mathscr{P} = B_1$, $\ldots$, $B_\ell$. 

Let $\mathcal{T}_\mathscr{P}$ denote the tree decomposition rooted at $\mathscr{P}$. Let $\alpha = 6k\omega\gamma$, $\beta = 5k^2\omega^2$ (so $\alpha \cdot \beta = 30k^3\omega^3\gamma$), and let $\mathcal{B}_{\alpha, \beta}(\mathcal{T}_\mathscr{P}) = \{\mathcal{S}_1, \mathcal{S}_2, \ldots, \mathcal{S}_\alpha\}$ denote an $(\alpha,\beta)$-block partition of $\mathcal{T}_\mathscr{P}$. Note that, initially, each section in $\mathcal{B}_{\alpha, \beta}(\mathcal{T}_\mathscr{P})$ is of size at least  $(5k^2\omega^2)^{5k\omega\gamma}$, i.e., contains $(5k^2\omega^2)^{5k\omega\gamma}$ bags of $\mathscr{P}$ (and possibly others). We now proceed in two stages of refinement to construct the sets $X = X_1 \cup X_2$, $W$, $Y$, $Z$, and the component $C$ required for the application of reduction rule C1 (assuming rule R6, the multi-component reduction rule, does not apply). 

If there exists a vertex $x$ that appears in all bags of $\mathscr{P} \cap \mathcal{S}$, for some $\mathcal{S} \in \mathcal{B}_{\alpha, \beta}(\mathcal{T}_\mathscr{P})$, then we add $x$ to $X$ and we restrict $\mathscr{P}$ to the subpath $\mathscr{P} \cap \mathcal{S}$ and recompute $\mathcal{T}_\mathscr{P}$ and $\mathcal{B}_{\alpha, \beta}(\mathcal{T}_\mathscr{P})$ with $\alpha = 6k\omega\gamma$ and $\beta = (5k^2\omega^2)$. Now each section is of size at least $(5k^2\omega^2)^{5k\omega\gamma - 1}$. We repeat this process as long as we can find a vertex $x$ that appears in all bags of some $\mathscr{P} \cap \mathcal{S}$. Note that this process can only be repeated at most $\omega$ times. Hence, when no longer applicable, we know that we have $5k\omega\gamma$ sections each containing at least $(5k^2\omega^2)^{4k\omega\gamma}$ bags of $\mathscr{P}$ and we have a set $X$ of vertices that appear in all bags of $\mathscr{P}$. 

We note that if $\mathscr{P} = \mathcal{T}_\mathscr{P}$ then we are done. To see why, we set $X = X_1$ ($X_2$ is empty), we set $W = B_1$, $Y = B_\ell$, and $Z = W \cup X \cup Y$. Consider the connected components of $G - Z$ that are contained in $G[\mathscr{P}]$. If we have more than $\gamma$ components then we can apply the multi-component reduction rule and we are done. Otherwise, since $\alpha = 6k\omega\gamma$, we must have at least one component that contains at least $6k\omega$ sections of $\mathcal{B}_{\alpha, \beta}(\mathcal{T}_\mathscr{P})$. Let $C$ denote this component and let $\mathscr{P}_C$ denote the bags of $\mathscr{P}$ whose vertices belong to $C$. We update $W$ to the first bag of $\mathscr{P}_C$ and $Y$ to the last bag of $\mathscr{P}_C$. To conclude, we still need to find a $2$-independent set $R$ in $C - (N_C(W) \cup N_C(Y))$ of size at least $k$. Since we have at least $6k\omega$ sections, we can find, by grouping sections into quintuplets and ignoring the first four and last four sections, such a $2$-independent set. In particular, we let $\mathcal{B}_{\alpha, \beta}(\mathcal{T}_\mathscr{P}) = \{\mathcal{S}_1, \mathcal{S}_2, \ldots, \mathcal{S}_\alpha\}$ and we pick one vertex from each section $\mathcal{S}_5$, $\mathcal{S}_{10}$, $\ldots$, $\mathcal{S}_r$. Since each section contains at least $\beta$ bags and no vertex appears in all bags of a section (except the vertices of $X$), we know that each vertex appears in at most two consecutive sections. Moreover, the number of vertices that we pick is at least $k$ since $\alpha > 5k + 8$ (assuming $k,\omega > 2$). 

Now we assume that $\mathscr{P} \neq \mathcal{T}_\mathscr{P}$. In other words, there are subtrees attached to the bags in $\mathscr{P}$. We proceed via another stage of refinement of the vertices that were already added to $X$. If there exists a vertex $x \in X$ that appears in all bags of $\mathcal{S}$, for some $\mathcal{S} \in \mathcal{B}_{\alpha, \beta}(\mathcal{T}_\mathscr{P})$, then we add $x$ to $X_1$ and we restrict $\mathscr{P}$ to the subpath of $\mathscr{P} \cap \mathcal{S}$ and recompute $\mathcal{T}_\mathscr{P}$ and $\mathcal{B}_{\alpha, \beta}(\mathcal{T}_\mathscr{P})$ with $\alpha = 6k\omega\gamma$ and $\beta = (5k^2\omega^2)$. We proceed with this refinement just as before. Hence, when no longer applicable, we know that we have $6k\omega\gamma$ sections each containing at least $(5k^2\omega^2)^{3k\omega\gamma}$ bags of $\mathscr{P}$ and we have a set $X$ of vertices that appear in all bags of $\mathscr{P}$. Moreover, we can partition $X$ into $X_1$ and $X_2$ such that vertices in $X_1$ appear in all bags of $\mathcal{T}_\mathscr{P}$ and each vertex in $X_2$ is not adjacent to at least one vertex in each section of $\mathcal{B}_{\alpha, \beta}(\mathcal{T}_\mathscr{P})$ (since otherwise that vertex would be added to $X_1$ and we would refine on the section to which it is complete). We can again either apply the multi-component reduction rule or find a component $C$ of $G - Z$ contained in $G[\mathcal{T}_\mathscr{P}]$ that contains at least $6k\omega$ sections of $\mathcal{B}_{\alpha, \beta}(\mathcal{T}_\mathscr{P})$. Let $\mathscr{P}_C$ denote the bags of $\mathscr{P}$ whose vertices belong to $C$. We update $W$ to the first bag of $\mathscr{P}_C$ and $Y$ to the last bag of $\mathscr{P}_C$. To conclude, we still need to find a $2$-independent set $R$ in $C - (N_C(W) \cup N_C(Y))$ of size at least $k$ such that each vertex in $X_2$ misses at least $k$ vertices of $R$. We proceed as before, we consider vertices in  sections $\mathcal{S}_5$, $\mathcal{S}_{10}$, $\ldots$, $\mathcal{S}_r$ but this time choosing vertices which are not adjacent to the vertices in $X_2$. Since $|X_2| \leq \omega$ and $\alpha = 6k\omega$, we can always construct such a $2$-independent set. 
\end{proof}

\begin{theorem}\label{thm:ts-chordal-bounded-clique}
\textsc{Token Sliding} on chordal graphs is fixed-parameter tractable when parameterized by $k + \omega(G)$. 
\end{theorem}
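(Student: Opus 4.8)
The plan is to combine all the machinery developed so far (galactic reduction rules R1--R5, the multi-component reduction rule R6 from Corollary~\ref{cor-bounded-bag-degree}, Lemma~\ref{lem:numberbag}, and the almost-black-hole rule C1 with its safety Lemma~\ref{lem-chordal-reduction-safe} and its applicability Lemma~\ref{lem-chordal-reduction-application}) into a kernelization argument. Given an instance $(G,k,I_s,I_t)$ of \textsc{Token Sliding} with $G$ chordal and $\omega(G) \le \omega$, first I would reinterpret it as an instance of \textsc{Galactic Token Sliding} with all vertices planetary, then exhaustively apply rules R1--R6 and C1 (restarting from R1 each time any rule fires). Each rule either deletes a vertex, contracts an edge, or answers the instance outright, so the process terminates in polynomially many rounds, and each individual rule application runs in $f(k,\omega)\cdot n^{O(1)}$ time (for R6 this is Corollary~\ref{cor-bounded-bag-degree}; for C1 the applicability test is Lemma~\ref{lem-chordal-reduction-application}, which also tells us which rule to fire). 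Crucially, contracting edges keeps the graph chordal and does not increase $\omega$, and rule C1 is explicitly stated to preserve chordality and the clique number, so the reduced instance is again a chordal galactic graph of clique number at most $\omega$.

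Next I would bound the size of the irreducible instance $(G,k,I_s,I_t)$. Apply Lemma~\ref{lem:numberbag} (via token slides, which is a legitimate preprocessing step that does not change the answer) so that every vertex of $I_s\cup I_t$ lies in at most $(3\omega+3)k^2$ bags of a fixed compact clique tree $\mathcal{T}$ of width at most $\omega$. Since R6 is inapplicable, Corollary~\ref{cor-bounded-bag-degree} gives that for every cutset $X$ of size at most $\omega$ the graph $G-X$ has at most $2^{O(\omega^2 k)}$ components; in particular taking $X$ to be a single bag, every bag of $\mathcal{T}$ has at most $\gamma = 2^{O(\omega^2 k)}$ neighbours. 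Since R5 (the path-reduction rule) is inapplicable, Corollary~\ref{cor-diam} gives $\mathrm{diam}(G) = O(k^2)$; because the graph is connected with bounded diameter and the clique tree has bounded degree, a long path $\mathscr{P}$ in $\mathcal{T}$ forces many bags to share common vertices, which is precisely the hypothesis of Lemma~\ref{lem-chordal-reduction-application}. That lemma states that if the longest path $\mathscr{P}$ in $\mathcal{T}$ has more than $\zeta = (30k^3\omega^3\gamma)^{5k\omega\gamma}\cdot 6k^3(\omega+1)$ bags, then either R6 or C1 applies --- contradicting irreducibility. Hence the longest path in $\mathcal{T}$ has at most $\zeta$ bags; combined with the fact that every bag has at most $\gamma$ neighbours, a standard argument (a bounded-degree tree of bounded diameter has boundedly many nodes) gives $|V(\mathcal{T})| \le \gamma^{O(\zeta)} = f(k,\omega)$, so $|V(G)| \le \omega\cdot f(k,\omega) = g(k,\omega)$ for some computable $g$.

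Having reduced to an equivalent instance of \textsc{Galactic Token Sliding} of size $g(k,\omega)$, I would finish by solving this bounded-size instance by brute force: the galactic reconfiguration graph has at most $g(k,\omega)^{O(k)}$ vertices (each of the $k$ tokens sits on one of the vertices, black holes absorbing multiplicities), so reachability from $I_s$ to $I_t$ can be decided in time depending only on $k$ and $\omega$. Since the original \textsc{Token Sliding} instance is a yes-instance iff the corresponding \textsc{Galactic Token Sliding} instance is, and the chain of reductions preserves this equivalence at every step, this yields an $f(k,\omega)\cdot n^{O(1)}$ algorithm, proving the theorem (indeed a bikernel, analogously to Theorem~\ref{thm:ts-bounded-degree-fpt}).

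The main obstacle --- and the place where nearly all the work really lives --- is establishing Lemma~\ref{lem-chordal-reduction-application}: showing that a long path in the clique tree genuinely produces the structured configuration ($Z = W\cup X\cup Y$, the partition $X = X_1\cup X_2$, the component $C$, and the $2$-independent set $R$) required to fire rule C1, handling the two cases according to whether the path $\mathscr{P}$ carries subtrees or not, and carefully tracking how the iterated refinement of $X$ consumes at most $\omega$ rounds while leaving enough sections. The safety of C1 itself (Lemma~\ref{lem-chordal-reduction-safe}) is comparatively clean because, unlike the general multi-component argument, at most one token at a time needs to be routed through the cliques $W,X_2,Y$ and the rest can be parked on the path $P$ or the $2$-independent set $R$ (invoking Lemma 3.7 of~\cite{DBLP:journals/algorithmica/BartierBDLM21} to rearrange tokens within $R$); the subtlety is purely in the case analysis of which clique a sliding token enters and whether $X_1$ currently holds a token. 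Everything else --- termination of the rule loop, the size bound on the clique tree, and the final brute-force search --- is routine once these two lemmas are in hand.
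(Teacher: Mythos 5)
Your proposal follows exactly the same route as the paper's proof: reformulate as \textsc{Galactic Token Sliding}, apply Lemma~\ref{lem:numberbag} plus the galactic rules, the multi-component rule R6, and the almost-black-hole rule C1 exhaustively, then invoke Corollaries~\ref{cor-bounded-bag-degree} and~\ref{cor-diam} together with Lemma~\ref{lem-chordal-reduction-application} to bound the clique tree (bounded degree $\times$ bounded diameter $\Rightarrow$ bounded size), and finish by brute force on the resulting bounded-size instance. The only cosmetic difference is that you invoke Lemma~\ref{lem:numberbag} \emph{after} the reduction rules rather than before them as the paper does --- but since that lemma modifies $I_s,I_t$ and could re-enable a rule, one should interleave it into the exhaustion loop (or apply it first as the paper does); this does not change the argument in substance.
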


\begin{proof}
Given an instance $(G,k,I_s,I_t)$ of {\sc Token Sliding}, where $G$ is a chordal graph, we first compute a compact tree decomposition of width at most $\omega = \omega(G)$ in time $f(\omega) \cdot n^{O(1)}$~\cite{K94}. By Lemma~\ref{lem:numberbag}, we can assume that each vertex of both $I_s$ and $I_t$ belongs to at most $(3\omega+3)k^2$ bags of the tree decomposition $\mathcal{T}$. We next convert $(G,k,I_s,I_t)$ into an instance of {\sc Galactic Token Sliding} where every vertex is a planet vertex. 

We then exhaustively apply reduction rules R1, R3, R5, R6, and C1 (the adjacent black holes rule, the absorption rule, the path reduction rule, the multi-component reduction rule, and the almost black hole rule). We recompute the decomposition after each application. Once the reduction rules no longer apply we know, by Corollary~\ref{cor-bounded-bag-degree} and Corollary~\ref{cor-diam}, that each bag in the tree decomposition has at most $2^{O(\omega^2k)}$ neighbors and $G$ has diameter at most $O(k^2)$. We note that each reduction rule can be applied in polynomial time and that reduction rules R1, R3, R5, and R6 consist of either deleting bags of the decomposition or merging adjacent vertices (contracting edges) so the resulting graph remains chordal and the clique number does not increase. 

Finally, since each bag in the tree decomposition has at most $2^{O(\omega^2k)}$ neighbors and $G$ has diameter at most $O(k^2)$, we know, by Lemma~\ref{lem-chordal-reduction-application}, that the longest path in the decomposition is of length at most $\zeta$. Putting it all together, we know that after exhaustively applying all reduction rules we obtain a graph with a bounded number of vertices on which we can solve the problem via a brute-force algorithm for a total running time of $f(k, \omega) \cdot n^{O(1)}$, as needed.
\end{proof}

\section{W[1]-hardness on split graphs}\label{sec-split}

We complement our positive result for \textsc{Token Sliding} on chordal graphs of bounded clique size by showing that the problem becomes W[1]-hard (for parameter $k$) on split graphs. We give a reduction from the \textsc{Multicolored Independent Set} problem, known to be W[1]-hard~\cite{DBLP:books/sp/CyganFKLMPPS15}.

Let $(G,k)$ be an instance of \textsc{Multicolored Independent Set}. We assume that $V(G) = V_1 \cup V_2 \cup \ldots V_k$ and each $V_i = \{v^i_1, \ldots, v^i_n\}$ induces a clique. Moreover, without loss of generality, we assume that $|V_i| = n$ and therefore the total number of vertices is $kn$. We let $m_{i,j}$ and $\overline{m_{i,j}}$ denote the number of edges and non-edges between $V_i$ and $V_j$, respectively. We let $m = \sum_{i,j \in [k], i \neq j}{m_{i,j}}$ and $\overline{m} = \sum_{i,j \in [k], i \neq j}{\overline{m_{i,j}}}$. We now describe how to construct an instance $(G',k',I_s,I_t)$ of \textsc{Token Sliding}, where $G'$ is a split graph and $k' = k^2 + {k \choose 2} + 1$. 

The graph $G'$ will consist of a clique $C$ on $nk^2 + \overline{m} + 1$ vertices and an independent set $I = U \cup D$ such that $|U| = k^2 + {k \choose 2} + 1$ and $|D| = k(n(k - 1) + 1) + {k \choose 2} + 1$ ($U$ and $D$ stand for up and down and should be pictured as vertices placed above and below the clique, respectively). 
We assume that the vertices of $C$ are ordered from $c_1$ to $c_{|C|}$, the vertices of $U$ are ordered from $u_1$ to $u_{|U|}$, and the vertices of $D$ are ordered from $d_1$ to $d_{|D|}$. 
The ordering of the vertices of the different sets will be the one corresponding to their additions in the following construction.

The vertices of $C,U$ and $D$ will be created step by step each time we create a new \emph{vertex selection gadget} or a new \emph{non-edge selection gadget}. A final vertex in each set, i.e., $\{c_{|C|}, u_{|U|}, d_{|D|}\}$, will be added at the end of the construction and will constitute the \emph{switch gadget}. 


Let us now explain how $G'$ is built.
We start by creating the vertex selection gadgets which will be followed by non-edge selection gadgets, and then the last vertices will form the switch gadget (in that order). 

For every $i \le k$, the $i$-th vertex selection gadget $G_i$ consists of a new set of $k$ vertices from $U$, a new set of $nk$ vertices from $C$, and a new set of $n(k - 1) + 1$ vertices from $D$. We denote those vertices by $U_i = \{u^i_1, \ldots u^i_k\}$, $C_i = \{c^i_1, \ldots c^i_{nk}\}$, and $D_i = \{d^i_1, \ldots d^i_{n(k - 1) + 1}\}$, respectively. For $1 \leq j \leq k - 1$, we further subdivide the vertices in $\{c^i_1, \ldots c^i_{nk}\}$ and $\{d^i_1, \ldots d^i_{n(k - 1) + 1}\}$ into $k - 1$ subgroups where each subgroup is denoted by $C_{i,j} = \{c^{i,j}_1, \ldots, c^{i,j}_{n}\}$ and $D_{i,j} = \{d^{i,j}_1, \ldots, d^{i,j}_{n}\}$, respectively. Note that this leaves out a single vertex of $D_i$ which we denote by $D_{i,L} = \{d^i_{n(k - 1) + 1}\}$, $n$ vertices of $C_i$ which we denote by $C_{i,L} = \{c^{i,L}_1, \ldots, c^{i,L}_{n}\}$, and a single vertex of $U_i$ which we denote by $U_{i,L} = \{u^i_k\}$. Those $n + 2$ vertices are the \emph{lock gadget} $L_i$ of the vertex selection gadget $G_i$. 

In addition to the edges in $C$ (which indeed exists since $C$ is a clique), we add new edges for the gadgets. Let us first describe edges between vertices that belong to a same vertex selection gadget $G_i$, for every $i \le k$:
\begin{itemize}
    \item For every $1 \leq j \leq k - 1$, $u^i_j$ is connected to all vertices in $C_{i,j'}$ with $j' \geq j$ and $C_{i',j''}$ with $i'>i$.
    In other words, $u_i^j$ is connected to all vertices in $C$ except the vertices of $C$ that appear before $C_{i,j}$ (we have edges to vertices of $C$ that appear in later vertex selection or non-edge selection gadgets or the switch gadget). 
    \item For every $1 \leq q \leq n$ and $1 \leq j \leq k - 1$, $d^{i,j}_{q}$ is connected to $c^{i,j}_{q}$.
    \item For every $1 \leq q \leq n$ and $1 \leq j \leq k - 1$, $d^{i,j}_{q}$ is connected to all vertices of $C_{i,j'}$ with $j' \leq j$.
    \item For every $1 \leq k \leq k-1$ and $1 \leq q \leq n$, $j' > j$ and $q' \neq q$, we create an edge between $d^{i,j}_{q}$ and $c^{i,j'}_{q'}$.
    \item We add an edge from $U_{i,L}$ to every vertex in $C_{i,L}$ 
    \item We add an edge from $D_{i,L}$ to every vertex in $C_i$ 
    \item Finally, for every $j \le k-1$ and every $q,q' \le n$ with $q \ne q'$, we add the edge $d^{i,j}_{q}c^{i,L}_{q'}$. 
\end{itemize} 

\begin{figure}
    \centering
    \includegraphics[scale=0.8]{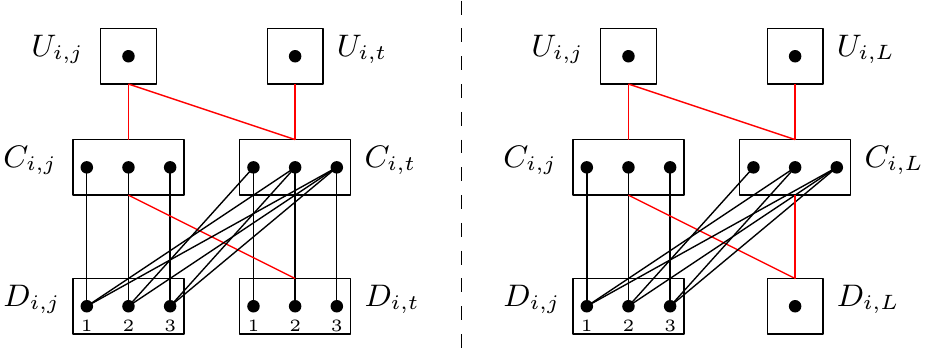}
    \caption{Edges within a selection gadget in the case $n = 3$. Connexions between sets $U_{i,j}, C_{i,j}, D_{i,j}$ and  $U_{i,t}, C_{i,t}, D_{i,t}$ with $j < t < k$ at the left, and between sets $U_{i,j}, C_{i,j}, D_{i,j}$ and $U_{i,L}, C_{i,L}, D_{i,L}$ with $j < k$ at the right. Red lines connecting two sets indicate that these sets are complete to each other. Vertices $1$ (resp. $2$, $3$) in $D_{i,j}$ and $D_{i,t}$ are representative of the same vertex of $G$.}
    \label{fig:split_selection_gadget}
\end{figure}
The connections within a vertex selection gadget are illustrated in Figure \ref{fig:split_selection_gadget}. Let us now explain how these gadgets are connected between them and two the rest of the clique vertices. For every $i \leq k$:
\begin{itemize}
    \item The vertices in $U_i$ are complete to the vertices in $C_j$ for every $j > i$ and two all vertices of $C$ that appear in later non-edge selection gadget or in the switch gadget.
    \item The vertices in $D_i$ are complete to the vertices in $D_j$ for $j < i$.
\end{itemize}
These connections are illustrated in Figure \ref{fig:split_connexion_gadgets}. We then add in $G'$ the  non-edge selection gadgets (after all vertex selection gadgets described above) as follows. Non-edge selection gadgets are added in sorted order (we sort them in lexicographic order by increasing $i$ and then by increasing $j$ with $j>i$). So we have ${k \choose 2}$ non-edge selection gadgets in total. 
A \emph{non-edge selection gadget} $G_{i,j}$ for $V_i$ and $V_j$ will consist of a new vertex of $U$ denoted by $u'_{i,j}$, a new vertex of $D$, denoted by $d'_{i,j}$, and a set of $\overline{m_{i,j}}$ new vertices in $C$, denoted by $M_{i,j}$. Each vertex in $M_{i,j}$ is denoted by $x_{uv}$ which stands for a non-edge between vertices $u$ and $v$. We add an edge from $u'_{i,j}$ to every vertex in $M_{i,j}$ and every vertex in $C$ that comes after $M_{i,j}$. We also add an edge from $d'_{i,j}$ to every vertex in $M_{i,j}$ and every vertex in $C$ that comes before $M_{i,j}$. 

We now explain how the non-edge selection gadget $G_{i,j}$ is connected to the vertex selection gadgets $G_i$ and $G_j$. Each vertex selection gadget has $k - 1$ subgroups, and each one of those subgroups is responsible for one non-edge. 
Let $D_{i,q} = \{d^{i,q}_1, \ldots, d^{i,q}_{n}\}$ denote the subgroup of $G_i$ associated with the non-edge between $i$ and $j$. Similarly, let $D_{j,p} = \{d^{j,p}_1, \ldots, d^{j,p}_{n}\}$ denote the subgroup of $G_j$ associated with the non-edge between $i$ and $j$. Recall that here $q$ and $p$ are between $1$ and $n$, i.e., we assume that each vertex in $D_{i,q}$ or $D_{j,p}$ is a copy of a vertex from $u \in V_i$ or $v \in V_j$, respectively. For each $x_{uv} \in M_{i,j}$ we add an edge from $x_{uv}$ to every vertex in $D_{i,q}$ and $D_{j,q}$ except the vertices associated with $u$ and $v$, respectively. 

We conclude the construction by adding an edge from $u_{|U|}$ to $c_{|C|}$ and an edge from $d_{|D|}$ to $c_{|C|}$, which constitutes the switch gadget. 
We let $I_s = U$ and $I_t = (U \setminus \{u_{|U|}\}) \cup \{d_{|D|}\}$.

\begin{figure}
    \centering
    \includegraphics[scale=0.8]{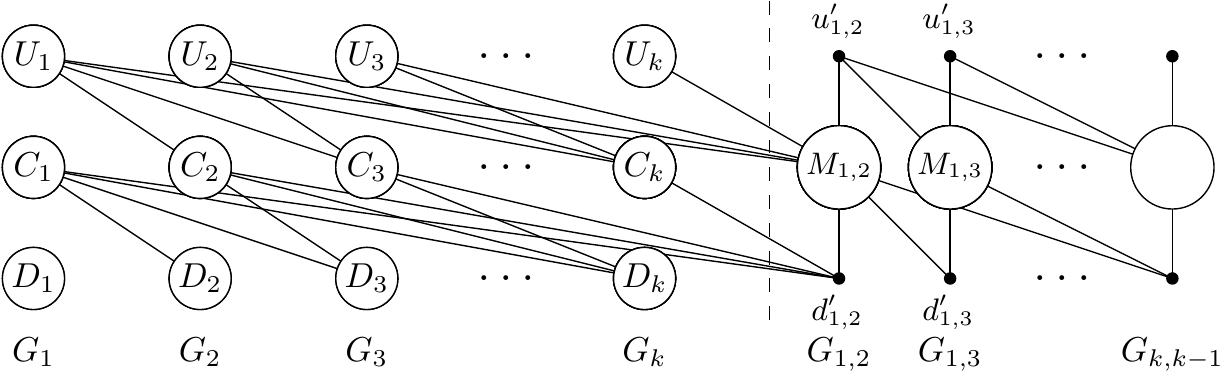}
    \caption{Representation of the vertex selection gadgets and the non-edge selection gadgets. Black line between sets (including singletons) indicates that the sets are complete to each other. Some edges have been omitted for the sake of clarity: only the connections between vertex selection gadgets and the first non-edge selection gadget are represented. Furthermore, the connections between the sets $M$ and $D$ representing the non-edges have been omitted. 
    }
    \label{fig:split_connexion_gadgets}
\end{figure}

\begin{lemma}\label{lem-split-fwd}
If $(G,k)$ is a yes-instance of \textsc{Multicolored Independent Set} then $(G',k',I_s,I_t)$ is a yes-instance of \textsc{Token Sliding}.
\end{lemma}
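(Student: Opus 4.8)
The plan is to convert a multicolored independent set $S=\{w_1,\dots,w_k\}$ of $G$, with $w_i\in V_i$ and $w_i=v^i_{q_i}$, into an explicit reconfiguration sequence from $I_s=U$ to $I_t=(U\setminus\{u_{|U|}\})\cup\{d_{|D|}\}$ in $G'$. Since the only neighbour of $u_{|U|}$, and the only neighbour of $d_{|D|}$, is $c_{|C|}$, any such sequence must at some step perform the two slides $u_{|U|}\to c_{|C|}\to d_{|D|}$; and since $c_{|C|}$ is adjacent to every vertex of $U\setminus\{u_{|U|}\}$ and to every other vertex of the clique $C$, this is legal only from a configuration in which all $k^2+{k\choose 2}$ remaining tokens lie on $D\setminus\{d_{|D|}\}$. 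I would therefore build the sequence in three phases: a \emph{charging} phase that moves every token of $U\setminus\{u_{|U|}\}$ off $U$ and off $C$ and parks it inside the selection gadgets, with final positions dictated by $S$; the two \emph{switch} slides; and a \emph{discharging} phase that is the charging phase run in reverse. The last phase is legal because reconfiguration is symmetric and because during charging no token is ever placed on $c_{|C|}$, so the idle token sitting on $d_{|D|}$ (whose only neighbour is $c_{|C|}$) is never threatened. Hence it suffices to describe and justify the charging phase.

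To charge a vertex-selection gadget $G_i$, I would, for $j=1,\dots,k-1$, slide the token on $u^i_j$ into $C_{i,j}$ landing on $c^{i,j}_{q_i}$ and then down the edge $c^{i,j}_{q_i}d^{i,j}_{q_i}$ onto $d^{i,j}_{q_i}$, and finally route the lock token on $U_{i,L}=u^i_k$ into $C_{i,L}$ onto $c^{i,L}_{q_i}$ and then onto $D_{i,L}$. The point of the lock gadget is exactly that, by the edges $d^{i,j}_q c^{i,L}_{q'}$ with $q\neq q'$, once tokens occupy $d^{i,1}_{q_i},\dots,d^{i,k-1}_{q_i}$ the unique vertex of $C_{i,L}$ still free to receive the lock token is $c^{i,L}_{q_i}$; thus the lock can be crossed precisely when all $k-1$ subgroups of $G_i$ select the \emph{same} vertex $w_i$. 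The within-gadget routing is consistent because $u^i_{j+1}$ is non-adjacent to $C_{i,j}$, and because $d^{i,j}_q$ is non-adjacent to $c^{i,j'}_{q'}$ exactly when $j'>j$ and $q'=q$, so a later subgroup token may be routed through $c^{i,j'}_{q_i}$ without conflicting with the already-parked token on $d^{i,j}_{q_i}$.

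To clear a non-edge gadget $G_{i,j}$ (with $i<j$) I would slide the token on $u'_{i,j}$ onto a vertex $x_{uv}$ of $M_{i,j}$ and then onto $d'_{i,j}$. Once $G_i$ and $G_j$ are charged, a vertex $x_{uv}\in M_{i,j}$ is blocked (adjacent to an occupied vertex) unless $u=w_i$ and $v=w_j$, because $x_{uv}$ is adjacent to every vertex of the $(i,j)$-subgroup of $D_i$ except the copy of $u$, and similarly for $D_j$, and those subgroups currently carry a token exactly on the copies of $w_i$ and $w_j$. So the only admissible choice is $x_{w_iw_j}$, and such a vertex of $M_{i,j}$ exists \emph{if and only if} $\{w_i,w_j\}\notin E(G)$. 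This is the single point where the hypothesis that $S$ is an independent set of $G$ is used: it guarantees that every non-edge gadget can be cleared.

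The hard part will be the global independence check: verifying that along the entire (polynomially long) sequence no slide ever places a token on a vertex adjacent to an occupied one. This is a case analysis over the adjacency rules of the construction, and its real content is choosing a \emph{schedule} of the gadget operations — the order in which the vertex-selection gadgets are charged and later discharged, and where tokens are held in the meantime — that respects all the completeness relations at once: $U_i$ is complete to the clique vertices of all strictly later gadgets (so a token of $G_i$ may be routed through $C$ while later $U$-vertices are still occupied, but not conversely), $D_i$ is complete to $D_{i'}$ for $i'<i$, and the subgroup adjacency pattern inside each gadget must not collide with the routing of the later subgroup tokens. I expect the cleanest way to make these constraints harmless is to interleave the charging and discharging of the vertex-selection gadgets (exploiting that $u^i_j$ also sees the clique vertices of later gadgets, which leaves room to re-park tokens), so that at the switch step exactly the configuration described in the first paragraph is reached. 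Granting such a schedule, the remaining checks are routine, the two switch slides are immediate since only the emptiness of $N(c_{|C|})$ matters, and the discharging phase is valid by symmetry; this yields a reconfiguration sequence from $I_s$ to $I_t$ and proves the lemma.
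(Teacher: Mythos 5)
Your three-phase plan (charge to $D$, perform the two switch slides, discharge in reverse) and your per-gadget routing match the paper's proof exactly: in each $G_i$ slide $u^i_j \to c^{i,j}_{q_i} \to d^{i,j}_{q_i}$ for $j\le k-1$, then route the lock token through $c^{i,L}_{q_i}$ to $D_{i,L}$, then clear each $G_{i,j}$ via $u'_{i,j}\to x_{w_iw_j}\to d'_{i,j}$, using the non-edge hypothesis for the existence of $x_{w_iw_j}$. Where you hedge, the paper simply commits: it charges the vertex-selection gadgets in the fixed order $G_1,\dots,G_k$, then the non-edge gadgets in order, with no interleaving at all, and the validity check is short. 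The key observations that make the naive schedule work are precisely the ones you list but then treat as a potential obstacle: $u^i_j$ is non-adjacent to every clique vertex preceding $C_{i,j}$, so when you route a token of $G_i$ through $C_i$, all still-occupied later $U$-vertices do not see it; and within a gadget the tokens are parked on $d^{i,1}_{q_i},\dots,d^{i,k-1}_{q_i}$, which induce a matching with their clique partners $c^{i,j}_{q_i}$, so nothing collides. Your suggestion to \emph{interleave} charging and discharging is not only unnecessary but would not work: all $k^2+\binom{k}{2}$ tokens must sit on $D$ simultaneously before the switch, so no gadget can be discharged until after the switch. So the approach is essentially the paper's, but you have left the central verification — that the plain sequential schedule already satisfies every adjacency constraint — unproved, while hinting at a modification that would derail the argument.

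Two small side remarks. Your phrase ``$D_i$ is complete to $D_{i'}$'' is an inherited oddity of the construction text (as stated it would make $D$ non-independent); what is actually needed and used is that $D$-vertices see the \emph{clique} vertices of earlier gadgets, so parked tokens never block later routing through $C$. And your symmetry argument for the discharging phase is fine as stated, since $c_{|C|}$ is touched only during the two switch slides, so the token on $u_{|U|}$ (resp.\ $d_{|D|}$) is inert throughout the charging (resp.\ discharging) phase.
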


\begin{proof}
Let $\{v^1_{q_1}, v^2_{q_2}, \ldots, v^k_{q_k}\}$ be a multicolored independent set of $G$, where $v^1_{q_1} \in V_1, \ldots, v^k_{q_k} \in V_k$. We first show how to slide every token in $U$ (except the last) to a vertex in $D$. Following that we can slide the token on $u_{|U|}$ to $d_{|D|}$ and then finally bring back the other tokens back up to $U$ (where we reverse the order in which they were slid from $U$ to $D$).

Recall that a vertex selection gadget $G_i$ for $V_i$ will consist of the $i$th set of $k$ vertices from $U$, the $i$th set of $nk$ vertices from $C$, and the $i$th set of $n(k - 1) + 1$ vertices from $D$. We denote those vertices by $U_i = \{u^i_1, \ldots u^i_k\}$, $C_i = \{c^i_1, \ldots c^i_{nk}\}$, and $D_i = \{d^i_1, \ldots d^i_{n(k - 1) + 1}\}$, respectively. For $1 \leq j \leq k - 1$, we further subdivide the vertices in $\{c^i_1, \ldots c^i_{nk}\}$ and $\{d^i_1, \ldots d^i_{n(k - 1) + 1}\}$ to $k - 1$ subgroups where each subgroup is denoted by $C_{i,j} = \{c^{i,j}_1, \ldots, c^{i,j}_{n}\}$ and $D_{i,j} = \{d^{i,j}_1, \ldots, d^{i,j}_{n}\}$, respectively. Starting from $G_1$ up to $G_k$, we proceed as follows. In $G_i$, we slide from $u^i_1$ to $c^{i,1}_{q_i}$, where $c^{i,1}_{q_i}$ corresponds to vertex $v^i_{q_i}$ of the multicolored independent set. We then slide from $c^{i,1}_{q_i}$ to $d^{i,1}_{q_i}$, such an edge exists by construction. We repeat for each $j$, that is, we then slide from $u^i_2$ to $c^{i,2}_{q_i}$ to $d^{i,2}_{q_i}$, all the way to sliding from $u^i_{k-1}$ to $c^{i,k-1}_{q_i}$ to $d^{i,k-1}_{q_i}$. To see why those slides are valid, note that every time we slide a token  to the clique from $U$ then all previous vertices of $U$ have already slid down to $D$ (vertices that appear earlier in $U$ are already in $D$). Moreover, within each vertex selection gadget, we always slide the $k - 1$ tokens to the same representative vertex of the multicolored independent set vertex, which implies that those vertices of $C$ and $D$ are pairwise non-adjacent. Formally, the graph induced by $c^{i,1}_{q_i}, \ldots, c^{i,k-1}_{q_i}$ and $d^{i,1}_{q_i}, \ldots, d^{i,k-1}_{q_i}$ is a matching between the pairs $c^{i,j}_{q_i}$ and $d^{i,j}_{q_i}$. We complete the transformation of $G_i$ by sliding from $u^i_k$ to 
$c^{i,L}_{q_i}$ and then from $c^{i,L}_{q_i}$ to $D_{i,L}$. Again, by construction, there are no edges between $c^{i,L}_{q_i}$ and any of the vertices $d^{i,1}_{q_i}, \ldots, d^{i,k-1}_{q_i}$. 

We now show how to slide the tokens in $U$ from non-edge selection gadgets down to $D$. As before, we proceed in order (following the order in which the gadgets were created). Recall that a non-edge selection gadget $G_{i,j}$ for $V_i$ and $V_j$ will consist of a vertex of $U$ which we denote by $u'_{i,j}$, a vertex of $D$ which we denote by $d'_{i,j}$, and a group of $\overline{m_{i,j}}$ vertices from $C$ which we denote by $M_{i,j}$. 
Each non-edge selection gadget $G_{i,j}$ interacts (has edges to) one subgroup in $G_i$ and one subgroup in $G_j$. 
Let $D_{i,q} = \{d^{i,q}_1, \ldots, d^{i,q}_{n}\}$ denote the subgroup of $G_i$ associated with the non-edge between $i$ and $j$. Similarly, let $D_{j,p} = \{d^{j,p}_1, \ldots, d^{j,p}_{n}\}$ denote the subgroup of $G_j$ associated with the non-edge between $i$ and $j$. Let $d^{i,q}_{\star}$ denote the vertex of $D_{i,q}$ containing a token and let $d^{j,p}_{\star}$ denote the vertex of $D_{j,p}$ containing a token. Since both these vertices correspond to vertices of the multicolored independent set we know that there is a non-edge between them in the original graph. Hence, by construction, there exists a vertex in $M_{i,j}$ which is not adjacent to neither $d^{i,q}_{\star}$ nor $d^{j,p}_{\star}$. We can therefore slide the token from $u'_{i,j}$ to that vertex in $C$ then down to $d'_{i,j}$. We repeat the same procedure for every non-edge selection gadget. 

Finally, after having moved all tokens in vertex selection gadgets and edge selection gadgets down to $D$, we slide the token on $u_{|U|}$ to $c_{|C|}$ and then down to $d_{|D|}$. After that all other tokens in $D$ can be slid back to $U$ in reverse order, which completes the proof. 
\end{proof}

\begin{lemma}\label{lem-hardness-order}
Let $U = \{u_1, \ldots, u_{|U|}\}$. If $(G',k',I_s,I_t)$ is a yes-instance then, in a reconfiguration sequence from $I_s$ to $I_t$, before the token on $u_i$ can move all tokens on $u_j$, $j < i$, must have moved to $D$. 
\end{lemma}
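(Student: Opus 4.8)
The plan is to isolate the one structural feature of the construction that forces the ordering: the vertices of $U$ see the clique $C$ in nested prefixes. So the first step is to record, directly from the edge list, the following fact. For every $u\in U$ there is a threshold $\tau(u)\in\{1,\dots,|C|\}$ with $N_{G'}(u)\cap C=\{c_\ell:\ell\ge\tau(u)\}$, and, ordering $U=\{u_1,\dots,u_{|U|}\}$ by creation time, $\tau(u_1)<\tau(u_2)<\cdots<\tau(u_{|U|})$. Indeed, for a selection-gadget vertex $u^i_j$ the threshold is the first index of $C_{i,j}$ (or of $C_{i,L}$ for $j=k$), for a non-edge-gadget vertex $u'_{i,j}$ it is the first index of $M_{i,j}$, and for the switch vertex it is $|C|$; strict monotonicity is immediate from the order in which the blocks of $C$ are created. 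Note also that since $I=U\cup D$ is independent, the only way a token can leave a vertex of $U$ is by sliding into $C$, and we will use nothing about the internal structure of $D$ (in particular, the lock gadgets play no role here).

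Fix a reconfiguration sequence $\langle I_0=I_s,\dots,I_L=I_t\rangle$ and regard the tokens as labelled, with $t_j$ the token starting on $u_j$; for each $j$ let $\sigma_j$ be the first step at which $t_j$ leaves $u_j$ (set $\sigma_j=\infty$ if it never does). The core claim is: if $t_i$ leaves $u_i$ at step $\sigma_i=s$, then $I_{s-1}\cap C=\emptyset$ and $u_1,\dots,u_{i-1}$ carry no token in $I_{s-1}$. The step-$s$ move is $I_s=(I_{s-1}\setminus\{u_i\})\cup\{c_\ell\}$ for some $c_\ell\in N(u_i)$; since $C$ is a clique and $c_\ell\in I_s\cap C$, we get $I_s\cap C=\{c_\ell\}$, hence $I_{s-1}\cap C=\emptyset$. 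Moreover $u_ic_\ell\in E(G')$ gives $\ell\ge\tau(u_i)$, so $\ell\ge\tau(u_i)>\tau(u_j)$ for every $j<i$, i.e. $c_\ell\in N(u_j)$; thus no $u_j$ with $j<i$ can be occupied in $I_{s-1}$, as $I_s$ is independent.

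From this I would deduce that the tokens leave $U$ in the prescribed order, $\sigma_1<\sigma_2<\cdots$ (with any infinite values forming a suffix): if $\sigma_{j+1}\le\sigma_j$, then at step $\sigma_{j+1}$ the token $t_j$ has not yet moved and still sits on $u_j$, contradicting the claim applied to $t_{j+1}$; and $\sigma_{j+1}=\sigma_j$ is impossible since a single step moves a single token. Now take any $i$ for which $t_i$ actually moves (otherwise the lemma is vacuous) and set $s=\sigma_i$. At step $s-1$: the claim gives that $u_1,\dots,u_{i-1}$ and $C$ are all empty; the ordering gives that $t_{i+1},\dots,t_{|U|}$ have not yet moved, so $u_{i+1},\dots,u_{|U|}$ are occupied, while $u_i$ still carries $t_i$. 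Hence all $|U|$ tokens are accounted for except $t_1,\dots,t_{i-1}$, which therefore lie in $V(G')\setminus(U\cup C)=D$. That is precisely the assertion of the lemma.

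The only step I expect to require genuine care is the first one: correctly reading off $\tau$ and its strict monotonicity from the long and intertwined list of gadget edges (selection, non-edge, switch). Once that is in place, the remainder is a short independence-and-counting argument. A minor secondary point is the labelled-token bookkeeping — the phrase ``the token on $u_i$'' must be interpreted as the token $t_i$ starting there, considered at the first moment it leaves $u_i$ — but after establishing $\sigma_1<\sigma_2<\cdots$ this is routine.
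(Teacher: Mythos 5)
Your proof is correct and rests on the same key structural observation as the paper's: the neighborhoods $N(u)\cap C$ form a strictly nested chain of suffixes of $C$, so a token on an earlier $u_j$ would be adjacent to wherever $t_i$ lands in the clique. You merely make this explicit by packaging it as a threshold function $\tau$ and adding the $\sigma_1<\sigma_2<\cdots$ ordering plus a token-counting step, which the paper's two-sentence argument leaves implicit when it concludes the earlier tokens ``must have moved to $D$.''
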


\begin{proof}
Consider $u_{|U|}$ whose only neighbor in $C$ is $c_{|C|}$ whose only neighbor in $D$ is $d_{|D|}$. Every other vertex in $U$ is connected to $c_{|C|}$ and we can only have one token in the clique at any point in time. Hence, before $u_{|U|}$ can slide down to $C$ all other tokens must already be in $D$. 

Similarly, consider a vertex $u_i \in U$. If $u_i$ belongs to either a vertex selection gadget or a non-edge selection gadget then every vertex of $U$ appearing before $u_i$ is connected to all neighbors of $u_i$ in $C$. Hence, before $u_i$ can slide down to the clique all other tokens on $u_j$, $j < i$, must have moved to $D$. 
\end{proof}

Given a yes-instance $(G',k',I_s,I_t)$ and a reconfiguration sequence $\mathcal{R} = \langle I_0,I_1,\ldots,I_{\ell-1},I_\ell \rangle$ from $I_s = I_0$ to $I_t = I_\ell$, we say that $\mathcal{R}$ is \emph{non-redundant} whenever $\mathcal{R}$ does not contain two identical sets at different indices, i.e., $I_p = I_q$ and $q \neq p$. The sequence is \emph{redundant} otherwise. Note that given any reconfiguration sequence we can easily transform it into a non-redundant one (by simply truncating appropriately).

\begin{corollary}\label{cor-hardness-canonical}
Let $(G',k',I_s,I_t)$ be a yes-instance. Then in a redundant or non-redundant reconfiguration sequence $\mathcal{R} = \langle I_0,I_1,\ldots,I_{\ell-1},I_\ell \rangle$ from $I_s = I_0$ to $I_t = I_\ell$ there exists a first set $I_\alpha$, $\alpha > 0$, and a first set $I_\beta$, $\beta > \alpha$, such that the following holds:
\begin{itemize}
    \item $\{u_{1}, \ldots, u_{k^2}\} \cap I_\alpha = \emptyset$, $\{u_{k^2 + 1}, \ldots, u_{|U|}\} \subset I_\alpha$, and $I_\alpha \setminus \{u_{k^2 + 1}, \ldots, u_{|U|}\} \subseteq D$;  
    \item $\{u_{1}, \ldots, u_{|U| - 1}\} \cap I_\beta = \emptyset$, $\{u_{|U|}\} \subset I_\beta$, and $I_\beta \setminus u_{|U|} \subseteq D$.
\end{itemize}
\end{corollary}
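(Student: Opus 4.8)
The plan is to prove Corollary~\ref{cor-hardness-canonical} as a direct consequence of Lemma~\ref{lem-hardness-order} together with the structure of the switch gadget and the observation that a reconfiguration sequence can be assumed non-redundant. First I would argue that it suffices to prove the statement for a non-redundant sequence: if $\mathcal{R}$ is redundant, truncate it to a non-redundant sequence $\mathcal{R}'$ from $I_s$ to $I_t$ (as noted just before the corollary); any set occurring in $\mathcal{R}'$ occurs in $\mathcal{R}$, so finding the required $I_\alpha$ and $I_\beta$ in $\mathcal{R}'$ yields sets in $\mathcal{R}$ with the stated properties (taking the \emph{first} occurrence in $\mathcal{R}$). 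So from now on assume $\mathcal{R}$ is non-redundant.

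Next I would locate $I_\beta$. By Lemma~\ref{lem-hardness-order} applied with $i = |U|$, before the token on $u_{|U|}$ can leave $u_{|U|}$, all tokens on $u_j$ with $j < |U|$ must already have moved to $D$. Since $I_t = (U \setminus \{u_{|U|}\}) \cup \{d_{|D|}\}$ does contain $d_{|D|}$ and not $u_{|U|}$, the token on $u_{|U|}$ must move at some point; fix the first step where it leaves $u_{|U|}$. I would argue that just before this step all of $u_1, \ldots, u_{|U|-1}$ carry no token and each such token now sits in $D$: indeed each vertex $u_j$, $j < |U|$, is adjacent to $c_{|C|}$ (the unique $C$-neighbour of $u_{|U|}$), and the token on $u_j$ has, by Lemma~\ref{lem-hardness-order}, already reached $D$, and once in $D$ it cannot be on any $u_\cdot$. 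Moreover no token sits on a $C$-vertex at that moment (otherwise $u_{|U|}$ could not slide into the clique, as $C$ is a clique and only one token can lie in $C$). So the configuration $I$ obtained immediately after the token leaves $u_{|U|}$ — which is either $u_{|U|}$ slid to $c_{|C|}$, or, if we look one step later, $c_{|C|}$ slid to $d_{|D|}$ — has $\{u_1,\ldots,u_{|U|-1}\} \cap I = \emptyset$ and the other $k'-1$ tokens all in $D$, while $u_{|U|}$ (or rather its token) is on $c_{|C|}$. Taking $I_\beta$ to be the first set in $\mathcal{R}$ where $\{u_1,\ldots,u_{|U|-1}\}$ is token-free, $u_{|U|}$ carries the one ``non-$D$'' token, and the rest lie in $D$, gives the second bullet.

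Then I would locate $I_\alpha$ similarly, but one ``phase'' earlier. The vertices $u_{k^2+1}, \ldots, u_{|U|-1}$ are exactly the $U$-vertices of the ${k \choose 2}$ non-edge selection gadgets, and $u_{|U|}$ is the switch vertex. Consider the last $U$-vertex among $u_1, \ldots, u_{k^2}$ to leave $U$; by Lemma~\ref{lem-hardness-order} (and the fact that $I_t$ has all of $U \setminus \{u_{|U|}\}$ occupied, so these tokens must eventually return), the tokens on $u_1, \ldots, u_{k^2}$ leave $U$ in increasing index order, and before $u_i$ leaves all of $u_1, \ldots, u_{i-1}$ are in $D$. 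I would take the first step after $u_{k^2}$ has left $U$: at that moment every token originally on $u_1, \ldots, u_{k^2}$ is in $D$ (none can be in $C$ except possibly the one that just moved, but we can advance past the single clique-intermediate step, using non-redundancy to guarantee we are looking at a genuine later configuration), while $u_{k^2+1}, \ldots, u_{|U|}$ still carry their original tokens (they have not yet moved, again by Lemma~\ref{lem-hardness-order}, since $u_{k^2}$ was the last of the first block to move and the non-edge/switch tokens move only after all of them). Hence the first such set $I_\alpha$ satisfies $\{u_1,\ldots,u_{k^2}\} \cap I_\alpha = \emptyset$, $\{u_{k^2+1},\ldots,u_{|U|}\} \subseteq I_\alpha$, and the remaining $k^2$ tokens lie in $D$, which is the first bullet. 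Finally $\alpha < \beta$ because $I_\alpha$ has $u_{|U|}$ occupied and $I_\beta$ does not, and both are minimal of their kind.

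The main obstacle I expect is the bookkeeping around the single ``in-transit'' token: when $u_{k^2}$ (or $u_{|U|}$) slides, it first goes to a clique vertex before descending to $D$, so the very configuration right after the slide may have a token in $C$ rather than in $D$, and one must argue that the \emph{next} configuration (or a slightly later one) is the desired $I_\alpha$ (resp. $I_\beta$), and that this is still the \emph{first} set of its kind. Handling this cleanly requires invoking non-redundancy to rule out pathological back-and-forth and carefully using the adjacency structure (each $u_j$ with small index is adjacent to all $C$-neighbours of the later $U$-vertices) to pin down exactly when each token can be where. Everything else is a routine unwinding of Lemma~\ref{lem-hardness-order} and the gadget definitions.
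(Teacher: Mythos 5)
Your proposal is correct and uses the same key ingredient as the paper, namely Lemma~\ref{lem-hardness-order}, to pin down when each $U$-token is allowed to move; you also choose $I_\beta$ exactly as the paper does (the configuration immediately before the token on $u_{|U|}$ first slides to $c_{|C|}$). The one avoidable difference is your choice of $I_\alpha$: you anchor it to the step \emph{after} the token from $u_{k^2}$ reaches $D$, which forces you to reason about the in-transit token that is temporarily in $C$ (there can be several clique-intermediate steps, not ``the single'' one, and non-redundancy does not really help here). The paper instead anchors $I_\alpha$ to the step \emph{before} the token on $u_{k^2+1}$ first moves: Lemma~\ref{lem-hardness-order} then immediately guarantees that at that moment all of $u_1,\dots,u_{k^2}$ are token-free with their tokens sitting in $D$, while $u_{k^2+1},\dots,u_{|U|}$ are still in place, so the in-transit bookkeeping (your self-identified main obstacle) disappears. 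Likewise, the preliminary truncation to a non-redundant sequence is unnecessary, since Lemma~\ref{lem-hardness-order} holds for arbitrary sequences; and the remark that ``$\alpha<\beta$ because $u_{|U|}$ is occupied in $I_\alpha$ and not in $I_\beta$'' only shows $\alpha\neq\beta$ --- the correct justification is again Lemma~\ref{lem-hardness-order}: before $u_{|U|}$ can move, the token on $u_{k^2+1}$ must already have reached $D$, so the configuration right before $u_{k^2+1}$'s first move necessarily precedes the configuration right before $u_{|U|}$'s first move.
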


\begin{proof}
Let $I_\alpha$ denote the independent set obtained after all tokens initially in vertex selection gadgets are now in $D$ and before any token from non-edge selection gadgets moves. Note that by Lemma \ref{lem-hardness-order} there indeed exists such a set in the sequence, right before the first move of the token on the first non-edge selection gadget $u_{k^2+1}$.
Let then $I_\beta$ denote the independent set occurring before the token on $u_{|U|}$ slides to $c_{|C|}$ for the first time. 
Both sets satisfy the required properties by Lemma~\ref{lem-hardness-order}.
\end{proof}

Given an instance $(G',k',I_s,I_t)$, we say that a vertex selection gadget $G_i$ is \emph{well-behaved} whenever the existence of a non-redundant reconfiguration sequence from $I_s$ to $I_t$ implies that the slides ordered below are the only slides of the tokens in $G_i$ that occur between $I_s$ and $I_\alpha$ (inclusive) and that these tokens remain in places until $I_\beta$ (inclusive). Let $q$ be some integer between $1$ and $n$.
\begin{itemize}
    \item The token on $u^{i}_1$ slides to some vertex in $C_{i,1}$ (potentially moving around in $C_{i,1}$) then slides to $c^{i,1}_q$ and then finally to $d^{i,1}_q$ (or the token goes directly from $u^{i}_1$ to $c^{i,1}_q$ to $d^{i,1}_q$);
    \item For all $2 \leq j \leq k-1$, the token on $u^{i}_j$ slides to $c^{i,j}_q$ and then to $d^{i,j}_q$;
    \item The token on $u^{i}_{k}$ slides to $c^{i,L}_q$ and then to $D_{i,L}$.
\end{itemize}

Note that the token moving from $u^i_1$ is the only one that can do several moves in $C_i$ as there are initially no token on $D_i$. Once this token moves to a vertex $d^{i,1}_q$ in $D_i$, the other vertices on $u^i_j$ for $j \geq 2$ have no choice but to move on $c^{i,j}_q$ by construction of the vertex-selection gadgets.

\begin{lemma}\label{lem-hardness-well-behaved}
If $(G',k',I_s,I_t)$ is a yes-instance then in a non-redundant reconfiguration sequence from $I_s$ to $I_t$ every vertex selection gadget is well-behaved.
\end{lemma}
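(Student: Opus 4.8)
The plan is to combine the ordering and canonical-form facts already proved (Lemma~\ref{lem-hardness-order} and Corollary~\ref{cor-hardness-canonical}) with an induction over the vertex selection gadgets $G_1,\ldots,G_k$ that tracks not only \emph{which} tokens move but also \emph{where} they land. Fix a non-redundant reconfiguration sequence $\mathcal{R}$ from $I_s$ to $I_t$ and let $I_\alpha,I_\beta$ be as in Corollary~\ref{cor-hardness-canonical}. By Lemma~\ref{lem-hardness-order}, inside $G_i$ the tokens on $u^i_1,\ldots,u^i_k$ leave $U$ one at a time in this order, each reaching $D$ before the next one leaves, and all of them reach $D$ strictly before $I_\alpha$. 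I would prove by induction on $i$ that $G_i$ is well-behaved and, in addition, that at $I_\alpha$ the tokens of $G_i$ occupy exactly $d^{i,1}_{q_i},\ldots,d^{i,k-1}_{q_i}$ together with $D_{i,L}$, for a single column $q_i\in[n]$. The induction hypothesis is used through the fact that for $i'<i$ the lock token of $G_{i'}$ already sits on $D_{i',L}$, which is complete to $C_{i'}$ and therefore dominates it.

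The core of the argument is a \emph{confinement/cascade} step. When the token $t_1$ initially on $u^i_1$ first moves it enters the clique $C$; as long as it stays in $C$ the tokens on $u^i_2,\ldots,u^i_k$ are all still present, $D_i$ is empty, and every $C_{i'}$ with $i'<i$ is dominated by the lock token of $G_{i'}$. Reading off the adjacency rules of the construction, the union of the neighbourhoods of these occupied vertices covers all of $C$ except $C_{i,1}$, so $t_1$ can only wander inside $C_{i,1}$ and can leave $C$ only onto a vertex of $D_{i,1}$ — defining $q_i$ by ``$t_1$ lands on $d^{i,1}_{q_i}$''. Inductively on $j=2,\ldots,k-1$: with the first $j-1$ tokens of $G_i$ on $d^{i,1}_{q_i},\ldots,d^{i,j-1}_{q_i}$, the token leaving $u^i_j$ is confined to $C_{i,j}$ (using $u^i_{j+1},\ldots,u^i_k$, the earlier locks, and $d^{i,1}_{q_i},\ldots,d^{i,j-1}_{q_i}$), and within $C_{i,j}$ it can occupy only $c^{i,j}_{q_i}$, since $d^{i,1}_{q_i}$ is adjacent to every $c^{i,j}_{q'}$ with $q'\neq q_i$; from $c^{i,j}_{q_i}$ it must descend onto $d^{i,j}_{q_i}$, because descending onto $D_{i,L}$ (complete to $C_i$) or onto some $d^{i,j'}_{q'}$ with $q'\neq q_i$ would leave the next token with no admissible vertex in $C_i$, contradicting that $I_\alpha$ is reached. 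Finally the lock token on $u^i_k$ can only enter $C_{i,L}$; a vertex $c^{i,L}_{q^\ast}$ is adjacent to $d^{i,j}_{q_i}$ unless $q^\ast=q_i$, so it must transit through $c^{i,L}_{q_i}$, and from there the unique free $D$-neighbour is $D_{i,L}$. This shows the listed slides are exactly the slides of the $G_i$-tokens up to $I_\alpha$.

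It then remains to show the $G_i$-tokens stay put from $I_\alpha$ through $I_\beta$. Between $I_\alpha$ and $I_\beta$ only tokens of non-edge-selection gadgets and the switch token move. Every $C$-neighbour of $d^{i,1}_{q_i},\ldots,d^{i,k-1}_{q_i},D_{i,L}$ is dominated by some other $G_i$-token (the subgroup tokens dominate $C_{i,1}\cup\ldots\cup C_{i,k-1}$, and together with $D_{i,L}$ dominate all of $C_i$ except $c^{i,L}_{q_i}$), and the only vertices of $M_{\cdot,\cdot}$ adjacent to $d^{i,j}_{q_i}$ are those representing non-edges incompatible with the vertex selected in column $q_i$, so a descending non-edge token is forced onto a vertex non-adjacent to $d^{i,j}_{q_i}$. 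Hence no $G_i$-token has a move that can occur in a non-redundant sequence: the only conceivable excursion (a $G_i$-token onto $c^{i,L}_{q_i}$) blocks the clique $C$ and could be undone only by a configuration-repeating step, which is excluded by non-redundancy. This closes the induction and finishes the proof.

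The step I expect to be the main obstacle is precisely this last one — the cross-gadget bookkeeping that forbids spurious excursions of a $G_i$-token through the clique while non-edge gadgets are processed, and keeping the induction on $i$ airtight (verifying that the earlier locks genuinely dominate their cliques at every moment they are needed). The confinement within a single gadget is essentially a direct case analysis on the construction; it is the interaction between already-settled gadgets, the non-edge gadgets, and the non-redundancy hypothesis that requires care.
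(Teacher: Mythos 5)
Your overall strategy matches the paper's (induction over gadgets, confinement via adjacency rules and settled locks, non-redundancy for the post-$I_\alpha$ part), but the cascade step has a genuine gap, and it is exactly where you flagged the difficulty. The claim that $t_1$ ``can leave $C$ only onto a vertex of $D_{i,1}$'' is false and does not follow from confinement to $C_{i,1}$: by the rule ``$d^{i,j}_q$ is connected to all vertices of $C_{i,j'}$ with $j'\leq j$'', every vertex of $C_{i,1}$ is adjacent to \emph{every} vertex of $D_{i,1}\cup\cdots\cup D_{i,k-1}$, to $D_{i,L}$, and (since $D_{i'}$ is complete to $C_i$ for $i'>i$) to all of $D_{i'}$ with $i'>i$. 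So $t_1$ has many $D$-exits from $C_{i,1}$, and ruling out the unwanted ones needs the same kind of argument you sketch for $t_j$, $j\geq 2$.

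That argument, though, is not quite right either. Descending onto $D_{i,L}$ or onto $d^{i,j'}_{q'}$ with $q'\neq q_i$ does block $t_{j+1}$ momentarily, but it does not prevent $I_\alpha$ from being reached: the misplaced token can retreat into $C$ and try again. What actually forbids the bad descent is non-redundancy. For $j\geq 2$ the only free $C$-vertex $t_j$ can retreat to from a bad $D$-vertex is $c^{i,j}_{q_i}$, and since the other tokens are frozen meanwhile this repeats an earlier configuration. For $t_1$ this retreat-forces-repetition fact is \emph{not} immediate (after $d^{i,j}_q$ it can retreat to a different column of $C_{i,1}$, giving a fresh configuration), which is exactly why the paper does not attempt to trace $t_1$'s slides one by one. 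Instead it anchors on the step $I''$ when the lock token $u^i_k$ first enters $D$: at that moment the requirement that some vertex of $C_{i,L}$ be free pins the other $k-1$ tokens of $G_i$ onto $d^{i,1}_q,\ldots,d^{i,k-1}_q$ for a single column $q$, and non-redundancy then forces the lock onto $D_{i,L}$ and forbids any later excursion. You should either reorganize the cascade around this anchor, as the paper does, or supply an explicit non-redundancy argument at each descent, including the delicate one for $t_1$.
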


\begin{proof}
If $(G',k',I_s,I_t)$ is a yes-instance then we know that we can find a non-redundant sequence from $I_s$ to $I_t$ and, by Corollary~\ref{cor-hardness-canonical}, we know that there exists $I_\alpha$, $\alpha > 0$, and $I_\beta$, $\beta > \alpha$, such that $\{u_{1}, \ldots, u_{k^2}\} \cap I_\alpha = \emptyset$, $\{u_{k^2 + 1}, \ldots, u_{|U|}\} \subset I_\alpha$, $I_\alpha \setminus \{u_{k^2 + 1}, \ldots, u_{|U|}\} \subseteq D$, $\{u_{1}, \ldots, u_{|U| - 1}\} \cap I_\beta = \emptyset$, $\{u_{|U|}\} \subset I_\beta$, and $I_\beta \setminus u_{|U|} \subseteq D$. In other words, $I_\alpha$ is the set where all tokens initially on the vertex-selection gadgets are now in $D$ and where all other tokens are still in there initial position. As for $I_\beta$, it is the set where all tokens have moved to $D$, except for the one initially on the switch gadget which is still on its initial position.

Assume that there exists a vertex selection gadget which is not well-behaved and let $G_i$ be the first such gadget, i.e, all gadgets $G_j$, $j < i$, are well-behaved. 
Consider the independent set $I'$ where the token on $u^i_{k}$ (the token on the lock gadget of $G_i$) slides to some vertex in $C$ for the first time (which exists by Lemma \ref{lem-hardness-order}). Since every gadget before $G_i$ is well-behaved we know that none of the tokens of $G_i$ could have slid out of $G_i$ to an earlier gadget (the lock gadget prevents that with a token on $D_{i,L}$ which is complete to $C_{i'}$ for every $i' < i$). Moreover, since $u^i_{k}$ is sliding into $D$ for the first time, we know that all the later tokens of $U$ are still in place and will prevent tokens of $G_i$ to slide to later gadgets. Putting it all together, we know that in $I'$ all tokens of $G_i$ are still contained in $G_i$. Now consider the independent set $I''$ where the token on $u^i_{k}$ (which is now in $C$) slides to some vertex in $D$ for the first time. We claim that when this slide happens we must have exactly one vertex in each group $D_{i,j} = \{d^{i,j}_1, \ldots, d^{i,j}_{n}\}$, $1 \leq j \leq k - 1$. This follows by construction since having tokens on $d^{i,j}_{q}$ and $d^{i,j}_{q'}$, $q \neq q'$, implies that those tokens are adjacent to all vertices of $C_{i,L} = \{c^{i,L}_1, \ldots, c^{i,L}_{n}\}$ (blocking the token on $u^i_{k}$). Also by construction, we know that the $k - 1$ tokens must be on vertices of the form $d^{i,1}_q$, $\ldots$, $d^{i,k - 1}_q$, for some $q$, as otherwise two of those vertices will be adjacent to all vertices of $C_{i,L}$.
Hence, after $I''$, the only possible slide for the token originally on $u^i_{k}$ (we have a non-redundant sequence) is to slide to $D_{i,L}$. Let $I'''$ be the independent set where the first token of $G_{i+1}$ slides to $C$. We have shown that up until $I'''$ the vertex selection gadget $G_i$ must be well-behaved. Now assume that some token in $G_i$ moves after $I'''$ (and before $I_\beta$). For that to happen, it must be the case that the token on the lock gadget slides back to $C$ (and then to some other vertex). But for this to happen it must be the case that all tokens that slid to some token of $D$ later (after $I'''$) must slide back to $U$ (since every vertex in $D$ is connected to all earlier vertices in $C$). Since furthermore the tokens on $D_{i,j}$ for $j \leq k-1$ and the tokens on $D_i'$ for $i' < i$ cannot move before the token on $D_{i,L}$ moves again to $C_i$. It follows that we obtain the independent set $I'$ again and that the sequence is redundant.
\end{proof}

\begin{lemma}\label{lem-split-back}
If $(G',k',I_s,I_t)$ is a yes-instance of \textsc{Token Sliding} then $(G,k)$ is a yes-instance of \textsc{Multicolored Independent Set}. 
\end{lemma}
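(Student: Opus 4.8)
The plan is to start from a yes-instance, fix a non-redundant reconfiguration sequence $\mathcal{R} = \langle I_0 = I_s, \dots, I_\ell = I_t\rangle$, and read off a candidate solution from where the vertex-selection tokens end up. By Corollary~\ref{cor-hardness-canonical}, $\mathcal{R}$ contains the canonical sets $I_\alpha$ (all vertex-selection tokens already in $D$, and no non-edge-gadget or switch token moved yet) and $I_\beta$ (all tokens but the switch one in $D$). By Lemma~\ref{lem-hardness-well-behaved}, every vertex-selection gadget $G_i$ is well-behaved, so there is an index $q_i \in [n]$ such that from some step before $I_\alpha$ until $I_\beta$ the $k$ tokens of $G_i$ occupy exactly $\{d^{i,1}_{q_i}, \dots, d^{i,k-1}_{q_i}\} \cup D_{i,L}$, and these tokens are frozen throughout the window $[I_\alpha, I_\beta]$. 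Set $S = \{v^1_{q_1}, \dots, v^k_{q_k}\}$; it contains exactly one vertex of each colour class $V_i$, so the whole lemma reduces to showing that $S$ is independent in $G$.

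Suppose for contradiction that $v^i_{q_i} v^j_{q_j} \in E(G)$ for some $i < j$, and track the token $\tau$ starting on $u'_{i,j}$. By Corollary~\ref{cor-hardness-canonical}, $\tau$ lies in $D$ at $I_\beta$; since $U$ and $D$ are both independent, $\tau$ must pass through the clique $C$, and by Lemma~\ref{lem-hardness-order} this can only happen during $[I_\alpha, I_\beta]$ (all earlier $U$-tokens, in particular all vertex-selection tokens, must reach $D$ before $\tau$ can move). Consider the first step at which $\tau$ occupies a clique vertex $c$: the preceding set has $\tau$ on $u'_{i,j}$, so $c \in N_C(u'_{i,j}) = M_{i,j} \cup \{\text{clique vertices after } M_{i,j}\}$. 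At that moment $\tau$ has not yet entered $D$, so by Lemma~\ref{lem-hardness-order} none of the later $U$-tokens has moved: the switch vertex $u_{|U|}$ is occupied, which blocks $c_{|C|}$, and each $u'_{i',j'}$ with $(i',j')$ after $(i,j)$ is occupied and complete to $M_{i',j'}$, which blocks all of $M_{i',j'}$. Hence $c \in M_{i,j}$, say $c = x_{uv}$ coding the non-edge between some $u \in V_i$ and $v \in V_j$.

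Now I would use the frozen vertex-selection tokens. Let $D_{i,q}$ and $D_{j,p}$ be the subgroups of $G_i$, $G_j$ associated with the pair $\{i,j\}$; by construction $x_{uv}$ is adjacent to every vertex of $D_{i,q}$ except $d^{i,q}_u$ and to every vertex of $D_{j,p}$ except $d^{j,p}_v$, and has no other neighbour in $D$. The tokens on these subgroups are frozen at $d^{i,q}_{q_i}$ and $d^{j,p}_{q_j}$; the switch token on $u_{|U|}$ is not adjacent to $x_{uv}$, and (by Lemma~\ref{lem-hardness-order} applied to the earlier non-edge gadgets) every earlier non-edge token already lies in $D$ and, as one checks, is not adjacent to $x_{uv}$ either. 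So for the current set to be independent we must have $u = q_i$ and $v = q_j$, i.e. $x_{q_i q_j} \in M_{i,j}$, which means $\{v^i_{q_i}, v^j_{q_j}\}$ is a non-edge of $G$ — contradicting the assumption. Hence $S$ is a multicoloured independent set of size $k$, which proves the lemma; together with Lemma~\ref{lem-split-fwd} this yields the claimed W[1]-hardness on split graphs.

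The hard part will be the clique-routing argument in the two middle paragraphs: one must verify carefully that, during $[I_\alpha, I_\beta]$ and with at most one token in the clique at a time, every part of $C$ other than $M_{i,j}$ is genuinely blocked (the lock gadgets $D_{i,L}$ being complete to $C_i$, earlier gadgets blocked by their already-placed tokens, later gadgets by their not-yet-moved $U$-tokens, and $c_{|C|}$ by the switch vertex), and that inside $M_{i,j}$ the only vertex $\tau$ can rest on is the one coding the non-edge $\{v^i_{q_i}, v^j_{q_j}\}$. A minor subtlety to spell out is excluding that some earlier non-edge token has temporarily climbed back to its $u'$ vertex at the moment $\tau$ takes its first clique step (which would block all of $M_{i,j}$, since that vertex is complete to $M_{i,j}$); this is ruled out by the maximality in Lemma~\ref{lem-hardness-order} together with the non-redundancy of $\mathcal{R}$.
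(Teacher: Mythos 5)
Your proof is correct and takes essentially the same approach as the paper: read off the candidate set $S$ from the well-behaved vertex-selection gadgets and use the non-edge-selection gadgets to certify pairwise non-adjacency. The paper's proof is in fact terser and does not spell out the clique-routing argument you give (tracking $\tau$'s first clique step and showing it must land in $M_{i,j}$), so your version is a useful amplification rather than a different route.

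Two small points worth fixing. First, the claim that $x_{uv}$ ``has no other neighbour in $D$'' is an overstatement: $x_{uv}$ is also adjacent to $d'_{i,j}$ and to $d'_{i',j'}$ for every later pair $(i',j')$, by the construction of the non-edge gadgets. This is harmless here because those vertices hold no tokens at the moment in question (the corresponding $U$-tokens have not yet moved by Lemma~\ref{lem-hardness-order}), but the statement as written is false. Second, the ``subtlety'' you raise at the end needs no appeal to maximality or to non-redundancy: if some earlier non-edge token were sitting on its $u'_{i',j'}$-vertex at the moment $\tau$ slides into the clique, that vertex is complete to $M_{i,j}$, so the slide would not be valid; the mere fact that the slide occurs already rules this configuration out.
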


\begin{proof}
If $(G',k',I_s,I_t)$ is a yes-instance then we know that we can find a non-redundant reconfiguration sequence $\mathcal{R}$ from $I_s$ to $I_t$ and, by Corollary~\ref{cor-hardness-canonical}, we know that there exists $I_\alpha$, $\alpha > 0$, and $I_\beta$, $\beta > \alpha$, such that $\{u_{1}, \ldots, u_{k^2}\} \cap I_\alpha = \emptyset$, $\{u_{k^2 + 1}, \ldots, u_{|U|}\} \subset I_\alpha$, $I_\alpha \setminus \{u_{k^2 + 1}, \ldots, u_{|U|}\} \subseteq D$, $\{u_{1}, \ldots, u_{|U| - 1}\} \cap I_\beta = \emptyset$, $\{u_{|U|}\} \subset I_\beta$, and $I_\beta \setminus u_{|U|} \subseteq D$. 
By Lemma~\ref{lem-hardness-well-behaved}, we know that every vertex selection gadget is well-behaved in this sequence $\mathcal{R}$. 

To complete the proof, it is enough to observe that in $I_\alpha$ the tokens in each vertex selection gadget are actually placed on the same vertex of $D_{i,j} = \{d^{i,j}_1, \ldots, d^{i,j}_{n}\}$. In other words, there exists $q$ such that each token in $D_{i,j}$ (for each $0 \leq j \leq k - 1$) is on $d^{i,j}_{q}$. To see why this is enough, note that each $d^{i,j}_{q}$ corresponds to the same vertex in $v_q \in V_i$.
So let $v_q \in V_i$ and $v_p \in V_{i'}$ denote two such vertices. Since all tokens on non-edge selection gadgets must slide after vertex selection gadgets, then each such non-edge selection gadgets guarantees that there is no edge in $G$ between $v_q$ and $v_p$, as needed. 
\end{proof}

Combining Lemma~\ref{lem-split-fwd} and~\ref{lem-split-back}, we get the following theorem. 

\begin{theorem}\label{thm-split-hardness}
\textsc{Token Sliding} parameterized by $k$ is W[1]-hard on split graphs. 
\end{theorem}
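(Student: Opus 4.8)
The plan is to prove Theorem~\ref{thm-split-hardness} by the reduction from \textsc{Multicolored Independent Set} that has already been set up in this section: given $(G,k)$ we build $(G',k',I_s,I_t)$ where $G'$ is split and $k' = k^2 + \binom{k}{2} + 1$, and we must show the two directions $(G,k) \text{ yes} \iff (G',k',I_s,I_t) \text{ yes}$. The forward direction is exactly Lemma~\ref{lem-split-fwd}, and the backward direction is exactly Lemma~\ref{lem-split-back}; so the theorem itself is just the conjunction, plus the observation that $k'$ is a function of $k$ only and that $G'$ is constructed in polynomial time. I would first state this combination explicitly, then spend the remaining effort justifying that the two lemmas really do combine cleanly (i.e.\ that $G'$ is genuinely a split graph — the clique $C$ plus the independent set $U \cup D$ — and that no hidden edge between two vertices of $U \cup D$ was introduced, which one checks by inspecting each bullet in the construction).

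For the backward direction, the argument chain I would lay out is: assume $(G',k',I_s,I_t)$ is a yes-instance, so there is a reconfiguration sequence $\mathcal R$ from $I_s = U$ to $I_t = (U \setminus \{u_{|U|}\}) \cup \{d_{|D|}\}$, and by truncation we may take $\mathcal R$ non-redundant. Lemma~\ref{lem-hardness-order} forces the tokens to leave $U$ in the prescribed left-to-right order, which by Corollary~\ref{cor-hardness-canonical} yields the two checkpoint sets $I_\alpha$ (all vertex-selection tokens down in $D$, everything else still up) and $I_\beta$ (everything down except the switch token). Lemma~\ref{lem-hardness-well-behaved} then tells us every vertex-selection gadget is well-behaved in $\mathcal R$, so in $I_\alpha$ the $k-1$ tokens of gadget $G_i$ all sit on copies $d^{i,1}_{q_i}, \dots, d^{i,k-1}_{q_i}$ of a single vertex $v^i_{q_i} \in V_i$. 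Finally, because every non-edge-selection gadget token must slide after $I_\alpha$ (again by the ordering lemma) and each such gadget $G_{i,j}$ can only be traversed if some $x_{uv} \in M_{i,j}$ is nonadjacent to the two currently occupied vertices $d^{i,q}_\star, d^{j,p}_\star$ — which by construction of $M_{i,j}$ exists precisely when $v^i_{q_i} v^j_{q_j} \notin E(G)$ — we conclude $\{v^1_{q_1}, \dots, v^k_{q_k}\}$ is an independent set of $G$ hitting every color class, i.e.\ a multicolored independent set. For the forward direction I would simply cite Lemma~\ref{lem-split-fwd}, whose proof already exhibits the explicit slide sequence.

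The main obstacle is not in Theorem~\ref{thm-split-hardness} itself — once Lemmas~\ref{lem-split-fwd} and~\ref{lem-split-back} are in hand the theorem is a one-line corollary — but rather lies upstream in Lemma~\ref{lem-hardness-well-behaved}, whose proof has to rule out ``cheating'' reconfiguration sequences: sequences in which a token of some gadget $G_i$ wanders into an earlier or later gadget, or in which the lock gadget is engaged and then disengaged in a way that re-creates an earlier configuration. The delicate point there is the interplay between the lock vertex $D_{i,L}$ (complete to all of $C_{i'}$ for $i' < i$, so it freezes earlier gadgets) and the fact that only one token may be in the clique $C$ at a time; I would be careful to present the "redundant sequence'' contradiction (returning to the configuration $I'$) in full, since that is where the argument is genuinely doing work. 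Everything after that — extracting $v^i_{q_i}$ from the well-behaved configuration and reading off the non-edges from the $M_{i,j}$ sets — is bookkeeping that follows directly from how the edges between $M_{i,j}$ and the subgroups $D_{i,q}, D_{j,p}$ were defined.

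\begin{proof}
The reduction constructs $G'$ as a split graph with clique $C$ and independent set $U \cup D$, and $k' = k^2 + \binom{k}{2} + 1$ depends only on $k$; the construction is clearly carried out in polynomial time in $|V(G)| + |E(G)|$. By Lemma~\ref{lem-split-fwd}, if $(G,k)$ is a yes-instance of \textsc{Multicolored Independent Set} then $(G',k',I_s,I_t)$ is a yes-instance of \textsc{Token Sliding}. Conversely, by Lemma~\ref{lem-split-back}, if $(G',k',I_s,I_t)$ is a yes-instance of \textsc{Token Sliding} then $(G,k)$ is a yes-instance of \textsc{Multicolored Independent Set}. Hence $(G,k) \mapsto (G',k',I_s,I_t)$ is a parameterized reduction from \textsc{Multicolored Independent Set} to \textsc{Token Sliding} restricted to split graphs. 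Since \textsc{Multicolored Independent Set} is W[1]-hard parameterized by the solution size~\cite{DBLP:books/sp/CyganFKLMPPS15}, it follows that \textsc{Token Sliding} parameterized by $k$ is W[1]-hard on split graphs.
\end{proof}
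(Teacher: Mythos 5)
Your proposal is correct and matches the paper's own argument exactly: the theorem is obtained by combining Lemma~\ref{lem-split-fwd} and Lemma~\ref{lem-split-back}, together with the (straightforward) observations that $G'$ is a split graph built in polynomial time and that $k' = k^2 + \binom{k}{2} + 1$ depends only on $k$. Your discussion of the backward chain through Lemma~\ref{lem-hardness-order}, Corollary~\ref{cor-hardness-canonical}, and Lemma~\ref{lem-hardness-well-behaved} accurately identifies where the real work lives, but that work is upstream of the theorem and is already handled by those cited results.
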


We note that Theorem~\ref{thm-split-hardness} can be modified to show that \textsc{Token Sliding} parameterized by $k + \ell$ is W[1]-hard on split graphs, where $\ell$ is the length of a reconfiguration sequence. 


\bibliographystyle{plainurl}
\bibliography{biblio}

\end{document}